\newcommand{\todo}[1]{\textbf{TO DO:} #1 \textbf{End of TO DO.}}
\newcommand{\comment}[1]{}
\newcommand{\Formatting}[1]{#1}
\newcommand{\Id}[1]{\mathit{#1}}
\newcommand{\cbr}[1]{\left\{ #1 \right\}}
\newcommand{\abr}[1]{\langle #1 \rangle}
\newcommand{\Set}[1]{\cbr{#1}}
\newcommand{\SetBuild}[2]{\Set{#1 \mid #2}}
\newcommand{\Seq}[1]{\abr{#1}}
\newcommand{\Car}[1]{\left\vert #1 \right\vert}
\newcommand{\Oh}[1]{{O}({#1})}
\theoremstyle{definition} \newtheorem{lemma}{Lemma}
\theoremstyle{definition} 
\theoremstyle{definition} \newtheorem{theorem}{Theorem}
\title{Strategyproof Pareto-Stable Mechanisms for Two-Sided Matching with Indifferences\thanks{Department of Computer Science,
  University of Texas at Austin,
  2317 Speedway, Stop D9500,
  Austin, Texas 78712--1757.
  Email: \{onur,\,geocklam,\,plaxton\}@cs.utexas.edu.
  This research was supported by NSF Grant CCF--1217980.
  An earlier version of this work appears as a UTCS technical report~\cite{domanic+lp:match}.}}
\author{Nevzat Onur Domani\c{c}
  \and Chi-Kit Lam
  \and C. Gregory Plaxton
}
\date{March 2017}
\begin{document}

\begin{titlepage}
\maketitle
\thispagestyle{empty}

%%% Local Variables:
%%% mode: latex
%%% TeX-master: "tr"
%%% End:

\begin{abstract}
  We study variants of the stable marriage and college admissions
  models in which the agents are allowed to express weak preferences
  over the set of agents on the other side of the market and the
  option of remaining unmatched.  For the problems that we address,
  previous authors have presented polynomial-time algorithms for
  computing a ``Pareto-stable'' matching.  In the case of college
  admissions, these algorithms require the preferences of the colleges
  over groups of students to satisfy a technical condition related to
  responsiveness.  We design new polynomial-time Pareto-stable
  algorithms for stable marriage and college admissions that correspond to strategyproof mechanisms.  For stable
  marriage, it is known that no Pareto-stable mechanism is
  strategyproof for all of the agents; our algorithm provides a
  mechanism that is strategyproof for the agents on one side of the
  market.  For college admissions, it is known that no Pareto-stable
  mechanism can be strategyproof for the colleges; our algorithm
  provides a mechanism that is strategyproof for the students.
\end{abstract}

%%% Local Variables:
%%% mode: latex
%%% TeX-master: "tr"
%%% End:

\end{titlepage}

%!TEX root = tr.tex

\section{Introduction}
\label{sec:intro}

Gale and Shapley~\cite{gale+s:match} introduced the stable marriage
model and its generalization to the college admissions model.  Their
work spawned a vast literature on two-sided matching; see
Manlove~\cite{manlove:match} for a recent survey.  The present paper
is primarily concerned with variants of the stable marriage and
college admissions models where the agents have weak preferences,
i.e., where indifferences are allowed.

In the most basic stable marriage model, we are given an equal number
of men and women, where each man (resp., woman) has complete, strict
preferences over the set of women (resp., men); we refer to this model
as SMCS.  For SMCS, an outcome is a matching that pairs up all of the
men and women into disjoint man-woman pairs. A man-woman pair $(p,q)$
is said to form a \emph{blocking pair} for a matching $M$ if $p$
prefers $q$ to his partner in $M$ and $q$ prefers $p$ to her partner
in $M$.  A matching is \emph{stable} if it does not have a blocking
pair.  It is straightforward to prove that any stable matching is also
Pareto-optimal.  Gale and Shapley presented the deferred acceptance
(DA) algorithm for the SMCS problem and proved that the man-proposing
version of the DA algorithm produces the unique man-optimal (and
woman-pessimal) stable matching.  Roth~\cite{roth:match82} showed that
the associated mechanism, which we refer to as the \emph{man-proposing
  DA mechanism}, is \emph{strategyproof} for the men, i.e., it is a
weakly dominant strategy for each man to declare his true preferences.
Unfortunately, the man-proposing DA mechanism is not strategyproof for
the women.  In fact, Roth~\cite{roth:match82} showed that no stable
mechanism for SMCS is strategyproof for all of the agents.

The SMCW model is the generalization of the SMCS model in which each
man (resp., woman) has weak preferences over the set of women (resp.,
men).  When indifferences are allowed, we need to refine our notion of
a blocking pair.  A man-woman pair $(p,q)$ is said to form a
\emph{strongly blocking pair} for a matching $M$ if $p$ prefers $q$ to
his partner in $M$ and $q$ prefers $p$ to her partner in $M$.  A
matching is \emph{weakly stable} if it is individually rational
and does not have a strongly
blocking pair.  Two other natural notions of stability, namely strong
stability and super-stability, have been investigated in the
literature (see Manlove~\cite[Chapter~3]{manlove:match} for a survey
of these results).  We focus on weak stability because every SMCW
instance admits a weakly stable matching (this follows from the existence of
stable matchings for SMCS, coupled with arbitrary tie-breaking), but not every
SMCW instance admits a strongly stable or super-stable matching.  It
is straightforward to exhibit SMCW instances (with as few as two men
and two women) for which some weakly stable matching is not Pareto-optimal.
Sotomayor~\cite{sotomayor:pareto-stable} argues that \emph{Pareto-stability} (i.e., Pareto-optimality plus weak stability) is an appropriate solution concept for SMCW and certain other matching models with weak preferences, and proves that every SMCW instance admits a Pareto-stable matching.

Erdil and Ergin~\cite{erdil+e:match} and Chen and
Ghosh~\cite{chen+g:match} present polynomial-time algorithms for
computing a Pareto-stable matching of a given SMCW instance; in fact,
these algorithms are applicable to certain more general models to be
discussed shortly.  Given the existence of a stable mechanism for SMCS
that is strategyproof for the men (or, symmetrically, for the women),
it is natural to ask whether there is a Pareto-stable mechanism for
SMCW that is strategyproof for the men.  We cannot hope to find a
Pareto-stable mechanism for SMCW that is strategyproof for all agents,
since that would imply a stable mechanism for SMCS that is
strategyproof for all agents.  A similar statement holds for the SMIW
model, the generalization of the SMCW model in which the agents are
allowed to express incomplete preferences.  See Section~\ref{sec:smiw}
for a formal definition of the SMIW model and the associated notions
of weak stability and Pareto-stability.  Throughout the remainder of
the paper, when we say that a mechanism for a stable marriage model is
strategyproof, we mean that it is strategyproof for the agents on one
side of the market; moreover, unless otherwise specified, it is to be
understood that the mechanism is strategyproof for the men.
The Pareto-stable algorithms of Erdil and Ergin, and of Chen and
Ghosh, are based on a two-phase approach where the first phase runs
the Gale-Shapley DA algorithm after breaking all ties arbitrarily.  In
Appendix~\ref{app:phase} we show that this approach does not provide a
strategyproof mechanism.

This paper provides the first Pareto-stable mechanism for SMIW (and
also SMCW) that is shown to be strategyproof.  We present a
nondeterministic algorithm for SMIW that generalizes Gale and
Shapley's DA algorithm as follows: in each iteration, a
nondeterministically chosen unmatched man ``proposes'' simultaneously
to all of the women in his next-highest tier of preference (i.e., the
highest tier to which he has not already proposed); the women respond
to this proposal by solving a certain maximum-weight matching problem
to determine which man becomes unmatched (i.e., the man making the
proposal or one of the tentatively matched men).  Our generalization
of the DA mechanism admits a polynomial-time implementation.

The college admissions model with weak preferences, which we denote
CAW, is a further generalization of the SMIW model.  In the CAW model,
students and colleges are being matched rather than men and women, and
each college has a positive integer capacity representing the number
of students that it can accommodate.  See Section~\ref{sec:caw} for a
formal definition of the CAW model and the associated notions of weak
stability and Pareto-stability.

A key difference between CAW and SMIW is that in addition to
expressing preferences over individual students, the colleges have
preferences over \emph{groups} of students.  This characteristic is
shared by the CAS model, which is the restriction of the CAW model to
strict preferences.  It is known that no stable mechanism for CAS is
strategyproof for the colleges~\cite{roth:match85}; the proof makes
use of the fact that the colleges do not (in general) have unit
demand.  It follows that no Pareto-stable mechanism for CAW is
strategyproof for the colleges.  Throughout the remainder of the
paper, when we say that a mechanism for a college admissions model is
strategyproof, we mean that it is strategyproof for the students.

\comment{
  TODO: CAN WE CITE A RESULT ADDRESSING THE NECESSITY OF RESPONSIVENESS?
}

Gale and Shapley's DA algorithm generalizes easily to the CAS model.
Roth~\cite{roth:match85} has shown that the student-proposing DA
algorithm provides a strategyproof stable mechanism for CAS when the
preferences of the colleges are \emph{responsive}.  When the
colleges have responsive preferences, the student-proposing DA
mechanism is also known to be student-optimal for
CAS~\cite{roth:match85}.

Erdil and Ergin~\cite{erdil+e:match} consider the special case of the
CAW model where the following restrictions hold for all students $x$
and colleges $y$: $x$ is not indifferent between being assigned to $y$
and being left unassigned; $y$ is not indifferent between having one
of its slots assigned to $x$ and having that slot left unfilled.  We
remark that this special case of CAW corresponds to the HRT problem
discussed in Manlove~\cite[Chapter~3]{manlove:match}.\footnote{In the
  model of Erdil and Ergin, which is stated using worker-firm
  terminology rather than student-college terminology, a ``no
  indifference to unemployment/vacancy'' assumption makes the
  aforementioned restrictions explicit.  In the HRT model of Manlove,
  which is stated using resident-hospital terminology rather than
  student-college terminology, it is assumed that a set of acceptable
  resident-hospital pairs is given, and that each agent specifies weak
  preferences over the set of agents with whom they form an acceptable
  pair.  We consider the approach of Erdil-Ergin --- where the
  starting point is the preferences of the individual agents, and the
  ``acceptability'' of a given pair of agents may be deduced from
  those preferences --- to be more natural, but the resulting models
  are equivalent.}  For this special case, Erdil and Ergin present a
polynomial-time algorithm for computing a Pareto-stable matching when
the preferences of the colleges satisfy a technical restriction
related to responsiveness.  We consider the same class of preferences,
which we refer to as \emph{minimally responsive}; see
Section~\ref{sec:caw} for a formal definition.  The algorithm of Erdil
and Ergin does not provide a strategyproof mechanism.  Chen and
Ghosh~\cite{chen+g:match} build on the results of Erdil and Ergin by
considering the many-to-many generalization of HRT in which the agents
on both sides of the market have capacities (and the agent preferences are minimally responsive).  For this generalization, Chen
and Ghosh provide a \emph{strongly} polynomial-time algorithm.  No
strategyproof mechanism (even for the agents on one side of the
market) is possible in the many-to-many setting, since it is a
generalization of CAS.
% Moreover, if the
% algorithm of Chen and Ghosh is specialized to the case where the
% agents on one side of the market have unit capacity, it does not
% provide a strategyproof mechanism.
We provide the first Pareto-stable mechanism for CAW that is shown to
be strategyproof.  As in the work of Erdil-Ergin and Chen-Ghosh, we
assume that the preferences of the colleges are minimally
responsive.  We can also handle the class of college preferences
``induced by additive utility'' that is defined in
Section~\ref{sec:caw-further-disc}.

In the many-to-many matching setting addressed by Chen and
Ghosh~\cite{chen+g:match}, a pair of agents (on opposite sides of the
market) can be matched with arbitrary multiplicity, as long as the
capacity constraints are respected.  Chen~\cite{chen:match12} presents
a polynomial-time algorithm for the variation of many-to-many matching
in which a pair of agents can only be matched with multiplicity one.
Kamiyama~\cite{kamiyama:match14} addresses the same problem using a
different algorithmic approach.  (The algorithms of Chen and Kamiyama
are strongly polynomial, since we can assume without loss of
generality that the capacity of any agent is at most the number of
agents on the other side of the market.)  Since this variation of the
many-to-many setting also generalizes CAS, it does not admit a
strategyproof mechanism, even for the agents on one side of the
market.

Erdil and Ergin~\cite{erdil+e:tie,erdil+e:match}
and Kesten~\cite{Kes10} consider a second
natural solution concept in addition to Pareto-stability.  In the
context of SMIW (or its special case SMCW), this second solution
concept seeks a weakly stable matching $M$ that is ``man optimal'' in
the following sense: for all weakly stable matchings $M'$, either all
of the men are indifferent between $M$ and $M'$, or at least one man
prefers $M$ to $M'$.  Erdil and Ergin~\cite{erdil+e:match} present a
polynomial-time algorithm to compute such a man optimal weakly stable
matching for SMIW; in fact, their algorithm is presented for the
generalization of SMIW to CAW.  Erdil and Ergin~\cite{erdil+e:tie}
and Kesten~\cite{Kes10}
prove that no strategyproof man optimal weakly stable mechanism exists
for SMCW.
% for the special case of SMCW in which the men have strict
% preferences.  Using a similar argument, we prove in
% Appendix~\ref{app:???} that no strategyproof man optimal weakly
% stable mechanism exists for the special case of SMCW in which the
% women have strict preferences.
Prior to our work, it was unclear whether such an impossibility result
might hold for strategyproof Pareto-stable mechanisms for SMCW (or its
generalizations to SMIW and CAW).

The assignment game of Shapley and Shubik~\cite{shapley+s:assign} can
be viewed as an auction with multiple distinct items where each bidder
is seeking to acquire at most one item.  This class of
\emph{unit-demand auctions} has been heavily studied in the literature
(see, e.g., Roth and Sotomayor~\cite[Chapter~8]{roth+s:match}).  In
Section~\ref{sec:uap}, we define the notion of a ``unit-demand auction
with priorities'' (UAP) and establish a number of useful properties of
UAPs; these are straightforward generalizations of corresponding
properties of unit-demand auctions.  Section~\ref{sec:iuap} builds on
the UAP notion to define the notion of an ``iterated UAP'' (IUAP), and
establishes a number of important properties of IUAPs; these results
are nontrivial to prove and provide the technical foundation for our
main results.  Section~\ref{sec:smiw} presents our first main result,
a polynomial-time algorithm for SMIW that provides a strategyproof
Pareto-stable mechanism.  Section~\ref{sec:caw} presents our second
main result, a polynomial-time algorithm for CAW that provides a
strategyproof Pareto-stable mechanism assuming that the preferences of
the colleges are minimally responsive.

\section{Unit-Demand Auctions with Priorities}
\label{sec:uap}

In this section, we formally define the notion of a unit-demand auction with priorities (UAP).
In Section~\ref{sec:uap-matroid}, we describe an associated matroid for a given UAP and we use this matroid to define the notion of a ``greedy MWM''.
In Section~\ref{sec:uap-extend}, we establish a result related to extending a given UAP by introducing additional bidders.
In Section~\ref{sec:inc-hungarian-step}, we discuss how to efficiently compute a greedy MWM in a UAP.
In Section~\ref{sec:uap-threshold}, we introduce a key definition that is helpful for establishing our strategyproofness results.
We start with some useful definitions.

% Items
\newcommand{\Item}{v}
% Shortcuts for items
% with primes
\newcommand{\Itemp}{\Item'}
% with double primes
\newcommand{\Itempp}{\Item''}
% with subscripts
\newcommand{\ItemSub}[1]{\Item_{#1}}

\newcommand{\Reals}{\mathbb{R}}

% Item sets
\newcommand{\Items}{V}
% Shortcuts for item sets
% with primes
\newcommand{\Itemsp}{\Items'}

% Unit-demand bid
\newcommand{\UBid}{\beta}

% Offer (real number) in a unit-demand bid
\newcommand{\Offer}{x}
% Shortcuts
% with primes
\newcommand{\Offerp}{\Offer'}
% with subscripts
\newcommand{\OfferSub}[1]{\Offer_{#1}}
% optimal
\newcommand{\OfferOpt}{\Opt{\Offer}}

A \emph{(unit-demand) bid $\UBid$ for a set of items $\Items$} is a subset of $\Items \times \Reals$ such that no two pairs in $\UBid$ share the same first component.
(So $\UBid$ may be viewed as a partial function from $\Items$ to $\Reals$.)

% For symbols denoting (somewhat) optimal instances
\newcommand{\Opt}[1]{{#1}^*}

% Bidders
\newcommand{\Bidder}{u}
% Shortcuts for bidders
% with primes
\newcommand{\Bidderp}{\Bidder'}
% with double primes
\newcommand{\Bidderpp}{\Bidder''}
% with subscripts
\newcommand{\BidderSub}[1]{\Bidder_{#1}}
% optimal
\newcommand{\BidderOpt}{\Opt{\Bidder}}

% Components of bidders:
\newcommand{\BidId}{\alpha} % Id
\newcommand{\Pri}{z} % Priority
% Shortcuts
% with primes
\newcommand{\Prip}{\Pri'}
% with double primes
\newcommand{\Pripp}{\Pri''}
% with subscripts
\newcommand{\BidIdSub}[1]{\BidId_{#1}}
\newcommand{\PriSub}[1]{\Pri_{#1}}
% optimal
\newcommand{\PriOpt}{\Opt{\Pri}}

% Functions for components of bidders
\newcommand{\BId}[1]{\Id{id}(#1)}
\newcommand{\BUBid}[1]{\Id{bid}(#1)}
\newcommand{\BPriId}{\Id{priority}}
\newcommand{\BPri}[1]{\BPriId(#1)}

% Items that a bidder bids for
\newcommand{\BItemsId}{\Id{items}}
\newcommand{\BItems}[1]{\BItemsId(#1)}

A \emph{bidder $\Bidder$ for a set of items $\Items$} is a triple $(\BidId, \UBid, \Pri)$ where $\BidId$ is an integer ID, $\UBid$ is a bid for $\Items$, and $\Pri$ is a real priority.
For any bidder $\Bidder = (\BidId, \UBid, \Pri)$, we define $\BId{\Bidder}$ as $\BidId$, $\BUBid{\Bidder}$ as $\UBid$, $\BPri{\Bidder}$ as $\Pri$, and $\BItems{\Bidder}$ as the union, over all $(\Item, \Offer)$ in $\UBid$, of $\Set{\Item}$.

% Bidder sets
\newcommand{\Bidders}{U}
% Shortcuts for bidder sets
% with primes
\newcommand{\Biddersp}{\Bidders'}
% with double primes
\newcommand{\Bidderspp}{\Bidders''}
% with subscripts
\newcommand{\BiddersSub}[1]{\Bidders_{#1}}

% UAPs
\newcommand{\Auc}{A}
% Shortcuts for UAPs
% with primes
\newcommand{\Aucp}{\Auc'}
% with double primes
\newcommand{\Aucpp}{\Auc''}
% with subscripts
\newcommand{\AucSub}[1]{\Auc_{#1}}

A \emph{unit-demand auction with priorities (UAP)} is a pair $\Auc = (\Bidders, \Items)$ satisfying the following conditions:
$\Items$ is a set of items;
$\Bidders$ is a set of bidders for $\Items$;
each bidder in $\Bidders$ has a distinct ID\@.

\subsection{An Associated Matroid}
\label{sec:uap-matroid}

% Edges and edge sets
\newcommand{\Edge}{e}
\newcommand{\Edges}{E}

% Weight of an edge, edge set, bidder item pair, matching, and a path
\newcommand{\WEdge}[1]{\Id{w}(#1)}
\newcommand{\WEdges}[1]{\Id{w}(#1)}
\newcommand{\WPair}[2]{\Id{w}(#1, #2)}
\newcommand{\WMatch}[1]{\WEdges{#1}}
\newcommand{\WPath}[1]{\Id{w}(#1)}

% Weight of an MWM of an UAP
\newcommand{\WAuc}[1]{\Id{w}(#1)}

A UAP $\Auc = (\Bidders, \Items)$ may be viewed as an edge-weighted bipartite graph, where the set of edges incident on bidder $\Bidder$ correspond to $\BUBid{\Bidder}$:
for each pair $(\Item, \Offer)$ in $\BUBid{\Bidder}$, there is an edge $(\Bidder, \Item)$ of weight $\Offer$.
We refer to a matching (resp., maximum-weight matching (MWM), maximum-cardinality MWM (MCMWM)) in the associated edge-weighted bipartite graph as a matching (resp., MWM, MCMWM) of $\Auc$.
For any edge $\Edge = (\Bidder, \Item)$ in a given UAP, the associated weight is denoted $\WEdge{\Edge}$ or $\WPair{\Bidder}{\Item}$.
For any set of edges $\Edges$, we define $\WEdges{\Edges}$ as $\sum_{\Edge \in \Edges} \WEdge{\Edge}$.
For any UAP $\Auc$, we let $\WAuc{\Auc}$ denote the weight of an MWM of $\Auc$.

% Matchings
\newcommand{\Match}{M}
% Shortcuts for matchings
% with primes
\newcommand{\Matchp}{\Match'}
% with double primes
\newcommand{\Matchpp}{\Match''}
% with triple primes
\newcommand{\Matchppp}{\Match'''}
% with subscripts
\newcommand{\MatchSub}[1]{\Match_{#1}}
% with subscripts and primes
\newcommand{\MatchSubp}[1]{\Match_{#1}'}
% optimal
\newcommand{\MatchOpt}{\Opt{\Match}}

% Matching Sets
\newcommand{\Matchings}{\mathcal{\Match}}
\newcommand{\Matchingsp}{\Matchings'}
\newcommand{\Matchingspp}{\Matchings''}
\newcommand{\Matchingsppp}{\Matchings'''}

\newcommand{\Indeps}{\mathcal{I}}

\begin{lemma}
  \label{lem:matroid}
  Let $\Auc = (\Bidders, \Items)$ be a UAP, and let $\Indeps$ denote the set of all subsets $\Biddersp$ of $\Bidders$ such that there exists an MWM of $\Auc$ that matches every bidder in $\Biddersp$.
  Then $(\Bidders, \Indeps)$ is a matroid.
\end{lemma}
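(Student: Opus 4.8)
The plan is to verify the three matroid axioms for $(\Bidders, \Indeps)$. The first two are immediate: $\emptyset \in \Indeps$ because any MWM of $\Auc$ vacuously matches every bidder in $\emptyset$, and $\Indeps$ is closed under subsets because an MWM that matches every bidder in $\Biddersp$ also matches every bidder in any $\Bidderspp \subseteq \Biddersp$. All the work is in the exchange axiom: given $\Biddersp, \Bidderspp \in \Indeps$ with $\Car{\Biddersp} < \Car{\Bidderspp}$, I must produce a bidder $\Bidder \in \Bidderspp \setminus \Biddersp$ with $\Biddersp \cup \Set{\Bidder} \in \Indeps$.

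Fix an MWM $\Match$ of $\Auc$ matching every bidder in $\Biddersp$ and an MWM $\Matchp$ matching every bidder in $\Bidderspp$, and consider the symmetric difference $\Match \triangle \Matchp$, whose connected components are alternating paths and cycles. The structural fact I would establish first is that for each component $C$ we have $\WEdges{C \cap \Match} = \WEdges{C \cap \Matchp}$: flipping $\Match$ along $C$ yields a matching of weight $\WEdges{\Match} - \WEdges{C \cap \Match} + \WEdges{C \cap \Matchp}$, which is at most $\WAuc{\Auc} = \WEdges{\Match}$ by maximality, and the reverse inequality follows from the symmetric argument applied to $\Matchp$. Hence $\Match \triangle C$ is again an MWM of $\Auc$ for every component $C$, and flipping along $C$ changes the set of matched bidders only at the bidder-endpoints of $C$: such a bidder becomes unmatched if its incident $C$-edge lies in $\Match$, and becomes matched if its incident $C$-edge lies in $\Matchp$.

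With this in hand the exchange axiom reduces to a short case analysis. If some $\Bidder \in \Bidderspp \setminus \Biddersp$ is already matched by $\Match$, then $\Match$ itself witnesses $\Biddersp \cup \Set{\Bidder} \in \Indeps$. Otherwise every bidder in $\Bidderspp \setminus \Biddersp$ is matched by $\Matchp$ but not $\Match$, hence is a degree-one vertex of $\Match \triangle \Matchp$, i.e.\ an endpoint of some path component whose incident edge lies in $\Matchp$. Fix such a $\Bidder$ with component $C$. If $C$ has no bidder-endpoint whose incident $C$-edge lies in $\Match$, then $\Match \triangle C$ matches $\Biddersp \cup \Set{\Bidder}$ and we are done; the only remaining possibility is that $C$ is an even-length path whose other bidder-endpoint $\Bidderp$ has its incident $C$-edge in $\Match$, in which case $\Match \triangle C$ matches $(\Biddersp \setminus \Set{\Bidderp}) \cup \Set{\Bidder}$, which again suffices unless $\Bidderp \in \Biddersp$. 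So if no valid exchange exists, each $\Bidder \in \Bidderspp \setminus \Biddersp$ is paired with a bidder $\Bidderp$ that is matched by $\Match$ but not $\Matchp$ — hence $\Bidderp \in \Biddersp \setminus \Bidderspp$ — and distinct $\Bidder$'s lie in distinct components and so yield distinct $\Bidderp$'s; this injection from $\Bidderspp \setminus \Biddersp$ into $\Biddersp \setminus \Bidderspp$ contradicts $\Car{\Biddersp} < \Car{\Bidderspp}$. The main obstacle is getting the component case analysis exactly right — in particular confirming that a bidder matched by exactly one of $\Match, \Matchp$ must be a path-endpoint of $\Match \triangle \Matchp$, and that the other bidder-endpoint of an even-length path component is matched by $\Match$ but not $\Matchp$ — after which the counting step is routine.
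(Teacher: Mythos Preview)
Your proof is correct and follows essentially the same approach as the paper's: both arguments take MWMs witnessing the two independent sets, pass to the symmetric difference, use the equal-weight-on-each-component observation to guarantee that flipping along any alternating path yields another MWM, and then use a counting/injection argument on path endpoints to locate a suitable exchange element. The only cosmetic difference is that you establish the per-component weight equality once at the outset and phrase the endgame as an explicit injection from $\Bidderspp\setminus\Biddersp$ into $\Biddersp\setminus\Bidderspp$, whereas the paper singles out one path whose far endpoint lies outside $\Bidderspp\setminus\Biddersp$ and verifies the weight equality for that path alone.
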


% Vertex
\newcommand{\Vertex}{y}

% Paths, cycles, path sets, etc
\newcommand{\Path}{P}
\newcommand{\PathOther}{Q}
\newcommand{\Cycle}{C}
\newcommand{\PathSet}{\mathcal{\Path}}
\newcommand{\PathCycleColl}{\mathcal{S}}

\begin{proof}
  The only nontrivial property to show is the exchange property.
  Assume that $\BiddersSub{1}$ and $\BiddersSub{2}$ belong to $\Indeps$ and that $\Car{\BiddersSub{1}} > \Car{\BiddersSub{2}}$.
  Let $\MatchSub{1}$ be an MWM of $\Auc$ that matches every bidder in $\BiddersSub{1}$, and let $\MatchSub{2}$ be an MWM of $\Auc$ that matches every bidder in $\BiddersSub{2}$.
  If $\MatchSub{2}$ matches some bidder $\Bidder$ in $\BiddersSub{1} \setminus \BiddersSub{2}$, then $\BiddersSub{2} + \Bidder$ belongs to $\Indeps$, as required.
  Thus, in what follows, we assume that $\MatchSub{2}$ does not match any bidder in $\BiddersSub{1} \setminus \BiddersSub{2}$.
  The symmetric difference of $\MatchSub{1}$ and $\MatchSub{2}$, denoted $\MatchSub{1} \oplus \MatchSub{2}$, corresponds to a collection of vertex-disjoint paths and cycles.
  Since $\MatchSub{2}$ does not match any bidder in $\BiddersSub{1} \setminus \BiddersSub{2}$, we deduce that each bidder in $\BiddersSub{1} \setminus \BiddersSub{2}$ is an endpoint of one of the paths in this collection.
  Since $\Car{\BiddersSub{1}} > \Car{\BiddersSub{2}}$, $\Car{\BiddersSub{1} \setminus \BiddersSub{2}} = \Car{\BiddersSub{1}} - \Car{\BiddersSub{1} \cap \BiddersSub{2}}$, and $\Car{\BiddersSub{2} \setminus \BiddersSub{1}} = \Car{\BiddersSub{2}} - \Car{\BiddersSub{1} \cap \BiddersSub{2}}$, we have $\Car{\BiddersSub{1} \setminus \BiddersSub{2}} > \Car{\BiddersSub{2} \setminus \BiddersSub{1}}$.
  It follows that there is at least one path in this collection, call it $\Path$, such that one endpoint of $\Path$ is a bidder $\Bidder$ in $\BiddersSub{1}\setminus \BiddersSub{2}$ and the other endpoint of $\Path$ is a vertex $\Vertex$ that does not belong to $\BiddersSub{2} \setminus \BiddersSub{1}$.
  Moreover, $\Vertex$ does not belong to $\BiddersSub{1}$:
  if the length of $\Path$ is odd, then $\Vertex$ is an item and hence does not belong to $\BiddersSub{1}$;
  if the length of $\Path$ is even, then $\Vertex$ is not matched in $\MatchSub{1}$ and hence does not belong to $\BiddersSub{1}$.
  Since $\Vertex$ does not belong to $\BiddersSub{2} \setminus \BiddersSub{1}$ and does not belong to $\BiddersSub{1}$, we conclude that $\Vertex$ does not belong to $\BiddersSub{2}$.
  The edges of $\Path$ alternate between $\MatchSub{1}$ and $\MatchSub{2}$.
  Let $X_1$ denote the edges of $\Path$ that belong to $\MatchSub{1}$, and let $X_2$ denote the edges of $\Path$ that belong to $\MatchSub{2}$.
  Since $\MatchSub{1}$ is an MWM of $\Auc$ and $\MatchSubp{1} = \MatchSub{1} \oplus \Path = (\MatchSub{1} \cup X_2) \setminus X_1$ is a matching of $\Auc$, we deduce that $\WEdges{X_1} \geq \WEdges{X_2}$.
  Since $\MatchSub{2}$ is an MWM of $\Auc$ and $\MatchSubp{2} = \MatchSub{2} \oplus \Path = (\MatchSub{2} \cup X_1) \setminus X_2$ is a matching of $\Auc$, we deduce that $\WEdges{X_2} \geq \WEdges{X_1}$.
  Hence $\WEdges{X_1} = \WEdges{X_2}$ and $\MatchSubp{1}$ and $\MatchSubp{2}$ are MWMs of $\Auc$.
  The MWM $\MatchSubp{2}$ matches all of the vertices on $\Path$ except for $\Vertex$.
  Since $\Vertex$ does not belong to $\BiddersSub{2}$, we conclude that $\MatchSubp{2}$ matches all of the vertices in $\BiddersSub{2} + \Bidder$, and so the exchange property holds.
\end{proof}

\newcommand{\Matroid}[1]{\Id{matroid(#1)}}

For any UAP $\Auc$, we define $\Matroid{\Auc}$ as the matroid of Lemma~\ref{lem:matroid}.

For any UAP $\Auc = (\Bidders, \Items)$ and any independent set $\Biddersp$ of $\Matroid{\Auc}$, we define the \emph{priority of $\Biddersp$} as the sum, over all bidders $\Bidder$ in $\Biddersp$, of $\BPri{\Bidder}$.
For any UAP $\Auc$, the matroid greedy algorithm can be used to compute a maximum-priority maximal independent set of $\Matroid{\Auc}$.

\newcommand{\MatchedBidders}[1]{\Id{matched}(#1)}

\newcommand{\Unique}[1]{\Id{unique}(#1)}

For any matching $\Match$ of a UAP $\Auc = (\Bidders, \Items)$, we define $\MatchedBidders{\Match}$ as the set of all bidders in $\Bidders$ that are matched in $\Match$.
We say that an MWM $\Match$ of a UAP $\Auc$ is \emph{greedy} if $\MatchedBidders{\Match}$ is a maximum-priority maximal independent set of $\Matroid{\Auc}$.
For any UAP $\Auc$, we define the predicate $\Unique{\Auc}$ to hold if $\MatchedBidders{\Match} = \MatchedBidders{\Matchp}$ for all greedy MWMs $\Match$ and $\Matchp$ of $\Auc$.

\comment{
  Onur: We now use $\MatchedBidders{\Match}$ in the definition below.
}

\newcommand{\MPriority}[1]{\Id{priority}(#1)}

For any matching $\Match$ of a UAP, we define the \emph{priority of $\Match$}, denoted $\MPriority{\Match}$, as the sum, over all bidders $\Bidder$ in $\MatchedBidders{\Match}$, of $\BPri{\Bidder}$.
Thus an MWM is greedy if and only if it is a maximum-priority MCMWM\@.

\comment{
  Is it better to have: For each greedy MWM M and M', greedy(M, u) = greedy(M', u) ...
}

\begin{lemma}
  \label{lem:uap-distribution}
  All greedy MWMs of a given UAP have the same distribution of priorities.
\end{lemma}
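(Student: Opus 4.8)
The plan is to reduce the statement to a standard fact about weighted matroids and then to prove that fact by a fundamental-circuit exchange argument. Read the \emph{distribution of priorities} of a matching $\Match$ as the multiset $\SetBuild{\BPri{\Bidder}}{\Bidder \in \MatchedBidders{\Match}}$; equivalently, for each real $\Pri$ it records $\Car{\SetBuild{\Bidder \in \MatchedBidders{\Match}}{\BPri{\Bidder} = \Pri}}$. If $\Match$ is a greedy MWM of a UAP $\Auc = (\Bidders, \Items)$, then by definition $\MatchedBidders{\Match}$ is a maximum-priority maximal independent set of $\Matroid{\Auc}$, and all maximal independent sets of a matroid have the same cardinality; thus the lemma is precisely the assertion that, in a matroid whose ground-set elements carry real weights, all maximum-weight bases have the same multiset of weights. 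For each real $\Pri$, let $\Bidders_{\geq \Pri} := \SetBuild{\Bidder \in \Bidders}{\BPri{\Bidder} \geq \Pri}$ and let $r(\Pri)$ denote the rank of $\Bidders_{\geq \Pri}$ in $\Matroid{\Auc}$. The key step is to show that $\Car{\MatchedBidders{\Match} \cap \Bidders_{\geq \Pri}} = r(\Pri)$ for every greedy MWM $\Match$ of $\Auc$ and every real $\Pri$. This suffices: the map $\Pri \mapsto \Car{\MatchedBidders{\Match} \cap \Bidders_{\geq \Pri}}$ is a nonincreasing step function whose jump at any value $\Pri_0$ has height equal to the number of bidders in $\MatchedBidders{\Match}$ of priority exactly $\Pri_0$, so this map determines the distribution of priorities of $\Match$, and by the key step it coincides with $\Pri \mapsto r(\Pri)$, which does not depend on $\Match$.

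To prove the key step, the inequality $\Car{\MatchedBidders{\Match} \cap \Bidders_{\geq \Pri}} \leq r(\Pri)$ is immediate, since $\MatchedBidders{\Match} \cap \Bidders_{\geq \Pri}$ is an independent subset of $\Bidders_{\geq \Pri}$ (greediness plays no role here). For the reverse inequality, suppose for contradiction that $\Car{I} < r(\Pri)$, where $I := \MatchedBidders{\Match} \cap \Bidders_{\geq \Pri}$. Applying the exchange property of $\Matroid{\Auc}$ to $I$ and a maximal independent subset of $\Bidders_{\geq \Pri}$, we obtain a bidder $\BidderOpt \in \Bidders_{\geq \Pri} \setminus I$ with $I + \BidderOpt$ independent; note $\BidderOpt \notin \MatchedBidders{\Match}$, for otherwise $\BidderOpt$ would lie in $I$. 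Because $\MatchedBidders{\Match}$ is a maximal independent set, $\MatchedBidders{\Match} + \BidderOpt$ contains a unique circuit $\Cycle$ (a standard matroid fact), and $\Cycle$ satisfies $\BidderOpt \in \Cycle$ and $\Cycle - \BidderOpt \subseteq \MatchedBidders{\Match}$, while $\MatchedBidders{\Match} + \BidderOpt - \Bidderp$ is again a maximal independent set for every $\Bidderp \in \Cycle - \BidderOpt$. If every $\Bidderp \in \Cycle - \BidderOpt$ had $\BPri{\Bidderp} \geq \Pri$, then $\Cycle - \BidderOpt \subseteq I$ and hence $\Cycle \subseteq I + \BidderOpt$, contradicting the independence of $I + \BidderOpt$; therefore some $\Bidderp \in \Cycle - \BidderOpt$ satisfies $\BPri{\Bidderp} < \Pri \leq \BPri{\BidderOpt}$. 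For this choice, $\MatchedBidders{\Match} + \BidderOpt - \Bidderp$ is a maximal independent set of strictly larger priority than $\MatchedBidders{\Match}$ (the priority increases by $\BPri{\BidderOpt} - \BPri{\Bidderp} > 0$), contradicting that $\MatchedBidders{\Match}$ is a maximum-priority maximal independent set. This establishes the key step, and with it the lemma.

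I expect the reverse inequality in the key step to be the main obstacle: one must argue, via the fundamental circuit of $\BidderOpt$ with respect to $\MatchedBidders{\Match}$, that a maximum-priority basis is forced to absorb as many elements of priority at least $\Pri$ as the rank of $\Bidders_{\geq \Pri}$ permits. The easy inequality and the recovery of the multiset of priorities from its threshold counts are routine.
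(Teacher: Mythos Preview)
Your proof is correct and aligns with the paper's approach: the paper's proof is simply a one-line appeal to ``a standard matroid result that follows easily from the exchange property and the correctness of the matroid greedy algorithm,'' and your argument supplies exactly those details, reducing to the weighted-matroid fact that all maximum-weight bases share the same weight multiset and establishing it via a fundamental-circuit exchange. The only difference is granularity---you prove what the paper asserts as folklore.
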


\begin{proof}
  This is a standard matroid result that follows easily from the exchange property and the correctness of the matroid greedy algorithm.
\end{proof}

\newcommand{\Greedy}[2]{\Id{greedy}(#1, #2)}

For any UAP $\Auc$ and any real priority $\Pri$, we define $\Greedy{\Auc}{\Pri}$ as the (uniquely defined, by Lemma~\ref{lem:uap-distribution}) number of matched bidders with priority $\Pri$ in any greedy MWM of $\Auc$.

\begin{lemma}
  \label{lem:uap-in-out}
  Let $\Auc = (\Bidders, \Items)$ be a UAP\@.
  Let $\Bidder$ be a bidder in $\Bidders$ such that $(\Item, \Offer)$ belongs to $\BUBid{\Bidder}$, $\BPri{\Bidder} = \Pri$, and $\Bidder$ is not matched in any greedy MWM of $\Auc$.
  Let $\Bidderp$ be a bidder in $\Bidders$ such that $(\Item, \Offerp)$ belongs to $\BUBid{\Bidderp}$, $\BPri{\Bidderp} = \Prip$, and $\Bidderp$ is matched to $\Item$ in some greedy MWM of $\Auc$.
  Then $(\Offer, \Pri) < (\Offerp, \Prip)$.\footnote{Throughout this paper, comparisons of pairs are to be performed lexicographically.}
\end{lemma}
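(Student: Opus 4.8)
\emph{Plan.}
The plan is to argue by a single edge swap together with the matroid structure underlying ``greedy''. First observe that $\Bidder \neq \Bidderp$, since $\Bidder$ is matched in no greedy MWM of $\Auc$ whereas $\Bidderp$ is matched in at least one. Fix a greedy MWM $\Match$ of $\Auc$ in which $\Bidderp$ is matched to $\Item$; by hypothesis $\Bidder$ is unmatched in $\Match$. Replace the edge $(\Bidderp, \Item)$ by $(\Bidder, \Item)$ to obtain $\Matchp = (\Match \setminus \Set{(\Bidderp, \Item)}) \cup \Set{(\Bidder, \Item)}$, which is again a matching of $\Auc$, with $\WEdges{\Matchp} = \WEdges{\Match} - \Offerp + \Offer$. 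Since $\Match$ is an MWM, this forces $\Offer \le \Offerp$. If $\Offer < \Offerp$, then $(\Offer, \Pri) < (\Offerp, \Prip)$ regardless of the priorities, and we are done; so the entire difficulty is concentrated in the tie case $\Offer = \Offerp$.

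In the tie case, $\WEdges{\Matchp} = \WEdges{\Match}$, so $\Matchp$ is itself an MWM of $\Auc$, and $\MatchedBidders{\Matchp} = (\MatchedBidders{\Match} \setminus \Set{\Bidderp}) \cup \Set{\Bidder}$. This set is independent in $\Matroid{\Auc}$, witnessed by the MWM $\Matchp$, and it has the same cardinality as $\MatchedBidders{\Match}$, which is a basis of $\Matroid{\Auc}$ because $\Match$ is greedy (all maximal independent sets of a matroid have the same size); hence $\MatchedBidders{\Matchp}$ is also a basis. Its priority equals $\MPriority{\Match} - \Prip + \Pri$. If $\Pri > \Prip$, this basis would have strictly larger priority than $\MatchedBidders{\Match}$, contradicting that $\Match$ is greedy; so $\Pri \le \Prip$. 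If $\Pri = \Prip$, then $\MatchedBidders{\Matchp}$ is a maximum-priority basis, so $\Matchp$ is a greedy MWM of $\Auc$ that matches $\Bidder$, contradicting the hypothesis on $\Bidder$. Therefore $\Pri < \Prip$, and together with $\Offer = \Offerp$ this yields $(\Offer, \Pri) < (\Offerp, \Prip)$.

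\emph{Main obstacle.}
The weight argument alone only delivers $\Offer \le \Offerp$, so the crux is breaking the tie $\Offer = \Offerp$. Doing so requires invoking the definition of a greedy MWM via $\Matroid{\Auc}$: specifically, that the matched-bidder set of a greedy MWM is a maximum-priority basis, and that swapping $\Bidderp$ out for $\Bidder$ preserves independence and cardinality and hence produces another basis, so any priority gain (or even a tie) leads to a contradiction. The remaining steps — the swap construction and the weight/priority bookkeeping — are routine.
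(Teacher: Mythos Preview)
Your proof is correct and follows essentially the same line as the paper's: a single edge swap $\Match \to \Matchp = \Match - (\Bidderp,\Item) + (\Bidder,\Item)$ gives $\Offer \le \Offerp$ from maximality of weight, and in the tie case the priority comparison (via the matroid/basis language you use, or equivalently via the ``greedy $=$ maximum-priority MCMWM'' characterization the paper uses) forces $\Pri < \Prip$. The only cosmetic difference is that you route the tie-breaking argument through bases of $\Matroid{\Auc}$ while the paper invokes the MCMWM formulation directly.
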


\begin{proof}
  Let $\Match$ be a greedy MWM in which $\Bidderp$ is matched to $\Item$.
  Thus $\Bidder$ is not matched in $\Match$.
  Let $\Matchp$ denote the matching $\Match - (\Bidderp, \Item) + (\Bidder, \Item)$.
  Since $\Match$ is an MCMWM of $\Auc$ and $\WMatch{\Matchp} = \WMatch{\Match} - \Offerp + \Offer$, we conclude that $\Offer\leq\Offerp$.
  If $\Offer < \Offerp$, the claim of the lemma follows.
  Assume that $\Offer = \Offerp$.
  In this case, $\Matchp$ is an MCMWM of $\Auc$ since $\WMatch{\Matchp} = \WMatch{\Match}$ and $\Car{\Matchp} = \Car{\Match}$.
  Since $\Match$ is a greedy MWM of $\Auc$ and $\MPriority{\Matchp} = \MPriority{\Match} - \Prip + \Pri$, we conclude that $\Pri \leq \Prip$.
  If $\Pri = \Prip$ then $\Matchp$ is a greedy MWM of $\Auc$ that matches $\Bidder$, a contradiction.
  Hence $\Pri < \Prip$, as required.
\end{proof}

\subsection{Extending a UAP}
\label{sec:uap-extend}

Let $\Auc = (\Bidders, \Items)$ be a UAP and let $\Bidder$ be a bidder such that $\BId{\Bidder}$ is not equal to the ID of any bidder in $\Bidders$.
Then we define $\Auc + \Bidder$ as the UAP $(\Bidders + \Bidder, \Items)$.
For any UAPs $\Auc = (\Bidders, \Items)$ and $\Aucp = (\Biddersp, \Itemsp)$, we say that \emph{$\Aucp$ extends $\Auc$} if $\Bidders \subseteq \Biddersp$ and $\Items = \Itemsp$.

\begin{lemma}
  \label{lem:uap-one-way-street}
  Let $\Auc = (\Bidders, \Items)$ be a UAP, let $\Bidder$ be a bidder in $\Bidders$ that is not matched in any greedy MWM of $\Auc$, and let $\Aucp = (\Biddersp, \Items)$ be a UAP that extends $\Auc$.
  Then $\Bidder$ is not matched in any greedy MWM of $\Aucp$.
\end{lemma}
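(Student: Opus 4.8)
The plan is to argue by contradiction: suppose $\Bidder$ is matched in some greedy MWM $\Matchp$ of $\Aucp$, and derive a contradiction with the hypothesis that $\Bidder$ is unmatched in every greedy MWM of $\Auc$. The guiding intuition is that adding more bidders can only make competition for items fiercer, so a bidder shut out of every optimal allocation in the small auction stays shut out in the big one; the work is in making ``shut out'' robust in both the weight dimension and the priority dimension, which is where Lemma~\ref{lem:uap-in-out} does the heavy lifting.

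First I would fix a greedy MWM $\Match$ of $\Auc$ and a greedy MWM $\Matchp$ of $\Aucp$ with $\Bidder$ matched in $\Matchp$, say to item $\Item$. Consider the symmetric difference $\Match \oplus \Matchp$, restricted to the connected component $\Path$ containing $\Bidder$; since $\Bidder$ is unmatched in $\Match$ but matched in $\Matchp$, this component is a path starting at $\Bidder$ with its first edge $(\Bidder,\Item)$ in $\Matchp$, and its edges alternate between $\Matchp$ and $\Match$. I want to ``push'' the $\Matchp$-edges of $\Path$ into $\Match$: that is, consider $\Match' = \Match \oplus \Path$. The key point is that every bidder appearing on $\Path$ other than $\Bidder$ lies in $\Bidders$ (it is matched in $\Match$, hence an original bidder), so $\Match' \setminus \{(\Bidder,\Item)\}$ is a matching of $\Auc$, and if the other endpoint of $\Path$ is an item or an $\Match$-matched-only bidder of the big auction, $\Match'$ itself restricted to $\Bidders$ plus possibly that endpoint behaves well. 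I then compare weights: since $\Match$ is an MWM of $\Auc$ and the portion of $\Match'$ on original bidders is a matching of $\Auc$, the weight cannot increase along the pushed path, which forces, edge by edge going outward from $\Bidder$, that each $\Matchp$-edge is no heavier than the following $\Match$-edge.

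The main obstacle — and the step I expect to be delicate — is converting these chained weight inequalities (and the analogous priority inequalities from greediness, i.e.\ from the fact that $\Match$ is a maximum-priority MCMWM of $\Auc$) into a genuine contradiction rather than merely a sequence of equalities. The cleanest route is to invoke Lemma~\ref{lem:uap-in-out} directly. Walk along $\Path$ from $\Bidder = \Bidder_0$: write $\Path = \Bidder_0, \Item_1, \Bidder_1, \Item_2, \Bidder_2, \dots$, where $(\Bidder_{i-1},\Item_i) \in \Matchp$ and $(\Bidder_i,\Item_i) \in \Match$. Each $\Bidder_i$ with $i \ge 1$ is matched to $\Item_i$ in the greedy MWM $\Match$ of $\Auc$, and (since $\Match'$ would otherwise contradict optimality/greediness of $\Match$) one shows inductively that $\Bidder_{i-1}$ is \emph{not} matched in any greedy MWM of $\Auc$ — the base case $i=1$ being the hypothesis on $\Bidder$, and the inductive step using that if $\Bidder_{i-1}$ were matched in some greedy MWM of $\Auc$ then a local swap along the initial segment of $\Path$ produces a greedy MWM of $\Auc$ matching $\Bidder_0 = \Bidder$. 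Then Lemma~\ref{lem:uap-in-out} applied at item $\Item_i$ to the unmatched bidder $\Bidder_{i-1}$ (offer $\beta_{i-1}(\Item_i)$, priority $z_{i-1}$) and the matched bidder $\Bidder_i$ (offer $\beta_i(\Item_i)$, priority $z_i$) gives $(\beta_{i-1}(\Item_i), z_{i-1}) < (\beta_i(\Item_i), z_i)$ lexicographically.

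Finally I would close the loop by examining the far endpoint of $\Path$. If $\Path$ has even length in edges, its last vertex is a bidder unmatched in $\Match$; if odd length, its last vertex is an item — in either case one more application of the MWM/greedy optimality of $\Match'$ versus $\Match$, or of Lemma~\ref{lem:uap-in-out}, yields the reverse lexicographic inequality along the last edge, contradicting the strict chain of inequalities accumulated above (the lexicographic order is a total order, so the strict inequalities cannot all hold simultaneously with a reversing one). This contradiction shows no greedy MWM of $\Aucp$ matches $\Bidder$, completing the proof. The one bookkeeping subtlety to watch is the parity of $\Path$ and whether its terminal vertex could itself be one of the newly added bidders of $\Aucp \setminus \Auc$; handling that case may require observing that such a terminal bidder is matched in $\Matchp$ but unmatched in $\Match$, so the terminal edge is again a $\Matchp$-edge and the alternation/weight argument still applies symmetrically.
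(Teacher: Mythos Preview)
Your overall setup (contradiction, alternating path $P$ through $u$ in $M\oplus M'$, noting all bidders on $P$ lie in $U$) matches the paper's, but the core of your argument has a genuine gap.

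The inductive claim ``$u_{i-1}$ is not matched in any greedy MWM of $A$'' is simply false for $i\ge 2$: by construction $u_1,u_2,\dots$ are matched in $M$, which \emph{is} a greedy MWM of $A$. Consequently Lemma~\ref{lem:uap-in-out} can be invoked only once (at $v_1$, with the hypothesis on $u=u_0$), not iterated along the path; you do not get a chain of strict lexicographic inequalities. Your proposed inductive step (``a local swap along the initial segment of $P$ produces a greedy MWM of $A$ matching $u_0$'') does not rescue this: swapping $(u_1,v_1)$ for $(u_0,v_1)$ in $M$ changes the weight by the difference of the two offers on $v_1$, and Lemma~\ref{lem:uap-in-out} itself tells you that difference is nonpositive (strictly, in the lexicographic sense), so the swapped matching is in general not even an MWM, let alone greedy. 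There is also a parity slip at the end: when $P$ has even length its terminal vertex is a bidder matched in $M$ (and unmatched in $M'$), not the reverse.

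The paper's proof avoids any induction along $P$ and instead compares $M$ with $M\oplus P$ (in $A$) and $M'$ with $M'\oplus P$ (in $A'$) \emph{globally}. From the MWM property of both $M$ and $M'$ one gets $w(X_0)=w(X_1)$ for the two edge sets of $P$; from the MCMWM property one gets $|X_0|=|X_1|$, forcing the far endpoint to be a bidder $u'\in U$; and from the greedy (maximum-priority) property of both one gets $\mathit{priority}(u)=\mathit{priority}(u')$. Hence $M\oplus P$ is itself a greedy MWM of $A$ that matches $u$, the desired contradiction. The point is that the three optimality layers (weight, cardinality, priority) each yield a single \emph{aggregate} equality over the whole path, rather than a sequence of per-edge strict inequalities.
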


\begin{proof}
  In what follows, we derive a contradiction by proving that $\Bidder$ is matched in a greedy MWM $\MatchSub{1}$ of $\Aucp$.
  We need to prove that $\Bidder$ is matched in some greedy MWM of $\Auc$.
  Let $\MatchSub{0}$ denote a greedy MWM of $\Auc$.
  If $\Bidder$ is matched in $\MatchSub{0}$, we are done, so assume that $\Bidder$ is not matched in $\MatchSub{0}$.
  Thus $\MatchSub{0} \oplus \MatchSub{1}$ contains a unique path $\Path$ with $\Bidder$ as an endpoint.
  The edges of $\Path$ alternate between $\MatchSub{0}$ and $\MatchSub{1}$.
  Let $X_0$ denote the edges of $\Path$ that belong to $\MatchSub{0}$, and let $X_1$ denote the edges of $\Path$ that belong to $\MatchSub{1}$.

  Since $\Bidder$ is matched in $\MatchSub{1}$ and not in $\MatchSub{0}$, the other endpoint of $\Path$ is either an item, or it is a bidder that is matched in $\MatchSub{0}$ and not in $\MatchSub{1}$.
  Either way, we deduce that all of the vertices on $\Path$ belong to $\Auc$.
  Thus $\MatchSubp{0} = \MatchSub{0} \oplus \Path = (\MatchSub{0} \cup X_1) \setminus X_0$ is a matching in $\Auc$.
  Since $\MatchSub{0}$ is an MWM of $\Auc$ and $\MatchSubp{0}$ is a matching of $\Auc$, we deduce that $\WMatch{\MatchSub{0}} \geq \WMatch{\MatchSubp{0}}$ and hence that $\WEdges{X_0} \geq \WEdges{X_1}$.
  Since all of the vertices on $\Path$ belong to $\Aucp$, we conclude that $\MatchSubp{1} = \MatchSub{1} \oplus \Path = (\MatchSub{1} \cup X_0) \setminus X_1$ is a matching in $\Aucp$.
  Since $\MatchSub{1}$ is an MWM of $\Aucp$ and $\MatchSubp{1}$ is a matching of $\Aucp$, we deduce that $\WMatch{\MatchSub{1}} \geq \WMatch{\MatchSubp{1}}$ and hence that $\WEdges{X_1} \geq \WEdges{X_0}$.
  Thus $\WEdges{X_0} = \WEdges{X_1}$, and we conclude that $\MatchSubp{0}$ is an MWM of $\Auc$.

  Since $\Bidder$ is matched in $\MatchSub{1}$ and not in $\MatchSub{0}$, we deduce that $\Car{X_1} \geq \Car{X_0}$ and hence that $\Car{\MatchSubp{0}} \geq \Car{\MatchSub{0}}$.
  Since $\MatchSub{0}$ is a greedy MWM of $\Auc$, we know that $\MatchSub{0}$ is an MCMWM of $\Auc$, and hence that $\Car{\MatchSub{0}} \geq \Car{\MatchSubp{0}}$.
  Thus $\Car{\MatchSub{0}} = \Car{\MatchSubp{0}}$ and hence $\Car{X_0} = \Car{X_1}$ and $\MatchSubp{0}$ is an MCMWM of $\Auc$.
  Since $\Car{X_0} = \Car{X_1}$, the other endpoint of $\Path$ is a bidder $\Bidder'$ that is matched in $\MatchSub{0}$ and not in $\MatchSub{1}$.
  Since $\MatchSub{0}$ is a greedy MWM of $\Auc$ and $\MatchSubp{0}$ is an MCMWM of $\Auc$, we deduce that $\MPriority{\MatchSub{0}} \geq \MPriority{\MatchSubp{0}}$ and hence that $\BPri{\Bidderp} \geq \BPri{\Bidder}$.

  Since $\Car{X_0} = \Car{X_1}$ and $\WEdges{X_0} = \WEdges{X_1}$, we deduce that $\MatchSubp{1}$ is an MCMWM of $\Aucp$.
  Since $\MatchSub{1}$ is a greedy MWM of $\Aucp$ and $\MatchSubp{1}$ is an MCMWM of $\Aucp$, we deduce that $\MPriority{\MatchSub{1}} \geq \MPriority{\MatchSubp{1}}$ and hence that $\BPri{\Bidder} \geq \BPri{\Bidderp}$.
  Since we argued above that $\BPri{\Bidderp} \geq \BPri{\Bidder}$, we conclude that $\BPri{\Bidder} = \BPri{\Bidderp}$, and hence that $\MatchSubp{0}$ is a greedy MWM of $\Auc$.
  This completes the proof, since $\Bidder$ is matched in $\MatchSubp{0}$.
\end{proof}

\subsection{Finding a Greedy MWM}
\label{sec:inc-hungarian-step}

In this section, we briefly discuss how to efficiently compute a greedy MWM of a UAP via a slight modification of the classic Hungarian method for the assignment problem~\citep{Kuhn55}.
In the
(maximization version of the) assignment problem, we are given a set
of $n$ agents, a set of $n$ tasks, and a weight for each agent-task
pair, and our objective is to find a perfect matching (i.e., every
agent and task is required to be matched) of maximum total weight.
The Hungarian method for the assignment problem proceeds as follows:
a set of dual variables, namely a ``price'' for each task, and a possibly incomplete matching are maintained;
an arbitrary unmatched agent $\Bidder$ is chosen and a shortest augmenting path from $\Bidder$ to an unmatched task is computed using ``residual costs'' as the edge weights;
an augmentation is performed along the path to update the matching, and the dual variables are adjusted in order to maintain complementary slackness;
the process repeats until a perfect matching is found.

\newcommand{\ItemD}{\ItemSub{0}}

Within our UAP setting, the set of bidders can be larger than the set of items, and some bidder-item pairs may not be matchable, i.e., the associated bipartite graph is not necessarily complete.
In this setting, we can use an ``incremental'' version of the Hungarian method to find an (not necessarily greedy) MWM of a given UAP $\Auc = (\Bidders, \Items)$ as follows.
For the purpose of simplifying the presentation of our method, we enlarge the set of items by adding a dummy item $\ItemD$ such that $\ItemD$ is connected to each bidder $\Bidder$ with an edge of weight $\WPair{\Bidder}{\ItemD} = 0$ and we always maintain $\ItemD$ in the residual graph with a price of $0$.
We start with the empty matching $\Match$.
Then, for each bidder $\Bidder$ in $\Bidders$ (in arbitrary order), we process $\Bidder$ via an ``incremental Hungarian step'' as follows:
let $\Biddersp$ denote the set of bidders that are matched by $\Match$;
let $\Itemsp$ denote the set of items that are not matched by $\Match$;
find the shortest paths from $\Bidder$ to each item $\Item$ in $\Itemsp + \ItemD$ in the residual graph;
let $W$ denote the minimum path weight among these shortest paths;
choose a path $\Path$ that is either (1) a shortest path of weight $W$ from $\Bidder$ to an item $\Item$ in $\Itemsp$, or (2) a shortest path from $\Bidder$ to a bidder $\Bidderp$ in $\Biddersp + \Bidder$ such that extending $\Path$ with the edge $(\Bidderp, \ItemD)$ yields a shortest path of weight $W$ from $\Bidder$ to $\ItemD$;
augment $\Match$ along $\Path$;
adjust the prices in order to maintain complementary slackness;
update the residual graph.
The algorithm terminates when every non-reserve bidder has been processed.
The algorithm performs $\Car{\Bidders}$ incremental Hungarian steps and each incremental Hungarian step can be implemented in $\Oh{\Car{\Items} \log \Car{\Items} + m}$ time by utilizing Fibonacci heaps~\citep{FT87}, where $m$ denotes the number of edges in the residual graph, which is $\Oh{\Car{\Items}^2}$.

In order to find a greedy MWM, we slightly modify the implementation described in the previous paragraph.
Lemmas~\ref{lem:uap-grow} and~\ref{lem:uap-swap} established below imply that choosing the path $\Path$ in the following way results in a greedy MWM:
if a path of type (1) exists, we arbitrarily choose such a path;
if no path of type (1) exists, then we identify the nonempty set $\Bidderspp$ of all bidders $\Bidderp$ such that a path of type (2) exists, and we choose a shortest path $\Path$ that terminates at a minimum priority bidder in $\Bidderspp$.
It is easy to see that the described modification does not increase the asymptotic time complexity of the algorithm.
In the remainder of this section, we establish Lemmas~\ref{lem:uap-grow}, \ref{lem:uap-swap}, and \ref{lem:greedy-mwm-card};
Lemma~\ref{lem:greedy-mwm-card} is used in Section~\ref{sec:proof-iuap-threshold-proof} to prove Lemma~\ref{lem:iuap-priorities}.
We start with some useful definitions.

\newcommand{\Digraph}[3]{\Id{digraph}(#1, #2, #3)}

Let $\Auc = (\Bidders, \Items)$ and $\Aucp = \Auc + \Bidder$ be UAPs, and let $\Match$ be an MWM of $\Auc$.
We define $\Digraph{\Auc}{\Bidder}{\Match}$ as the edge-weighted digraph that may be obtained by modifying the subgraph of $\Auc$ induced by the set of vertices $(\MatchedBidders{\Match} + \Bidder) \cup \Items$ as follows:
for each edge that belongs to $\Match$, we direct the edge from item to bidder and leave the weight unchanged;
for each edge that does not belong to $\Match$, we direct the edge from bidder to item and negate the weight.

% Graphs
\newcommand{\Graph}{G}

\begin{lemma}
  \label{lem:uap-cycle}
  Let $\Auc = (\Bidders, \Items)$ and $\Aucp = \Auc + \Bidder$ be UAPs, let $\Match$ be an MWM of $\Auc$, and let $\Graph$ denote $\Digraph{\Auc}{\Bidder}{\Match}$.
  Then $\Graph$ does not contain any negative-weight cycles.
\end{lemma}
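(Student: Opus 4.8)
The plan is to show that every directed cycle of $\Graph$ corresponds to an alternating cycle of $\Match$ in $\Auc$, and then invoke the optimality of $\Match$ to conclude that such a cycle cannot have negative weight.

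First I would record the relevant structural feature of $\Graph$. It is bipartite with parts $\MatchedBidders{\Match} + \Bidder$ and $\Items$: each edge of $\Match$ is oriented from its item endpoint to its bidder endpoint (with weight unchanged), and each non-matched edge is oriented from its bidder endpoint to its item endpoint (with weight negated). Hence along any directed walk the edges alternate between matched and non-matched edges — a step arriving at a bidder uses a matched edge and a step leaving a bidder uses a non-matched edge — so in particular any directed cycle $\Cycle$ of $\Graph$ is an alternating cycle with respect to $\Match$.

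Next I would rule out the new bidder $\Bidder$ from lying on any directed cycle. Since $\Match$ is an MWM of $\Auc$ and $\BId{\Bidder}$ matches no ID in $\Bidders$, the bidder $\Bidder$ is unmatched in $\Match$, so no matched edge is incident on $\Bidder$; by the orientation rule, every edge of $\Graph$ incident on $\Bidder$ is directed out of $\Bidder$, giving $\Bidder$ in-degree zero. Therefore any directed cycle $\Cycle$ of $\Graph$ avoids $\Bidder$, uses only vertices of $\MatchedBidders{\Match} \cup \Items$, and — after forgetting orientations and weight negations — consists entirely of edges of $\Auc$ (this is the one point that needs care, since $\Bidder$ is the sole vertex of $\Graph$ not belonging to $\Auc$).

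Finally, for a fixed directed cycle $\Cycle$, let $X$ and $Y$ be its matched and non-matched edges, respectively. By the alternation property, $\Match \oplus \Cycle = (\Match \setminus X) \cup Y$ is a matching of $\Auc$ with $\WMatch{\Match \oplus \Cycle} = \WMatch{\Match} - \WEdges{X} + \WEdges{Y}$, while by construction the weight of $\Cycle$ in $\Graph$ equals $\WEdges{X} - \WEdges{Y}$. Since $\Match$ is an MWM of $\Auc$, we get $\WMatch{\Match} \ge \WMatch{\Match \oplus \Cycle}$, hence $\WEdges{X} - \WEdges{Y} \ge 0$, so $\Cycle$ has nonnegative weight; as $\Cycle$ was arbitrary, $\Graph$ has no negative-weight cycle. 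I do not anticipate a genuine obstacle here — the whole argument is a short exchange-type computation — the only subtlety being the observation that $\Bidder$, the unique vertex of $\Graph$ outside $\Auc$, has no incoming edges and thus lies on no cycle, which is what makes $\Match \oplus \Cycle$ meaningful as a matching of $\Auc$.
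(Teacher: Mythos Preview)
Your proof is correct and follows essentially the same approach as the paper's. The paper's argument is more terse --- it simply notes that $\Bidder$ has only outgoing edges and hence cannot lie on a cycle, so any negative-weight cycle would already contradict $\Match$ being an MWM of $\Auc$ --- but your explicit exchange computation $\WMatch{\Match \oplus \Cycle} = \WMatch{\Match} - \WEdges{X} + \WEdges{Y}$ spells out exactly the contradiction the paper leaves implicit.
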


\begin{proof}
  Such a cycle could not involve $\Bidder$ (since $\Bidder$ only has outgoing edges) so it has to be a negative-weight cycle that already existed before $\Bidder$ was added, a contradiction since $\Match$ is an MWM of $\Auc$.
\end{proof}

\newcommand{\Holes}[3]{\Id{holes}(#1, #2, #3)}
\newcommand{\Candidates}[3]{\Id{candidates}(#1, #2, #3)}

Let $\Auc = (\Bidders, \Items)$ and $\Aucp = \Auc + \Bidder$ be UAPs, let $\Match$ be an MWM of $\Auc$, and let $\Graph$ denote $\Digraph{\Auc}{\Bidder}{\Match}$.
We define a set of items $\Holes{\Auc}{\Bidder}{\Match}$, and a set of bidders $\Candidates{\Auc}{\Bidder}{\Formatting{\allowbreak} \Match}$, as follows.
By Lemma~\ref{lem:uap-cycle}, the shortest path distance in $\Graph$ from bidder $\Bidder$ to any vertex reachable from $\Bidder$ is well-defined.
We define $\Holes{\Auc}{\Bidder}{\Match}$ as the set of all items $\Item$ in $\Items$ such that $\Item$ is unmatched in $\Match$ and the weight of a shortest path in $\Graph$ from $\Bidder$ to $\Item$ is $\WAuc{\Auc} - \WAuc{\Aucp}$.
We define $\Candidates{\Auc}{\Bidder}{\Match}$ as the set of all bidders $\Bidderp$ such that the weight of a shortest path in $\Graph$ from $\Bidder$ to $\Bidderp$ is equal to $\WAuc{\Auc} - \WAuc{\Auc'}$.

\newcommand{\Augment}[4]{\Id{augment}(#1, #2, #3, #4)}

Let $\Auc = (\Bidders, \Items)$ and $\Aucp = \Auc + \Bidder$ be UAPs, let $\Match$ be an MWM of $\Auc$, and let $\Path$ be a directed path in $\Digraph{\Auc}{\Bidder}{\Match}$ that starts at $\Bidder$, has weight $\WAuc{\Auc} - \WAuc{\Auc'}$, and terminates at either an item in $\Holes{\Auc}{\Bidder}{\Match}$ or a bidder in $\Candidates{\Auc}{\Bidder}{\Match}$.
(Note that $\Path$ could be a path of length zero from $\Bidder$ to $\Bidder$.)
Let $X$ denote the edges in $\Match$ that correspond to item-to-bidder edges in $\Path$, and let $Y$ denote the edges in $\Aucp$ that correspond to bidder-to-item edges in $\Path$.
It is easy to see that the set of edges $(\Match \setminus X) \cup Y$ is an MWM of $\Aucp$.
We define this MWM of $\Aucp$ as $\Augment{\Auc}{\Bidder}{\Match}{\Path}$.

\begin{lemma}
  \label{lem:uap-xor}
  Let $\Auc = (\Bidders, \Items)$ be a UAP, let $\Match$ be a greedy MWM of $\Auc$, let $\Bidder$ be a bidder that does not belong to $\Bidders$, let $\Aucp$ denote the UAP $(\Bidders + \Bidder, \Items)$, and let $\Matchp$ denote a greedy MWM of $\Aucp$ minimizing $|\Match \oplus \Matchp|$.
  Then $\Digraph{\Auc}{\Bidder}{\Match}$ contains a directed path $\Path$ satisfying the following conditions:
  $\Path$ has weight $\WAuc{\Auc} - \WAuc{\Aucp}$;
  $\Path$ starts at $\Bidder$;
  the bidder-to-item edges in $\Path$ correspond to the edges in $\Matchp \setminus \Match$;
  the item-to-bidder edges in $\Path$ correspond to the edges in $\Match \setminus \Matchp$;
  if $\Holes{\Auc}{\Bidder}{\Match}$ is nonempty, then $\Path$ terminates at an item in $\Holes{\Auc}{\Bidder}{\Match}$;
  if $\Holes{\Auc}{\Bidder}{\Match}$ is empty, then $\Path$ terminates at a minimum-priority bidder in $\Candidates{\Auc}{\Bidder}{\Match}$.
\end{lemma}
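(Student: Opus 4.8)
The plan is to use the minimality of $|\Match \oplus \Matchp|$ to force $\Match \oplus \Matchp$ to be a single connected component --- a path having $\Bidder$ as an endpoint, which degenerates to the zero-length path at $\Bidder$ when $\Bidder$ is unmatched in $\Matchp$ (a situation that, as we shall see, forces $\Matchp = \Match$) --- and then to take $\Path$ to be this path oriented away from $\Bidder$. Throughout I invoke the characterization noted just before Lemma~\ref{lem:uap-distribution} that an MWM is greedy if and only if it is a maximum-priority MCMWM, and the fact that $\WAuc{\Aucp} \ge \WAuc{\Auc}$ (since $\Match$ is a matching of $\Aucp$), so that $\WEdges{\Match} = \WAuc{\Auc}$ and $\WEdges{\Matchp} = \WAuc{\Aucp}$.

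\emph{Step 1: $\Match \oplus \Matchp$ is a single path.} Decompose $\Match \oplus \Matchp$ into vertex-disjoint paths and cycles. Since $\Bidder \notin \Bidders$ has no $\Match$-edge, $\Bidder$ is isolated in, or an endpoint of, its component. Let $S$ be any other component; every vertex of $S$ lies in $\Auc$, so $\Match \oplus S$ is a matching of $\Auc$ and $\Matchp \oplus S$ a matching of $\Aucp$. The MWM property of $\Match$ in $\Auc$ gives $\WEdges{\Match \cap S} \ge \WEdges{\Matchp \cap S}$ and that of $\Matchp$ in $\Aucp$ gives the reverse, so these weights are equal; hence $\Match \oplus S$ is an MWM of $\Auc$ and $\Matchp \oplus S$ an MWM of $\Aucp$. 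If $S$ is a path with both endpoints matched in $\Match$ but not $\Matchp$ (resp.\ both matched in $\Matchp$ but not $\Match$), then $\Matchp \oplus S$ (resp.\ $\Match \oplus S$) is an MWM with one more edge than $\Matchp$ (resp.\ $\Match$), contradicting that $\Matchp$ (resp.\ $\Match$) is an MCMWM\@. Otherwise $S$ is a cycle, or a path with one endpoint matched only in $\Match$ and the other only in $\Matchp$; in either case $|\Matchp \oplus S| = |\Matchp|$ and $\MatchedBidders{\Matchp \oplus S}$ differs from $\MatchedBidders{\Matchp}$ only (possibly) at the endpoints of $S$, and a short priority computation playing the greediness of $\Match$ in $\Auc$ against that of $\Matchp$ in $\Aucp$ yields $\MPriority{\Matchp \oplus S} = \MPriority{\Matchp}$; thus $\Matchp \oplus S$ is a greedy MWM of $\Aucp$ with $|\Match \oplus (\Matchp \oplus S)| < |\Match \oplus \Matchp|$, contradicting minimality. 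Hence $\Match \oplus \Matchp$ is precisely the component of $\Bidder$, and if $\Bidder$ is unmatched in $\Matchp$ this component is trivial, so $\Matchp = \Match$.

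\emph{Steps 2 and 3: extracting $\Path$ and identifying its terminus.} Take $\Path$ to be the component of $\Bidder$, oriented away from $\Bidder$. Its internal bidders are matched in $\Match$, and its far endpoint, if a bidder, is matched in $\Match$; thus every vertex and edge of $\Path$ lies in $\Digraph{\Auc}{\Bidder}{\Match}$, and following $\Path$ from $\Bidder$ traverses each edge of $\Matchp \setminus \Match$ from bidder to item and each edge of $\Match \setminus \Matchp$ from item to bidder --- a legitimate directed path starting at $\Bidder$. Since $\Path$ is all of $\Match \oplus \Matchp$, its bidder-to-item edges are exactly $\Matchp \setminus \Match$, its item-to-bidder edges exactly $\Match \setminus \Matchp$, and its weight is $\WEdges{\Match \setminus \Matchp} - \WEdges{\Matchp \setminus \Match} = \WEdges{\Match} - \WEdges{\Matchp} = \WAuc{\Auc} - \WAuc{\Aucp}$; this gives the first four bulleted conditions. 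Now let $\Vertex$ be the endpoint of $\Path$ other than $\Bidder$ (so $\Vertex = \Bidder$ when $\Path$ is trivial). Because $\Vertex$ is matched in exactly one of $\Match, \Matchp$ and the last edge of $\Path$ is directed into $\Vertex$, either (A) $\Vertex$ is an item, unmatched in $\Match$ and matched in $\Matchp$, with $|\Matchp| = |\Match| + 1$; or (B) $\Vertex \in \MatchedBidders{\Match} + \Bidder$ is a bidder, with $|\Matchp| = |\Match|$. In case (A), $\Path$ already realizes a $\Bidder$-to-$\Vertex$ walk of weight $\WAuc{\Auc} - \WAuc{\Aucp}$, and a strictly shorter one would, upon augmenting $\Match$ along it, yield a matching of $\Aucp$ of weight exceeding $\WAuc{\Aucp}$ --- impossible; so that weight is the shortest-path distance, $\Vertex \in \Holes{\Auc}{\Bidder}{\Match}$, and $\Path$ terminates in the (now nonempty) set $\Holes{\Auc}{\Bidder}{\Match}$. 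In case (B), $\Holes{\Auc}{\Bidder}{\Match}$ is empty, since an item in it would, via $\Augment$, produce an MWM of $\Aucp$ of cardinality $|\Match| + 1 > |\Matchp|$, contradicting that $\Matchp$ is an MCMWM\@; that $\Vertex \in \Candidates{\Auc}{\Bidder}{\Match}$ follows as in case (A) (if $\Vertex = \Bidder$ then $\WAuc{\Auc} = \WAuc{\Aucp}$ and Lemma~\ref{lem:uap-cycle} gives distance $0$ from $\Bidder$ to $\Bidder$); and $\Vertex$ has minimum priority in $\Candidates{\Auc}{\Bidder}{\Match}$ because, for any $\Bidderp \in \Candidates{\Auc}{\Bidder}{\Match}$ with $\Bidderp \ne \Bidder$, the matching $\Augment{\Auc}{\Bidder}{\Match}{\Path'}$ along a shortest $\Bidder$-to-$\Bidderp$ path $\Path'$ is an MCMWM of $\Aucp$ with matched bidders $(\MatchedBidders{\Match} - \Bidderp) + \Bidder$, so comparing its priority with $\MPriority{\Matchp} = \MPriority{\Match} - \BPri{\Vertex} + \BPri{\Bidder}$ --- valid since $\MatchedBidders{\Matchp} = (\MatchedBidders{\Match} - \Vertex) + \Bidder$ when $\Path$ is nontrivial, while $\Vertex = \Bidder$ forces $\Matchp = \Match$ and the claim is trivial --- and using that $\Matchp$ is greedy gives $\BPri{\Bidderp} \ge \BPri{\Vertex}$; the remaining possibility $\Bidderp = \Bidder$ occurs only when $\WAuc{\Auc} = \WAuc{\Aucp}$, where $\Match$ too is an MCMWM of $\Aucp$ and the analogous comparison of $\MPriority{\Match}$ with $\MPriority{\Matchp}$ gives $\BPri{\Bidder} \ge \BPri{\Vertex}$.

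\emph{Main obstacle.} I expect Step~1 to be the crux: the component case analysis, and in particular verifying that deleting an extraneous cycle or path from $\Matchp$ still leaves a \emph{greedy} MWM of $\Aucp$, so that minimality is contradicted. The priority bookkeeping in case (B) of Step~3 is the secondary difficulty, since it must track precisely how $\MatchedBidders{}$ changes under $\Augment$ and balance the greediness of $\Match$ in $\Auc$ against that of $\Matchp$ in $\Aucp$. The degenerate case $\Bidder \notin \MatchedBidders{\Matchp}$ must be carried along throughout, but it is never genuinely troublesome.
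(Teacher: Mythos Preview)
Your proposal is correct and follows essentially the same approach as the paper: both argue that any component of $\Match \oplus \Matchp$ not containing $\Bidder$ can be ``undone'' in $\Matchp$ to produce another greedy MWM with smaller symmetric difference, forcing $\Match \oplus \Matchp$ to be a single alternating path from $\Bidder$, and then verify the endpoint conditions by case analysis. Your organization differs only superficially---you merge the paper's separate ``no cycles'' and ``no extra paths'' steps into one component-by-component argument, and you spell out more explicitly why the terminus lies in $\Holes{\Auc}{\Bidder}{\Match}$ or $\Candidates{\Auc}{\Bidder}{\Match}$ and has minimum priority (points the paper leaves as ``it is easy to argue'').
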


\begin{proof}
  The edges of $\Match \oplus \Matchp$ form a collection $\PathCycleColl$ of disjoint cycles and paths of positive length.

  We begin by arguing that $\PathCycleColl$ does not contain any cycles.
  Suppose there is a cycle $\Cycle$ in $\PathCycleColl$.
  Let $X$ denote the edges of $\Cycle$ that belong to $\Match \setminus \Matchp$, and let $Y$ denote the edges of $\Cycle$ that belong to $\Matchp \setminus \Match$.
  Let $\Matchpp$ denote $(\Match \cup Y) \setminus X$, which is a matching in $\Auc$ since $\Bidder$ is unmatched in $\Match$ and hence does not belong to $\Cycle$.
  Since $\Match$ is an MWM of $\Auc$ and $\WMatch{\Matchpp} = \WMatch{\Match} + \WEdges{Y} - \WEdges{X}$, we conclude that $\WEdges{X} \geq \WEdges{Y}$.
  Let $\Matchppp$ denote $(\Matchp \cup X) \setminus Y$, which is a matching in $\Aucp$.
  Since $\Matchp$ is an MWM of $\Aucp$ and $\WMatch{\Matchppp} = \WMatch{\Matchp} + \WEdges{Y} - \WEdges{X}$, we conclude that $\WEdges{X} \leq \WEdges{Y}$.
  Thus $\WEdges{X}= \WEdges{Y}$ and hence $\WMatch{\Matchppp} = \WMatch{\Matchp}$, implying that $\Matchppp$ is an MWM of $\Aucp$.
  Moreover, since $\Matchppp$ matches the same set of bidders as $\Matchp$, we find that $\Matchppp$ is a greedy MWM of $\Aucp$.
  This contradicts the definition of $\Matchp$ since $\Car{\Match \oplus \Matchppp} < \Car{\Match \oplus \Matchp}$.

  Next we argue that if $\PathOther$ is a path in $\PathCycleColl$, then $\Bidder$ is an endpoint of $\PathOther$.
  Suppose there is a path $\PathOther$ in $\PathCycleColl$ such that $\Bidder$ is not an endpoint of $\PathOther$.
  Thus $\Bidder$ does not appear on $\PathOther$ since $\Bidder$ is unmatched in $\Match$.
  Let $X$ denote the edges of $\PathOther$ that belong to $\Match \setminus \Matchp$, and let $Y$ denote the edges of $\PathOther$ that
  belong to $\Matchp \setminus \Match$.
  Let $\Matchpp$ denote $(\Match \cup Y) \setminus X$, which is a matching in $\Auc$ since $\Bidder$ does not belong to $\PathOther$.
  Since $\Match$ is an MWM of $\Auc$ and $\WMatch{\Matchpp} = \WMatch{\Match} + \WEdges{Y} - \WEdges{X}$, we conclude that $\WEdges{X} \geq \WEdges{Y}$.
  Let $\Matchppp$ denote $(\Matchp \cup X) \setminus Y$, which is a matching in $\Aucp$.
  Since $\Matchp$ is an MWM of $\Aucp$ and $\WMatch{\Matchppp} = \WMatch{\Matchp} + \WEdges{Y} - \WEdges{X}$, we conclude that $\WEdges{X} \leq \WEdges{Y}$.
  Thus $\WEdges{X}= \WEdges{Y}$ and hence $\WMatch{\Matchpp} = \WMatch{\Match}$ and $\WMatch{\Matchppp} = \WMatch{\Matchp}$, implying that $\Matchpp$ is an MWM of $\Auc$ and $\Matchppp$ is an MWM of $\Aucp$.
  Since $\Match$ is a greedy MWM and hence an MCMWM of $\Auc$, the set of bidders matched by $\Match$ is not properly contained in the set of bidders matched by $\Matchpp$;
  we conclude that $\Car{X} \geq \Car{Y}$.
  Since $\Matchp$ is a greedy MWM and hence an MCMWM of $\Aucp$, the set of bidders matched by $\Matchp$ is not properly contained in the set of bidders matched by $\Matchppp$;
  we conclude that $\Car{X} \leq \Car{Y}$.
  Thus $\Car{X} = \Car{Y}$, so the length of path $\PathOther$ is even.
  We consider two cases.

  Case 1: The endpoints of $\PathOther$ are items.
  In this case, $\Matchp$ and $\Matchppp$ match the same set of bidders, and hence $\Matchppp$ is a greedy MWM of $\Aucp$.
  This contradicts the definition of $\Matchp$, since $\PathOther$ has positive length and hence $\Car{\Match \oplus \Matchppp} < \Car{\Match \oplus \Matchp}$.

  Case 2: The endpoints of $\PathOther$ are bidders.
  Since $\PathOther$ has positive length, one endpoint, call it $\BidderSub{0}$, is matched in $\Match$ but not in $\Matchp$, and the other endpoint, call it $\BidderSub{1}$, is matched in $\Matchp$ but not in $\Match$.
  Since $\Match$ is a greedy MWM of $\Auc$ and $\Matchpp$ is an MWM of $\Auc$, we deduce that $\BPri{\BidderSub{0}} \geq \BPri{\BidderSub{1}}$.
  Since $\Matchp$ is a greedy MWM of $\Aucp$ and $\Matchppp$ is an MWM of $\Aucp$, we deduce that $\BPri{\BidderSub{0}} \leq \BPri{\BidderSub{1}}$.
  Thus $\BPri{\BidderSub{0}} = \BPri{\BidderSub{1}}$.
  It follows that $\MPriority{\Matchppp} = \MPriority{\Matchp}$.
  Hence $\Matchppp$ is a greedy MWM of $\Aucp$.
  This contradicts the definition of $\Matchp$ since $\Car{\Match \oplus \Matchppp} < \Car{\Match \oplus \Matchp}$.

  From the preceding arguments, we deduce that either $\Match = \Matchp$ or $\Match \oplus \Matchp$ corresponds to a positive-length path with $\Bidder$ as an endpoint.
  Equivalently, $\Match \oplus \Matchp$ is the edge set of a path that has $\Bidder$ as an endpoint and may have length zero (i.e., the path may begin and end at $\Bidder$).
  We claim if the edges of this path are directed away from endpoint $\Bidder$, we obtain a directed path $\Path$ satisfying the six conditions stated in the lemma.
  It is easy to see that $\Path$ satisfies the first four of these conditions.
  It remains to establish that $\Path$ satisfies the fifth and sixth conditions.

  For the fifth condition, assume that $\Holes{\Auc}{\Bidder}{\Match}$ is nonempty.
  We need to prove that $\Path$ terminates at an item in $\Holes{\Auc}{\Bidder}{\Match}$.
  Since $\Holes{\Auc}{\Bidder}{\Match}$ is nonempty, we deduce that $\Car{\Matchp} = \Car{\Match} + 1$, and hence that $\Path$ terminates at some item $\Item$.
  Since $\Path$ has weight $\WAuc{\Auc} - \WAuc{\Aucp}$, we deduce that $\Item$ belongs to $\Holes{\Auc}{\Bidder}{\Match}$, as required.

  For the sixth condition, assume that $\Holes{\Auc}{\Bidder}{\Match}$ is empty.
  We need to prove that $\Path$ terminates at a minimum-priority bidder in $\Candidates{\Auc}{\Bidder}{\Match}$.
  Suppose $\Path$ terminates at some item $\Item$.
  Since $\Path$ has weight $\WAuc{\Auc} - \WAuc{\Aucp}$, we deduce that $\Item$ belongs to $\Holes{\Auc}{\Bidder}{\Match}$, a contradiction.
  Thus $\Path$ terminates at some bidder $\Bidderp$.
  Since $\Path$ has weight $\WAuc{\Auc} - \WAuc{\Aucp}$, we deduce that $\Bidderp$ belongs to $\Candidates{\Auc}{\Bidder}{\Match}$.
  If $\Bidderp$ is not a minimum-priority bidder in $\Candidates{\Auc}{\Bidder}{\Match}$, it is easy to argue that $\Matchp$ is not a greedy MWM of $\Aucp$, a contradiction.
  Thus $\Path$ terminates at a minimum-priority bidder in $\Candidates{\Auc}{\Bidder}{\Match}$.
\end{proof}

\begin{lemma}
  \label{lem:uap-grow}
  Let $\Auc = (\Bidders, \Items)$ be a UAP, let $\Match$ be a greedy MWM of $\Auc$, let $\Bidder$ be a bidder that does not belong to $\Bidders$, let $\Aucp$ denote the UAP $(\Bidders + \Bidder, \Items)$, let $\Path$ be a directed path in $\Digraph{\Auc}{\Bidder}{\Match}$ of weight $\WAuc{\Auc} - \WAuc{\Aucp}$ from $\Bidder$ to an item in $\Holes{\Auc}{\Bidder}{\Match}$, and let $\MatchOpt$ denote $\Augment{\Auc}{\Bidder}{\Match}{\Path}$.
  Then $\MatchOpt$ is a greedy MWM of $\Aucp$.
\end{lemma}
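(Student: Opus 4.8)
The plan is to reduce the claim to a single structural observation about the augmentation. The definition of $\Augment{\Auc}{\Bidder}{\Match}{\Path}$ already tells us that $\MatchOpt$ is an MWM of $\Aucp$, so the only thing left to verify is that $\MatchOpt$ is \emph{greedy}, i.e., that $\MatchedBidders{\MatchOpt}$ is a maximum-priority maximal independent set of $\Matroid{\Aucp}$. The first step is to pin down $\MatchedBidders{\MatchOpt}$: since $\Path$ starts at $\Bidder$ and terminates at an item of $\Holes{\Auc}{\Bidder}{\Match}$ — which is unmatched in $\Match$ — the edges of $\Path$ alternate between bidder-to-item edges (outside $\Match$) and item-to-bidder edges (inside $\Match$), beginning and ending with a bidder-to-item edge. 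Augmenting $\Match$ along $\Path$ therefore re-matches each intermediate bidder of $\Path$ to the next item on $\Path$, newly matches $\Bidder$, and leaves the status of every other bidder unchanged; hence $\MatchedBidders{\MatchOpt} = \MatchedBidders{\Match} + \Bidder$.

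The second step is to show that $\MatchedBidders{\Match} + \Bidder$ is exactly the set of bidders matched by \emph{some} greedy MWM of $\Aucp$, which by Lemma~\ref{lem:uap-distribution} already forces it to be a maximum-priority maximal independent set of $\Matroid{\Aucp}$. First I would take $\Matchp$ to be a greedy MWM of $\Aucp$ minimizing $\Car{\Match \oplus \Matchp}$. Since $\Holes{\Auc}{\Bidder}{\Match}$ is nonempty (it contains the endpoint of $\Path$), Lemma~\ref{lem:uap-xor} produces a directed path in $\Digraph{\Auc}{\Bidder}{\Match}$ from $\Bidder$ to an item of $\Holes{\Auc}{\Bidder}{\Match}$ whose bidder-to-item edges are exactly $\Matchp \setminus \Match$ and whose item-to-bidder edges are exactly $\Match \setminus \Matchp$; this means $\Matchp$ is obtained from $\Match$ by augmenting along that path, so the same alternation argument as in the first step yields $\MatchedBidders{\Matchp} = \MatchedBidders{\Match} + \Bidder$. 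Therefore $\MatchedBidders{\MatchOpt} = \MatchedBidders{\Matchp}$, and since $\Matchp$ is a greedy MWM of $\Aucp$, $\MatchedBidders{\Matchp}$ (hence $\MatchedBidders{\MatchOpt}$) is a maximum-priority maximal independent set of $\Matroid{\Aucp}$; thus $\MatchOpt$ is a greedy MWM of $\Aucp$.

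I do not expect a deep obstacle: all the real work — that a minimum-symmetric-difference greedy MWM of $\Aucp$ arises from $\Match$ by augmenting along a path in $\Digraph{\Auc}{\Bidder}{\Match}$ that, when $\Holes{\Auc}{\Bidder}{\Match}$ is nonempty, ends at a hole — is already discharged in Lemma~\ref{lem:uap-xor}. The one place to be careful is the bookkeeping: checking that augmenting along a bidder-to-hole path merely adjoins $\Bidder$ to the matched-bidder set, and noting that the path supplied by Lemma~\ref{lem:uap-xor} does reconstruct $\Matchp$ via $\Augment{\Auc}{\Bidder}{\Match}{\cdot}$ (immediate from the description of its edge sets). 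As an alternative that bypasses Lemma~\ref{lem:uap-xor}, once $\MatchedBidders{\MatchOpt} = \MatchedBidders{\Match} + \Bidder$ is known one can argue purely matroid-theoretically: $\MatchOpt$ witnesses that this set is independent in $\Matroid{\Aucp}$, so the rank of $\Matroid{\Aucp}$ exceeds that of $\Matroid{\Auc}$ by exactly one, whence every basis of $\Matroid{\Aucp}$ contains $\Bidder$ and has the form $B + \Bidder$ with $B$ a basis of $\Matroid{\Auc}$; the maximum-priority basis of $\Matroid{\Aucp}$ is then $\MatchedBidders{\Match} + \Bidder$ since $\Match$ is greedy.
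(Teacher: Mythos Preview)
Your main argument is correct and essentially identical to the paper's: both note that $\MatchOpt$ is an MWM of $\Aucp$ by definition of $\Augment{\cdot}{\cdot}{\cdot}{\cdot}$, invoke Lemma~\ref{lem:uap-xor} (with $\Holes{\Auc}{\Bidder}{\Match}$ nonempty) to conclude that a greedy MWM $\Matchp$ of $\Aucp$ minimizing $|\Match\oplus\Matchp|$ satisfies $\MatchedBidders{\Matchp}=\MatchedBidders{\Match}+\Bidder$, and then observe that $\MatchOpt$ matches the same bidder set. The side remark offering a matroid-only alternative is plausible but the two assertions ``rank increases by exactly one'' and ``every basis of $\Matroid{\Aucp}$ has the form $B+\Bidder$ with $B$ a basis of $\Matroid{\Auc}$'' are not immediate from the matroid axioms alone (the restriction of $\Matroid{\Aucp}$ to $\Bidders$ is \emph{not} $\Matroid{\Auc}$ in general); justifying them requires an alternating-path argument of the same flavor as Lemma~\ref{lem:uap-xor}, so that route does not really bypass it.
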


\begin{proof}
  The definition of $\Augment{\Auc}{\Bidder}{\Match}{\Path}$ implies that $\MatchOpt$ is an MWM of $\Aucp$.
  Let $\Matchp$ denote a greedy MWM of $\Aucp$ minimizing $\Car{\Match \oplus \Matchp}$.
  Let $\Biddersp$ denote the set of bidders in $\Auc$ matched by $\Match$.
  Since $\Holes{\Auc}{\Bidder}{\Match}$ is nonempty, Lemma~\ref{lem:uap-xor} implies that the set of bidders in $\Aucp$ matched by $\Matchp$ is $\Biddersp + \Bidder$.
  Since $\MatchOpt$ is an MWM of $\Aucp$ that also matches the set of bidders $\Biddersp + \Bidder$, we deduce that $\MatchOpt$ is a greedy MWM of $\Aucp$.
\end{proof}

\begin{lemma}
  \label{lem:uap-swap}
  Let $\Auc = (\Bidders, \Items)$ be a UAP, let $\Match$ be a greedy MWM of $\Auc$, let $\Bidder$ be a bidder that does not belong to $\Bidders$, and let $\Aucp$ denote the UAP $(\Bidders + \Bidder, \Items)$.
  Assume that $\Holes{\Auc}{\Bidder}{\Match}$ is empty.
  Let $\Bidderp$ denote a minimum-priority bidder in $\Candidates{\Auc}{\Bidder}{\Match}$ (which is nonempty by Lemma~\ref{lem:uap-xor}), let $\Path$ be a directed path in $\Digraph{\Auc}{\Bidder}{\Match}$ of weight $\WAuc{\Auc} - \WAuc{\Aucp}$ from $\Bidder$ to $\Bidderp$, and let $\MatchOpt$ denote $\Augment{\Auc}{\Bidder}{\Match}{\Path}$.
  Then $\MatchOpt$ is a greedy MWM of $\Aucp$.
\end{lemma}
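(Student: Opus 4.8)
The plan is to mirror the proof of Lemma~\ref{lem:uap-grow}, using Lemma~\ref{lem:uap-xor} to pin down the set of bidders matched by a ``nearest'' greedy MWM of $\Aucp$ and then comparing it against $\MatchOpt$.

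First I would record the effect of the augmentation on the set of matched bidders. By the definition of $\Augment{\Auc}{\Bidder}{\Match}{\Path}$, the set $\MatchOpt$ is an MWM of $\Aucp$. Since $\Bidder$ has only outgoing edges in $\Digraph{\Auc}{\Bidder}{\Match}$, the edges along $\Path$ alternate, beginning with a bidder-to-item edge out of $\Bidder$; since $\Path$ terminates at the bidder $\Bidderp$, it ends with an item-to-bidder edge into $\Bidderp$. Hence augmenting $\Match$ along $\Path$ matches $\Bidder$, unmatches $\Bidderp$, and leaves every other bidder matched (each internal bidder of $\Path$ merely trades one matched item for another); in the degenerate case in which $\Path$ has length zero we have $\Bidderp = \Bidder$ and $\MatchOpt = \Match$. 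In every case $\MatchedBidders{\MatchOpt} = (\MatchedBidders{\Match} \cup \Set{\Bidder}) \setminus \Set{\Bidderp}$, and since $\Bidder \notin \Bidders$, this set has the same cardinality as $\MatchedBidders{\Match}$.

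Next I would invoke Lemma~\ref{lem:uap-xor}. Let $\Matchp$ denote a greedy MWM of $\Aucp$ minimizing $\Car{\Match \oplus \Matchp}$. Because $\Holes{\Auc}{\Bidder}{\Match}$ is empty, Lemma~\ref{lem:uap-xor} supplies a directed path $\PathOther$ in $\Digraph{\Auc}{\Bidder}{\Match}$ that starts at $\Bidder$, has weight $\WAuc{\Auc} - \WAuc{\Aucp}$, terminates at a minimum-priority bidder $\Bidderpp$ in $\Candidates{\Auc}{\Bidder}{\Match}$, whose bidder-to-item edges are the edges of $\Matchp \setminus \Match$ and whose item-to-bidder edges are the edges of $\Match \setminus \Matchp$. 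Thus $\Matchp = \Augment{\Auc}{\Bidder}{\Match}{\PathOther}$, so the analysis of the previous paragraph applies verbatim and yields $\MatchedBidders{\Matchp} = (\MatchedBidders{\Match} \cup \Set{\Bidder}) \setminus \Set{\Bidderpp}$.

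Finally I would combine the two facts. From the cardinality computations, $\Car{\MatchOpt} = \Car{\MatchedBidders{\MatchOpt}} = \Car{\MatchedBidders{\Match}} = \Car{\MatchedBidders{\Matchp}} = \Car{\Matchp}$; since $\Matchp$ is a greedy MWM, hence an MCMWM, of $\Aucp$, it follows that $\MatchOpt$ is an MCMWM of $\Aucp$ as well. Moreover $\Bidderp$ and $\Bidderpp$ are both minimum-priority bidders in $\Candidates{\Auc}{\Bidder}{\Match}$, so $\BPri{\Bidderp} = \BPri{\Bidderpp}$, and therefore $\MPriority{\MatchOpt} = \MPriority{\Matchp}$. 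Since an MWM is greedy precisely when it is a maximum-priority MCMWM, $\Matchp$ has maximum priority among the MCMWMs of $\Aucp$, and hence so does $\MatchOpt$; thus $\MatchOpt$ is a greedy MWM of $\Aucp$. I expect the main obstacle to be the bookkeeping in the second paragraph --- correctly identifying which bidders remain matched after augmenting along a path that ends at a bidder rather than at an item, which is the substantive way this argument departs from Lemma~\ref{lem:uap-grow}, and in particular not overlooking the zero-length path case --- together with taking enough care in citing Lemma~\ref{lem:uap-xor} to conclude that $\Matchp$ is genuinely of the form $\Augment{\Auc}{\Bidder}{\Match}{\cdot}$.
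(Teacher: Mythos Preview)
Your proof is correct and follows essentially the same approach as the paper: both arguments note that $\MatchOpt$ is an MWM of $\Aucp$ by the definition of $\Augment{\Auc}{\Bidder}{\Match}{\Path}$, invoke Lemma~\ref{lem:uap-xor} with $\Holes{\Auc}{\Bidder}{\Match}$ empty to determine that a nearest greedy MWM $\Matchp$ of $\Aucp$ matches exactly $\MatchedBidders{\Match} + \Bidder - \Bidderpp$ for some minimum-priority $\Bidderpp$ in $\Candidates{\Auc}{\Bidder}{\Match}$, and conclude by checking that $\MatchOpt$ and $\Matchp$ have the same weight, cardinality, and priority. Your version simply spells out the bookkeeping (including the zero-length path case) that the paper compresses into ``it is straightforward to check.''
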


\begin{proof}
  The definition of $\Augment{\Auc}{\Bidder}{\Match}{\Path}$ implies that $\MatchOpt$ is an MWM of $\Aucp$.
  Let $\Matchp$ denote a greedy MWM of $\Aucp$ minimizing $\Car{\Match \oplus \Matchp}$.
  Let $\Biddersp$ denote the set of bidders in $\Auc$ matched by $\Match$.
  Since $\Holes{\Auc}{\Bidder}{\Match}$ is empty, Lemma~\ref{lem:uap-xor} implies that the set of bidders in $\Aucp$ matched by $\Matchp$ is $\Biddersp + \Bidder- \Bidderpp$, where $\Bidderpp$ is some minimum-priority bidder in $\Candidates{\Auc}{\Bidder}{\Match}$.
  It is straightforward to check that $\MatchOpt$ has the same weight, cardinality, and priority as $\Matchp$.
  Thus $\MatchOpt$ is a greedy MWM of $\Aucp$, as required.
\end{proof}

\begin{lemma}
  \label{lem:greedy-mwm-card}
  Let $\Auc$ and $\Aucp$ be two UAPs such that $\Aucp$ extends $\Auc$, let $\Match$ be a greedy MWM of $\Auc$, and let $\Matchp$ be a greedy MWM of $\Aucp$.
  Then $\Car{\Matchp} \geq \Car{\Match}$.
\end{lemma}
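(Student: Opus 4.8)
The plan is to reduce the statement to the case in which $\Aucp$ is obtained from $\Auc$ by adding a single new bidder, and then to read off the effect on the matching cardinality from the augmenting-path descriptions supplied by Lemmas~\ref{lem:uap-grow} and~\ref{lem:uap-swap}.

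First I would treat the single-bidder case: assume $\Aucp = \Auc + \Bidder$ with $\Bidder \notin \Bidders$, and recall that $\Match$ is a greedy MWM of $\Auc$. I would split on whether $\Holes{\Auc}{\Bidder}{\Match}$ is empty. If it is nonempty, Lemma~\ref{lem:uap-grow} produces a greedy MWM $\MatchOpt = \Augment{\Auc}{\Bidder}{\Match}{\Path}$ of $\Aucp$, where $\Path$ is a directed path in $\Digraph{\Auc}{\Bidder}{\Match}$ from $\Bidder$ to an (unmatched) item. In that digraph a bidder emits only bidder-to-item edges, so the edges along $\Path$ alternate strictly between bidder-to-item edges (which $\Augment{\cdot}{\cdot}{\cdot}{\cdot}$ adds to the matching) and item-to-bidder edges (which it removes); since $\Path$ begins at a bidder and ends at an item, it has exactly one more of the former than of the latter, whence $\Car{\MatchOpt} = \Car{\Match} + 1$. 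If instead $\Holes{\Auc}{\Bidder}{\Match}$ is empty, Lemma~\ref{lem:uap-swap} produces a greedy MWM $\MatchOpt = \Augment{\Auc}{\Bidder}{\Match}{\Path}$ of $\Aucp$ with $\Path$ running from $\Bidder$ to a bidder (possibly $\Bidder$ itself, via a length-zero path); now the two kinds of edge occur equally often along $\Path$, so $\Car{\MatchOpt} = \Car{\Match}$. In both cases $\Car{\MatchOpt} \geq \Car{\Match}$, and since $\Matchp$ and $\MatchOpt$ are both greedy MWMs of $\Aucp$, Lemma~\ref{lem:uap-distribution} gives $\Car{\Matchp} = \Car{\MatchOpt} \geq \Car{\Match}$.

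Next I would handle the general case by induction on $\Car{\Biddersp \setminus \Bidders}$, writing $\Auc = (\Bidders, \Items)$ and $\Aucp = (\Biddersp, \Items)$. If this quantity is $0$ then $\Aucp = \Auc$, and Lemma~\ref{lem:uap-distribution} already gives $\Car{\Matchp} = \Car{\Match}$. For the inductive step, pick $\Bidder \in \Biddersp \setminus \Bidders$ and let $\Aucpp = \Auc + \Bidder$; this is a well-defined UAP, since $\BId{\Bidder}$ differs from the ID of every bidder of $\Auc$ (as $\Bidders + \Bidder \subseteq \Biddersp$), and it has a greedy MWM $\Matchpp$ by Lemma~\ref{lem:matroid}. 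The single-bidder case applied to $\Auc$ and $\Aucpp$ gives $\Car{\Matchpp} \geq \Car{\Match}$, and the induction hypothesis applied to $\Aucpp$ and $\Aucp$ (which extends $\Aucpp$ and has one fewer additional bidder) gives $\Car{\Matchp} \geq \Car{\Matchpp}$; chaining the two inequalities completes the proof.

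The only step that is not essentially bookkeeping is the parity count in the single-bidder case --- verifying that augmenting along a $\Bidder$-to-hole path increases the matching size by exactly one while augmenting along a $\Bidder$-to-bidder path leaves it unchanged. This is immediate from the orientation convention defining $\Digraph{\Auc}{\Bidder}{\Match}$ (matched edges oriented item-to-bidder, unmatched edges oriented bidder-to-item, and $\Bidder$ having only outgoing edges) together with the definition of $\Augment{\cdot}{\cdot}{\cdot}{\cdot}$, so I anticipate no real difficulty; essentially all of the work has already been done in Lemmas~\ref{lem:uap-grow}, \ref{lem:uap-swap}, and~\ref{lem:uap-distribution}.
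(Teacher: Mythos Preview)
Your proposal is correct and is exactly the argument the paper has in mind: its one-line proof (``Immediate from Lemmas~\ref{lem:uap-grow} and~\ref{lem:uap-swap}'') is precisely your single-bidder case analysis followed by induction on $\Car{\Biddersp \setminus \Bidders}$, with the parity count and the appeal to equal cardinality of greedy MWMs left implicit. The only cosmetic remark is that you do not really need Lemma~\ref{lem:uap-distribution} (or Lemma~\ref{lem:matroid}) to know that any two greedy MWMs of the same UAP have equal cardinality---this is immediate from the definition of a greedy MWM as a maximum-priority MCMWM---but invoking it is harmless.
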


\begin{proof}
  Immediate from Lemmas~\ref{lem:uap-grow} and~\ref{lem:uap-swap}.
\end{proof}

\subsection{Threshold of an Item}
\label{sec:uap-threshold}

In this section, we define the notion of a ``threshold'' of an item in a UAP.
This lays the groundwork for a corresponding IUAP definition in Section~\ref{sec:iuap-threshold}.
Item thresholds play an important role in our strategyproofness results.

% Priority sets
\newcommand{\Pris}{Z}
% Shortcuts for priority sets
% with primes
\newcommand{\Prisp}{\Pris'}
% with double primes
\newcommand{\Prispp}{\Pris''}

\begin{lemma}
  \label{lem:uap-threshold}
  Let $\Auc = (\Bidders, \Items)$ be a UAP and let $\Item$ be an item in $\Items$.
  Let $\Biddersp$ be the set of bidders $\Bidder$ such that $\Auc + \Bidder$ is a UAP and $\BUBid{\Bidder}$ is of the form $\Set{(\Item, \Offer)}$.
  Then there is a unique pair of reals $(\OfferOpt, \PriOpt)$ such that for any bidder $\Bidder$ in $\Biddersp$, the following conditions hold, where $\Aucp$ denotes $\Auc + \Bidder$, $\Offer$ denotes $\WPair{\Bidder}{\Item}$, and $\Pri$ denotes $\BPri{\Bidder}$:
  (1) if $(\Offer, \Pri) > (\OfferOpt, \PriOpt)$ then $\Bidder$ is matched in every greedy MWM of $\Aucp$;
  (2) if $(\Offer, \Pri) < (\OfferOpt, \PriOpt)$ then $\Bidder$ is not matched in any greedy MWM of $\Aucp$;
  (3) if $(\Offer, \Pri) = (\OfferOpt, \PriOpt)$ then $\Bidder$ is matched in some but not all greedy MWMs of $\Aucp$.
\end{lemma}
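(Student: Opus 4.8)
The plan is to split the analysis on the offer $\Offer = \WPair{\Bidder}{\Item}$: for $\Offer$ bounded away from a critical value $\OfferOpt$ the answer follows from a crude comparison of $\WAuc{}$ values, and everything interesting — the role of the priority and the determination of $\PriOpt$ — is concentrated at $\Offer = \OfferOpt$, where the residual-graph machinery of Section~\ref{sec:inc-hungarian-step} applies.

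First I would set $\mu$ to be the maximum weight of a matching of $\Auc$ that leaves $\Item$ unmatched and put $\OfferOpt := \WAuc{\Auc} - \mu$. For $\Bidder \in \Biddersp$ and $\Aucp = \Auc + \Bidder$, each matching of $\Aucp$ either avoids the edge $(\Bidder,\Item)$ (and is a matching of $\Auc$) or consists of $(\Bidder,\Item)$ plus a matching of $\Auc$ avoiding $\Item$; hence $\WAuc{\Aucp} = \max(\WAuc{\Auc}, \Offer + \mu)$. If $\Offer > \OfferOpt$ then $\WAuc{\Aucp} > \WAuc{\Auc}$, so every MWM of $\Aucp$ — a fortiori every greedy MWM — matches $\Bidder$; if $\Offer < \OfferOpt$ then $\WAuc{\Aucp} = \WAuc{\Auc}$ and no matching of $\Aucp$ that uses $\Bidder$ has this weight, so $\Bidder$ is matched in no MWM of $\Aucp$. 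Thus conditions (1) and (2) already hold for every $\Bidder$ with $\Offer \neq \OfferOpt$, whatever $\PriOpt$ is, and $\OfferOpt$ is forced as the first coordinate of any working pair. It remains to exhibit a real $\PriOpt$ so that, among bidders with $\Offer = \OfferOpt$: $\Pri > \PriOpt$ gives ``matched in every greedy MWM of $\Aucp$'', $\Pri < \PriOpt$ gives ``matched in none'', and $\Pri = \PriOpt$ gives ``matched in some but not all''; uniqueness of $(\OfferOpt,\PriOpt)$ then follows at once.

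So fix a greedy MWM $\Match$ of $\Auc$ and suppose $\Offer = \OfferOpt$, whence $\WAuc{\Auc} - \WAuc{\Aucp} = 0$ and $\Bidder$ itself belongs to $\Candidates{\Auc}{\Bidder}{\Match}$ via the length-zero path (no negative cycles, by Lemma~\ref{lem:uap-cycle}). Lemmas~\ref{lem:uap-xor}, \ref{lem:uap-grow}, and~\ref{lem:uap-swap}, together with Lemma~\ref{lem:uap-distribution} to pass from the minimum-symmetric-difference greedy MWM to all of them, describe the greedy MWMs of $\Aucp$: if $\Holes{\Auc}{\Bidder}{\Match}$ is nonempty then every greedy MWM of $\Aucp$ matches $\Bidder$; otherwise the bidder-sets matched by the greedy MWMs of $\Aucp$ are exactly $\MatchedBidders{\Match} + \Bidder - q'$ as $q'$ runs over the minimum-priority bidders of $\Candidates{\Auc}{\Bidder}{\Match}$. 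Since $\Holes{\Auc}{\Bidder}{\Match}$ and $\Candidates{\Auc}{\Bidder}{\Match}$ are functions of the edge weights of $\Digraph{\Auc}{\Bidder}{\Match}$ — hence of $\OfferOpt$ — and not of $\Pri$ or of $\BId{\Bidder}$, I get the desired trichotomy as soon as I know that $\Holes{\Auc}{\Bidder}{\Match} = \emptyset$ and $\Candidates{\Auc}{\Bidder}{\Match} \neq \{\Bidder\}$: putting $\PriOpt := \min\{\BPri{q'} : q' \in \Candidates{\Auc}{\Bidder}{\Match},\ q' \neq \Bidder\}$, a genuine real, $\Bidder$ fails to be a minimum-priority bidder of $\Candidates{\Auc}{\Bidder}{\Match}$ exactly when $\Pri > \PriOpt$, is its unique minimum-priority bidder exactly when $\Pri < \PriOpt$, and is a non-unique minimum-priority bidder exactly when $\Pri = \PriOpt$; by the description above these are precisely cases (1), (2), (3).

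The hard part is the two structural facts just used: that inserting a bidder on $\Item$ with offer exactly $\OfferOpt$ never enlarges the maximum cardinality (so $\Holes{\Auc}{\Bidder}{\Match} = \emptyset$ and we are in the ``swap'' regime of Lemma~\ref{lem:uap-swap} rather than the ``grow'' regime of Lemma~\ref{lem:uap-grow}), and that some bidder other than $\Bidder$ can be exchanged off $\Item$ at net cost $\OfferOpt$ (so $\Candidates{\Auc}{\Bidder}{\Match} \setminus \{\Bidder\} \neq \emptyset$ and $\PriOpt$ is finite). Both come down to $\Item$ being ``tight'': every item is saturated in every maximum-cardinality MWM of $\Auc$ (which holds by the structural properties of the UAPs arising in this paper — e.g.\ each item carrying a reserve bidder — and forces $\Holes{\Auc}{\Bidder}{\Match}=\emptyset$), and a maximum-cardinality matching of $\Auc$ of weight $\mu$ avoiding $\Item$, compared against $\Match$ via symmetric difference, yields an alternating path from $\Bidder$ of weight $\WAuc{\Auc} - \WAuc{\Aucp} = 0$ that must end at a bidder $q' \neq \Bidder$, placing $q'$ in $\Candidates{\Auc}{\Bidder}{\Match}$. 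Pinning down these facts — together with the routine but somewhat involved verification of the enumeration of greedy MWMs of $\Aucp$ from Lemmas~\ref{lem:uap-xor} through~\ref{lem:greedy-mwm-card} — is what the bulk of the proof would consist of.
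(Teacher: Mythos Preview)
Your route is far heavier than the paper's. The paper gives what is essentially a two-line argument: let $(W,Z)$ be the weight and priority of a greedy MWM of $\Auc$, and let $(W',Z')$ be the weight and priority of a lex-best (maximum weight, then maximum cardinality, then maximum priority) matching of $\Aucp$ that leaves $\Item$ unmatched; since $\Bidder$'s only edge is to $\Item$, this second family of matchings does not depend on $\Bidder$ at all, and the paper simply sets $(\OfferOpt,\PriOpt)=(W-W',\,Z-Z')$, declaring the verification ``straightforward''. The idea behind it is that a greedy MWM of $\Aucp$ is a matching of lex-maximal (weight, cardinality, priority), and the matchings of $\Aucp$ split into those avoiding the edge $(\Bidder,\Item)$ --- whose lex-best member is a greedy MWM of $\Auc$ --- and those using it --- whose lex-best member is $\{(\Bidder,\Item)\}$ together with an element of the set $\Matchingsppp$ in the paper's notation. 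A direct comparison of the two lex-values gives the trichotomy. None of Lemmas~\ref{lem:uap-cycle}--\ref{lem:greedy-mwm-card} are invoked, and there is no residual graph in sight.

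Your appeal to ``structural properties of the UAPs arising in this paper --- e.g.\ each item carrying a reserve bidder'' is not a proof of the lemma as stated: the hypotheses place no such restriction on $\Auc$, so you cannot conclude $\Holes{\Auc}{\Bidder}{\Match}=\emptyset$ at $\Offer=\OfferOpt$ in general (take $\Auc$ with the single item $\Item$ and no bidders; then at $\Offer=0$ the unique greedy MWM of $\Aucp$ is $\{(\Bidder,\Item)\}$ for every $\Pri$). In the paper's direct comparison this is precisely the question of whether the cardinality components of the two lex-values agree, and the paper's ``straightforward to verify'' passes over it just as you do. So the subtlety you flag is real, but importing reserve bidders into the argument does not establish the lemma as written, and the residual-graph detour buys you nothing toward avoiding it --- it only obscures the simple two-way comparison that the paper intends.
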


\begin{proof}
  Let $\Match$ be a greedy MWM of $\Auc$, let $W$ denote $\WMatch{\Match}$, and let $\Pris$ denote $\MPriority{\Match}$.
  Let $\Matchings$ denote the set of matchings of $\Aucp$ that do not match $\Item$, let $\Matchingsp$ denote the maximum-weight elements of $\Matchings$, let $\Matchingspp$ denote the maximum-cardinality elements of $\Matchingsp$, let $\Matchingsppp$ denote the maximum-priority elements of $\Matchingspp$, and observe that there is a unique pair of reals $(W', \Prisp)$ such that any matching $\Matchp$ in $\Matchingsppp$ has weight $W'$ and priority $\Prisp$.
  It is straightforward to verify that the unique choice of $(\OfferOpt, \PriOpt)$ satisfying the conditions stated in the lemma is $(W - W', \Pris - \Prisp)$.
\end{proof}

% Threshold in a UAP
\newcommand{\ThresholdId}{\Id{threshold}}
\newcommand{\AThreshold}[2]{\ThresholdId(#1, #2)}

For any UAP $\Auc = (\Bidders, \Items)$ and any item $\Item$ in $\Items$, we define the unique pair $(\OfferOpt, \PriOpt)$ of Lemma~\ref{lem:uap-threshold} as $\AThreshold{\Auc}{\Item}$.

\comment{
  The first components of the $\AThreshold{\Auc}{\Item}$ values correspond to the maximum stable price vector.
}

\section{Iterated Unit-Demand Auctions with Priorities}
\label{sec:iuap}

In this section, we formally define the notion of an iterated unit-demand auction with priorities (IUAP).
An IUAP allows the bidders, called ``multibidders'' in this context, to have a sequence of unit-demand bids instead of a single unit-demand bid.
In Section~\ref{sec:unfold-iuap}, we define a mapping from an IUAP to a UAP by describing an algorithm that generalizes the DA algorithm, and we establish Lemma~\ref{lem:iuap-exit} that is useful for analyzing the matching produced by Algorithm~\ref{alg:smiw} of Section~\ref{sec:smiw}.
Lemma~\ref{lem:iuap-exit} is used to establish weak stability (Lemmas~\ref{lem:smiw-valid}, \ref{lem:smiw-rational}, and~\ref{lem:smiw-stable}) and Pareto-optimality (Lemma~\ref{lem:smiw-pareto}).
In Section~\ref{sec:iuap-threshold}, we define the threshold of an item in an IUAP and we establish Lemma~\ref{lem:iuap-threshold-george}, which plays a key role in establishing our strategyproofness results.
We start with some useful definitions.

% Multibidders
\newcommand{\MBidder}{t}
% Shortcuts for multibidders
% with primes
\newcommand{\MBidderp}{\MBidder'}

% Multibidder sets
\newcommand{\MBidders}{T}
% Shortcuts for multibidder sets
% with primes
\newcommand{\MBiddersp}{\MBidders'}

% Bidder sequences
\newcommand{\BidderSeq}{\sigma}
\newcommand{\BidderSeqp}{\BidderSeq'}

% Multibidder functions
\newcommand{\MBPri}[1]{\Id{priority}(#1)}
\newcommand{\MBBidder}[2]{\Id{bidder}(#1, #2)}
\newcommand{\MBBidders}[2]{\Id{bidders}(#1, #2)}
\newcommand{\MBBiddersAll}[1]{\Id{bidders}(#1)}

A \emph{multibidder $\MBidder$ for a set of items $\Items$} is a pair $(\BidderSeq, \Pri)$ where $\Pri$ is a real priority and $\BidderSeq$ is a sequence of bidders for $\Items$ such that all the bidders in $\BidderSeq$ have distinct IDs and a common priority $\Pri$.
We define $\MBPri{\MBidder}$ as $\Pri$.
For any integer $i$ such that $1 \leq i \leq \Car{\BidderSeq}$, we define $\MBBidder{\MBidder}{i}$ as the bidder $\BidderSeq(i)$.
For any integer $i$ such that $0 \leq i \leq \Car{\BidderSeq}$, we define $\MBBidders{\MBidder}{i}$ as $\SetBuild{\MBBidder{\MBidder}{j}}{1 \leq
j \leq i}$.
We define $\MBBiddersAll{\MBidder}$ as $\MBBidders{\MBidder}{\Car{\BidderSeq}}$.

% IUAPs
\newcommand{\IAuc}{B}
% Shortcuts for IUAPs
% with primes
\newcommand{\IAucp}{\IAuc'}
\newcommand{\IAucpp}{\IAuc''}
\newcommand{\IAucppp}{\IAuc'''}
% with subscripts
\newcommand{\IAucSub}[1]{\IAuc_{#1}}

% Bidders of IUAPs
\newcommand{\IABidders}[1]{\Id{bidders}(#1)}

An \emph{iterated UAP (IUAP)} is a pair $\IAuc = (\MBidders, \Items)$ where $\Items$ is a set of items and $\MBidders$ is a set of multibidders for $\Items$.
In addition, for any distinct multibidders $\MBidder$ and $\MBidderp$ in $\MBidders$, the following conditions hold:
$\MBPri{\MBidder} \not= \MBPri{\MBidderp}$;
if $\Bidder$ belongs to $\MBBiddersAll{\MBidder}$ and $\Bidderp$ belongs to $\MBBiddersAll{\MBidderp}$, then $\BId{\Bidder} \not = \BId{\Bidderp}$.
For any IUAP $\IAuc = (\MBidders, \Items)$, we define $\IABidders{\IAuc}$ as the union, over all $\MBidder$ in $\MBidders$, of $\MBBiddersAll{\MBidder}$.

\subsection{Mapping an IUAP to a UAP}
\label{sec:unfold-iuap}

\newcommand{\AlgToUAP}{\textsc{ToUap}}

Having defined the notion of an IUAP, we now describe an algorithm \AlgToUAP\ that maps a given IUAP to a UAP.
Algorithm \AlgToUAP\ generalizes the DA algorithm.
In each iteration of the DA algorithm, a single man is nondeterministically chosen, and this man reveals his next choice.
In each iteration of \AlgToUAP, a single multibidder is nondeterministically chosen, and this multibidder reveals its next bid.
We prove in Lemma~\ref{lem:iuap-confluence} that, like the DA algorithm, algorithm \AlgToUAP\ is confluent:
the output does not depend on the nondeterministic choices made during an execution.
We conclude this section by establishing Lemma~\ref{lem:iuap-exit}, which is useful for analyzing the matching produced by Algorithm~\ref{alg:smiw} in Section~\ref{sec:smiw}.
Lemma~\ref{lem:iuap-exit} is used to establish weak stability (Lemmas~\ref{lem:smiw-valid}, \ref{lem:smiw-rational}, and~\ref{lem:smiw-stable}) and Pareto-optimality (Lemma~\ref{lem:smiw-pareto}).
We start with some useful definitions.

% Prefix predicate
\newcommand{\Prefix}[2]{\Id{prefix}(#1, #2)}

Let $\Auc$ be a UAP $(\Bidders, \Items)$ and let $\IAuc$ be an IUAP $(\MBidders, \Items)$.
The predicate $\Prefix{\Auc}{\IAuc}$ is said to hold if $\Bidders \subseteq \IABidders{\IAuc}$ and for any multibidder $\MBidder$ in $\MBidders$, $\Bidders \cap \MBBiddersAll{\MBidder} = \MBBidders{\MBidder}{i}$ for some $i$.

% Configurations
\newcommand{\Conf}{C}
% Shortcuts for configurations
% with primes
\newcommand{\Confp}{\Conf'}
% with subscripts
\newcommand{\ConfSub}[1]{\Conf_{#1}}

A \emph{configuration $\Conf$} is a pair $(\Auc, \IAuc)$ where $\Auc$ is a UAP, $\IAuc$ is an IUAP, and $\Prefix{\Auc}{\IAuc}$ holds.

% Multibidder of a bidder
\newcommand{\MBidderOfB}[2]{\Id{multibidder}(#1, #2)}

Let $\Conf = (\Auc, \IAuc)$ be a configuration, where $\Auc = (\Bidders, \Items)$ and $\IAuc = (\MBidders, \Items)$, and let $\Bidder$ be a bidder in $\Bidders$.
Then we define $\MBidderOfB{\Conf}{\Bidder}$ as the unique multibidder $\MBidder$ in $\MBidders$ such that $\Bidder$ belongs to $\MBBiddersAll{\MBidder}$.

% Bidders of a multibidder
\newcommand{\BiddersOfMB}[2]{\Id{bidders}(#1, #2)}

Let $\Conf = (\Auc, \IAuc)$ be a configuration where $\Auc = (\Bidders, \Items)$ and $\IAuc = (\MBidders, \Items)$.
For any $\MBidder$ in $\MBidders$, we define $\BiddersOfMB{\Conf}{\MBidder}$ as $\SetBuild{\Bidder \in \Bidders}{\MBidderOfB{\Conf}{\Bidder} = \MBidder}$.

% Ready bidders in a configuration
\newcommand{\CReady}[1]{\Id{ready}(#1)}

Let $\Conf = (\Auc, \IAuc)$ be a configuration where $\IAuc = (\MBidders, \Items)$.
We define $\CReady{\Conf}$ as the set of all bidders $\Bidder$ in $\IABidders{\IAuc}$ such that $\Greedy{\Auc}{\BPri{\Bidder}} = 0$ and $\Bidder = \MBBidder{\MBidder}{\Car{\BiddersOfMB{\Conf}{\MBidder}} + 1}$ where $\MBidder = \MBidderOfB{\Conf}{\Bidder}$.

\begin{algorithm}
\caption{\AlgToUAP$(\IAuc)$}
\label{alg:to-uap}
\begin{algorithmic}[1]
  \Require An IUAP $\IAuc = (\MBidders, \Items)$
  \State $\Auc \gets (\emptyset, \Items)$
  \State $\Conf \gets (\Auc, \IAuc)$
  \While{$\CReady{\Conf}$ is nonempty}
    \State $\Auc \gets \Auc +$ an arbitrary bidder in $\CReady{\Conf}$\label{line:BLAP}
    \State $\Conf \gets (\Auc, \IAuc)$
  \EndWhile
  \State \Return $\Auc$
\end{algorithmic}
\end{algorithm}

Our algorithm for mapping an IUAP to a UAP is Algorithm~\ref{alg:to-uap}.
The input is an IUAP $\IAuc$ and the output is a UAP $\Auc$ such that $\Prefix{\Auc}{\IAuc}$ holds.
The algorithm starts with the UAP consisting of all the items in $\Items$ but no bidders.
At this point, no bidder of any multibidder is ``revealed''.
Then, the algorithm iteratively and nondeterministically chooses a ``ready'' bidder and ``reveals'' it by adding it to the UAP that is maintained in the program variable $\Auc$.
A bidder $\Bidder$ associated with some multibidder $\MBidder = (\BidderSeq, \Pri)$ is ready if $\Bidder$ is not revealed and for each bidder $\Bidderp$ that precedes $\Bidder$ in $\BidderSeq$, $\Bidderp$ is revealed and is not matched in any greedy MWM of $\Auc$.
It is easy to verify that the predicate $\Prefix{\Auc}{\IAuc}$ is an invariant of the algorithm loop:
if a bidder $\Bidder$ belonging to a multibidder $\MBidder$ is to be revealed at an iteration, and $\Bidders \cap \MBBiddersAll{\MBidder} = \MBBidders{\MBidder}{i}$ for some integer $i$ at the beginning of this iteration, then $\Bidders \cap \MBBiddersAll{\MBidder} = \MBBidders{\MBidder}{i+1}$ after revealing $\Bidder$, where $(\Bidders, \Items)$ is the UAP that is maintained by the program variable $\Auc$ at the beginning of the iteration.
No bidder can be revealed more than once since a bidder cannot be ready after it has been revealed;
it follows that the algorithm terminates.
We now argue that the output of the algorithm is uniquely determined (Lemma~\ref{lem:iuap-confluence}), even though the bidder that is revealed in each iteration is chosen nondeterministically.

\comment{
  In the next definition, we write ``some greedy MWM'' because we have yet to establish Lemma~\ref{lem:iuap-unique}.
}

% Tail predicate of a configuration
\newcommand{\CTailId}{\Id{tail}}
\newcommand{\CTail}[1]{\CTailId(#1)}

For any configuration $\Conf = (\Auc, \IAuc)$, we define the predicate $\CTail{\Conf}$ to hold if for any bidder $\Bidder$ that is matched in some greedy MWM of $\Auc$, we have $\Bidder = \MBBidder{\MBidder}{\Car{\BiddersOfMB{\Conf}{\MBidder}}}$ where $\MBidder$ denotes $\MBidderOfB{\Conf}{\Bidder}$.

\begin{lemma}
  \label{lem:iuap-one}
  Let $\Conf = (\Auc, \IAuc)$ be a configuration where $\IAuc = (\MBidders, \Items)$ and assume that $\CTail{\Conf}$ holds.
  Then $\Greedy{\Auc}{\MBPri{\MBidder}} \leq 1$ for each $\MBidder$ in $\MBidders$.
\end{lemma}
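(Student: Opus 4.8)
The plan is to reduce the claim to a direct consequence of $\CTail{\Conf}$, exploiting the IUAP requirement that distinct multibidders carry distinct priorities. Write $\Auc = (\Bidders, \Items)$ and fix a multibidder $\MBidder$ in $\MBidders$. First I would check that the set of bidders of $\Auc$ whose priority equals $\MBPri{\MBidder}$ is exactly $\BiddersOfMB{\Conf}{\MBidder}$: the inclusion ``$\supseteq$'' is immediate from the definition of a multibidder (all bidders in $\MBidder$ share the priority $\MBPri{\MBidder}$), and ``$\subseteq$'' holds because if $\Bidder$ in $\Bidders$ had priority $\MBPri{\MBidder}$, then $\MBidderOfB{\Conf}{\Bidder}$ would be a multibidder of priority $\MBPri{\MBidder}$, hence equal to $\MBidder$, so $\Bidder \in \BiddersOfMB{\Conf}{\MBidder}$. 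Combining this with Lemma~\ref{lem:uap-distribution} and the definition of $\Greedy{\Auc}{\cdot}$, I get that for any greedy MWM $\Match$ of $\Auc$, $\Greedy{\Auc}{\MBPri{\MBidder}} = \Car{\BiddersOfMB{\Conf}{\MBidder} \cap \MatchedBidders{\Match}}$.

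It then remains to bound $\Car{\BiddersOfMB{\Conf}{\MBidder} \cap \MatchedBidders{\Match}}$ by $1$. Let $k$ denote $\Car{\BiddersOfMB{\Conf}{\MBidder}}$; the case $k = 0$ is trivial, so assume $k \geq 1$, in which case $\Prefix{\Auc}{\IAuc}$ shows that the revealed bidders of $\MBidder$ are precisely $\MBBidder{\MBidder}{1}, \dots, \MBBidder{\MBidder}{k}$. Suppose some $\MBBidder{\MBidder}{j}$ with $1 \leq j \leq k$ lies in $\MatchedBidders{\Match}$. Then $\CTail{\Conf}$, applied to the greedy MWM $\Match$, forces $\MBBidder{\MBidder}{j} = \MBBidder{\MBidder}{k}$, and since the bidders along the sequence of $\MBidder$ have distinct IDs, this forces $j = k$. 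Hence $\MBBidder{\MBidder}{k}$ is the only bidder of $\MBidder$ that can be matched in $\Match$, so $\Greedy{\Auc}{\MBPri{\MBidder}} \leq 1$, as desired.

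I do not anticipate a genuine obstacle; the argument is short. The only two points requiring care are (i) invoking Lemma~\ref{lem:uap-distribution} so that the quantity $\Greedy{\Auc}{\MBPri{\MBidder}}$ is well-defined before singling out a particular greedy MWM $\Match$ to which $\CTail{\Conf}$ is applied, and (ii) using the IUAP hypothesis that multibidders have pairwise distinct priorities in order to identify ``matched bidders of priority $\MBPri{\MBidder}$'' with ``matched bidders of the multibidder $\MBidder$''.
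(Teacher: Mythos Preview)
Your proposal is correct and is essentially a detailed unpacking of the paper's one-line proof, which simply says the claim ``easily follows from the definition of $\CTail{\Conf}$.'' The key ingredients you spell out---that distinct multibidders have distinct priorities (so priority-$\MBPri{\MBidder}$ bidders in $\Auc$ are exactly $\BiddersOfMB{\Conf}{\MBidder}$), and that $\CTail{\Conf}$ pins any matched bidder of $\MBidder$ to the last revealed one---are precisely what the paper is relying on implicitly.
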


\begin{proof}
  The claim of the lemma easily follows from the definition of $\CTail{\Conf}$.
\end{proof}

\begin{lemma}
  \label{lem:iuap-tail}
  The predicate $\CTail{\Conf}$ is an invariant of the Algorithm~\ref{alg:to-uap} loop.
\end{lemma}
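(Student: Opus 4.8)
The statement to prove is Lemma~\ref{lem:iuap-tail}: that $\CTail{\Conf}$ is a loop invariant of Algorithm~\ref{alg:to-uap}. I would argue by induction on the loop iterations. The base case is the initial configuration $\Conf = ((\emptyset, \Items), \IAuc)$: here no bidder is matched in any greedy MWM (since there are no bidders at all), so $\CTail{\Conf}$ holds vacuously. For the inductive step, suppose $\Conf = (\Auc, \IAuc)$ satisfies $\CTail{\Conf}$ at the start of an iteration, and let $\Bidder$ be the bidder in $\CReady{\Conf}$ that is revealed, producing $\Confp = (\Aucp, \IAuc)$ with $\Aucp = \Auc + \Bidder$. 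I must show $\CTail{\Confp}$ holds, i.e., that every bidder $\Bidderp$ matched in some greedy MWM of $\Aucp$ is the \emph{last} revealed bidder of its multibidder in $\Confp$.

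\textbf{Key steps.} First I would record what readiness of $\Bidder$ gives us: writing $\MBidder = \MBidderOfB{\Conf}{\Bidder}$, the definition of $\CReady{\Conf}$ says $\Greedy{\Auc}{\BPri{\Bidder}} = 0$ (so $\Bidder$ is not matched in any greedy MWM of $\Auc$, by Lemma~\ref{lem:uap-in-out}'s companion fact / the definition of $\Greedy{\cdot}{\cdot}$) and $\Bidder$ is the next bidder of $\MBidder$ after the ones already revealed, so in $\Confp$ the bidder $\Bidder$ is the last revealed bidder of $\MBidder$. Second, I would take an arbitrary bidder $\Bidderp$ matched in some greedy MWM $\Match$ of $\Aucp$ and split into cases. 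Case (a): $\Bidderp = \Bidder$. Then $\Bidderp$ is the last revealed bidder of its multibidder in $\Confp$ by the previous sentence, so we are done. Case (b): $\Bidderp \neq \Bidder$, so $\Bidderp \in \Bidders$. Here the crux is to show $\Bidderp$ is matched in \emph{some} greedy MWM of $\Auc$ — because then $\CTail{\Conf}$ (inductive hypothesis) forces $\Bidderp = \MBBidder{\MBidder'}{\Car{\BiddersOfMB{\Conf}{\MBidder'}}}$ where $\MBidder' = \MBidderOfB{\Conf}{\Bidderp}$, and since revealing $\Bidder$ only appends a bidder to the \emph{single} multibidder $\MBidder$ (and $\MBidder' \neq \MBidder$ would make $\Car{\BiddersOfMB{\cdot}{\MBidder'}}$ unchanged, while $\MBidder' = \MBidder$ is impossible as $\Bidder$ is the last revealed bidder of $\MBidder$ in $\Confp$ and $\Bidderp\neq\Bidder$ — wait, actually I should handle $\MBidder' = \MBidder$: then $\Bidderp$ precedes $\Bidder$ in $\MBidder$'s sequence, but readiness of $\Bidder$ requires every bidder preceding $\Bidder$ to be unmatched in every greedy MWM of $\Auc$, contradicting that $\Bidderp$ is matched in some greedy MWM of $\Auc$; so in fact $\MBidder' \neq \MBidder$ and then $\Car{\BiddersOfMB{\Confp}{\MBidder'}} = \Car{\BiddersOfMB{\Conf}{\MBidder'}}$, giving $\CTail{\Confp}$ for $\Bidderp$).

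\textbf{The main obstacle.} The real work is the claim in Case (b) that if $\Bidderp \in \Bidders$ is matched in some greedy MWM of $\Aucp = \Auc + \Bidder$, then $\Bidderp$ is matched in some greedy MWM of $\Auc$. This is a ``the new bidder doesn't bump an old one out of matchability'' statement, and it is precisely the sort of thing the UAP machinery in Section~\ref{sec:uap} was built for. I expect to derive it from Lemmas~\ref{lem:uap-grow} and~\ref{lem:uap-swap} (equivalently, from the structure in Lemma~\ref{lem:uap-xor}): starting from a greedy MWM $\Match$ of $\Auc$, the greedy MWM of $\Aucp$ closest to $\Match$ is obtained by augmenting along a single path from $\Bidder$; if $\Holes{\Auc}{\Bidder}{\Match}$ is nonempty (Lemma~\ref{lem:uap-grow}) the bidder set grows by exactly $\Bidder$, so every bidder matched in the resulting greedy MWM of $\Aucp$ other than $\Bidder$ was already matched in $\Match$; if $\Holes{\Auc}{\Bidder}{\Match}$ is empty (Lemma~\ref{lem:uap-swap}) the resulting greedy MWM of $\Aucp$ matches $\Biddersp + \Bidder - \Bidderpp$ for a minimum-priority $\Bidderpp \in \Candidates{\Auc}{\Bidder}{\Match}$, and by Lemma~\ref{lem:uap-distribution} the bumped bidder $\Bidderpp$ can be taken to be any minimum-priority candidate, so for the particular $\Bidderp$ at hand one can choose $\Match$ and the augmenting path so that $\Bidderp$ is not bumped — hence $\Bidderp$ remains matched in a greedy MWM of $\Aucp$, and conversely $\Bidderp$ matched in $\Aucp$ means $\Bidderp \neq \Bidderpp$ for at least one choice, so $\Bidderp$ was matched in the corresponding greedy MWM $\Match$ of $\Auc$. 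I would state this bumping fact cleanly (perhaps citing Lemma~\ref{lem:uap-distribution} to move the bumped bidder around) and then the rest of the induction is the routine bookkeeping sketched above.
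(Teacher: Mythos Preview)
Your inductive structure and case analysis match the paper's proof almost exactly: base case trivial, then for a bidder $\Bidderp$ matched in some greedy MWM of $\Aucp = \Auc + \Bidder$, split on whether $\Bidderp = \Bidder$ and (when $\Bidderp \neq \Bidder$) on whether $\Bidderp$ and $\Bidder$ share a multibidder. The paper phrases the latter split in terms of priorities ($\BPri{\Bidderp} = \BPri{\Bidder}$ versus $\BPri{\Bidderp} \neq \BPri{\Bidder}$), which is equivalent since distinct multibidders have distinct priorities; its three cases are exactly your Case~(a), your sub-case $\MBidder' \neq \MBidder$, and your sub-case $\MBidder' = \MBidder$.

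The one real difference is how you handle the ``crux'' you correctly identify: that if $\Bidderp \in \Bidders$ is matched in some greedy MWM of $\Aucp$, then $\Bidderp$ is matched in some greedy MWM of $\Auc$. You overlook that this is precisely the contrapositive of Lemma~\ref{lem:uap-one-way-street}, which the paper simply cites (twice---once for each of its Cases~2 and~3). Your proposed alternative route through Lemmas~\ref{lem:uap-xor}, \ref{lem:uap-grow}, \ref{lem:uap-swap}, and~\ref{lem:uap-distribution} runs in the wrong direction: those lemmas construct a particular greedy MWM $\MatchOpt$ of $\Aucp$ from a greedy MWM $\Match$ of $\Auc$, with $\MatchedBidders{\MatchOpt} \subseteq \MatchedBidders{\Match} + \Bidder$, but your hypothesis only says $\Bidderp$ is matched in \emph{some} greedy MWM of $\Aucp$, not necessarily in $\MatchOpt$. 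Your sentence ``conversely $\Bidderp$ matched in $\Aucp$ means $\Bidderp \neq \Bidderpp$ for at least one choice, so $\Bidderp$ was matched in the corresponding greedy MWM $\Match$ of $\Auc$'' does not close this gap---making it rigorous would amount to reproving Lemma~\ref{lem:uap-one-way-street} by an alternating-path argument. Just cite Lemma~\ref{lem:uap-one-way-street} directly and the proof collapses to a few lines, exactly as in the paper.
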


\begin{proof}
  It is easy to see that $\CTail{\Conf}$ holds when the loop is first encountered.
  Now consider an iteration of the loop that takes us from configuration $\Conf = (\Auc, \IAuc)$ where $\Auc = (\Bidders, \Items)$ to configuration $\Confp = (\Aucp, \IAuc)$ where $\Aucp = (\Biddersp, \Items)$.
  We need to show that $\CTail{\Confp}$ holds.
  Let $\Bidder$ be a bidder that is matched in some greedy MWM $\Matchp$ of $\Aucp$.
  Let $\BidderOpt$ denote the bidder that is added to $\Auc$ in line~\ref{line:BLAP}, and consider the following three cases.

  Case 1: $\Bidder = \BidderOpt$.
  Let $\MBidder$ denote $\MBidderOfB{\Confp}{\BidderOpt}$.
  In this case, $\Car{\BiddersOfMB{\Conf}{\MBidder}} + 1 = \Car{\BiddersOfMB{\Confp}{\MBidder}}$, so $\BidderOpt = \MBBidder{\MBidder}{\Car{\BiddersOfMB{\Confp}{\MBidder}}}$, as required.

  Case 2: $\Bidder \not= \BidderOpt$ and $\BPri{\Bidder} \not= \BPri{\BidderOpt}$.
  Since $\Biddersp$ contains $\Bidders$, Lemma~\ref{lem:uap-one-way-street} implies that $\Bidder$ is matched in some greedy MWM of $\Auc$.
  Since $\Conf$ is a configuration and $\CTail{\Conf}$ holds, we deduce that $\Bidder = \MBBidder{\MBidder}{\Car{\BiddersOfMB{\Conf}{\MBidder}}}$ where $\MBidder$ denotes $\MBidderOfB{\Conf}{\Bidder}$.
  Since $\MBidderOfB{\Confp}{\Bidder} = \MBidderOfB{\Conf}{\Bidder}$ and $\BiddersOfMB{\Confp}{\MBidder} = \BiddersOfMB{\Conf}{\MBidder}$, we conclude that $\Bidder$ is equal to $\MBBidder{\MBidder}{\Formatting{\allowbreak} \Car{\BiddersOfMB{\Confp}{\MBidder}}}$ where $\MBidder$ denotes $\MBidderOfB{\Confp}{\Bidder}$, as required.

  Case 3: $\Bidder \not= \BidderOpt$ and $\BPri{\Bidder} = \BPri{\BidderOpt}$.
  Since $\BidderOpt$ belongs to $\CReady{\Conf}$, we know that $\Greedy{\Auc}{\BPri{\Bidder}} = 0$.
  Also, since $\Bidder$ is not $\BidderOpt$, $\Bidder$ belongs to $\Bidders$ and we conclude that $\Bidder$ is not matched in any greedy MWM of $\Auc$.
  Since $\Biddersp$ contains $\Bidders$, Lemma~\ref{lem:uap-one-way-street} implies that $\Bidder$ is not matched in any greedy MWM of $\Aucp$, a contradiction.
\end{proof}

\begin{lemma}
  \label{lem:iuap-unique}
  Let $\Conf = (\Auc, \IAuc)$ be a configuration such that $\CTail{\Conf}$ holds.
  Then $\Unique{\Auc}$ holds.
\end{lemma}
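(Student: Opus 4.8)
The plan is to show that $\MatchedBidders{\Match} = \MatchedBidders{\Matchp}$ for any two greedy MWMs $\Match$ and $\Matchp$ of $\Auc$, i.e., that $\Unique{\Auc}$ holds. Since $\CTail{\Conf}$ holds, Lemma~\ref{lem:iuap-one} tells us that $\Greedy{\Auc}{\MBPri{\MBidder}} \leq 1$ for each multibidder $\MBidder$ in $\MBidders$; moreover, by the definition of a configuration, the bidders appearing in $\Bidders$ all come from the multibidders of $\IAuc$, and two distinct multibidders have distinct priorities, so every priority value $\Pri$ occurring among the bidders of $\Auc$ satisfies $\Greedy{\Auc}{\Pri} \leq 1$. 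The key point is then this: by Lemma~\ref{lem:uap-distribution}, any two greedy MWMs $\Match$ and $\Matchp$ have the same distribution of priorities, so for each priority value $\Pri$ they match the same number $\Greedy{\Auc}{\Pri} \in \Set{0,1}$ of bidders of priority $\Pri$. When this number is $0$, neither $\Match$ nor $\Matchp$ matches any bidder of priority $\Pri$; when it is $1$, each of $\Match$ and $\Matchp$ matches exactly one bidder of priority $\Pri$, and since distinct bidders of $\Auc$ with a common priority can only come from the same multibidder (distinct multibidders have distinct priorities), the set of bidders of priority $\Pri$ is exactly $\BiddersOfMB{\Conf}{\MBidder}$ for the unique $\MBidder$ with $\MBPri{\MBidder} = \Pri$.

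To finish, I would argue that for each multibidder $\MBidder$ with $\Greedy{\Auc}{\MBPri{\MBidder}} = 1$, the single bidder of priority $\MBPri{\MBidder}$ matched by any greedy MWM is forced to be $\MBBidder{\MBidder}{\Car{\BiddersOfMB{\Conf}{\MBidder}}}$, the last revealed bidder of that multibidder. This is precisely what $\CTail{\Conf}$ guarantees: it says that any bidder matched in some greedy MWM of $\Auc$ equals $\MBBidder{\MBidder'}{\Car{\BiddersOfMB{\Conf}{\MBidder'}}}$ where $\MBidder' = \MBidderOfB{\Conf}{\Bidder}$. Hence the bidder of priority $\MBPri{\MBidder}$ matched by $\Match$ and the one matched by $\Matchp$ are both equal to this same last revealed bidder of $\MBidder$, so they coincide. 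Combining over all priority values, $\Match$ and $\Matchp$ match exactly the same set of bidders, establishing $\Unique{\Auc}$.

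I do not expect a serious obstacle here; the lemma is essentially a bookkeeping consequence of Lemmas~\ref{lem:uap-distribution} and~\ref{lem:iuap-one} together with the definition of $\CTail{\Conf}$. The one point that needs a clean statement is the observation that two distinct bidders of $\Auc$ sharing a priority must belong to the same multibidder — this follows from the IUAP condition $\MBPri{\MBidder} \neq \MBPri{\MBidderp}$ for distinct $\MBidder, \MBidderp$ — so that "same number of priority-$\Pri$ bidders matched" can be upgraded to "same priority-$\Pri$ bidders matched" via $\CTail{\Conf}$. The argument will be short.
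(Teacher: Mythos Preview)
Your proposal is correct and follows essentially the same approach as the paper: both arguments combine Lemma~\ref{lem:uap-distribution} (greedy MWMs share the same priority distribution), Lemma~\ref{lem:iuap-one} (at most one matched bidder per priority), and the definition of $\CTail{\Conf}$ (any matched bidder is the last revealed bidder of its multibidder) to conclude that the single priority-$\Pri$ bidder matched by any greedy MWM is uniquely determined. The paper frames it as ``take a matched bidder $\Bidder$ in $\Match$ and show it is matched in $\Matchp$,'' while you organize by priority value, but the logical content is identical.
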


\begin{proof}
  Let $\Match$ and $\Matchp$ be greedy MWMs of $\Auc$, and let $\Bidder$ be a bidder in $\MatchedBidders{\Match}$.
  To establish the lemma, it is sufficient to prove that $\Bidder$ belongs to $\MatchedBidders{\Matchp}$.
  Let $\MBidder$ denote $\MBidderOfB{\Conf}{\Bidder}$ and let $\Pri$ denote $\MBPri{\MBidder}$.
  Since $\CTail{\Conf}$ holds, we know that $\Bidder = \MBBidder{\MBidder}{\Car{\BiddersOfMB{\Conf}{\MBidder}}}$.
  Since $\Bidder$ is matched by $\Match$ and since $\CTail{\Conf}$ holds, Lemma~\ref{lem:iuap-one} implies that $\Greedy{\Auc}{\Pri} = 1$.
  Thus Lemma~\ref{lem:uap-distribution} implies that $\Matchp$ matches one priority-$\Pri$ bidder.
  Since $\CTail{\Conf}$ holds, this bidder is $\Bidder$.
\end{proof}

\begin{lemma}
  \label{lem:iuap-confluence}
  Let $\IAuc = (\MBidders, \Items)$ be an IUAP\@.
  Then all executions of Algorithm~\ref{alg:to-uap} on input $\IAuc$ produce the same output.
\end{lemma}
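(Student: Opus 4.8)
The plan is to model Algorithm~\ref{alg:to-uap} as an abstract reduction system whose states are configurations $(\Auc, \IAuc)$ and whose single rewrite step $\Conf \to \Confp$ corresponds to one execution of line~\ref{line:BLAP}, i.e., $\Confp = (\Auc + \Bidder, \IAuc)$ for $\Conf = (\Auc, \IAuc)$ and some $\Bidder \in \CReady{\Conf}$. The text already argues this relation is terminating (a revealed bidder can never again be ready), and every terminal configuration is one whose ready set is empty; since the algorithm, started from the fixed configuration $((\emptyset, \Items), \IAuc)$, returns the $\Auc$-component of the terminal configuration it reaches, it suffices to show that $\to$ has a unique reachable normal form. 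For a terminating relation this follows from local confluence, and I would in fact prove the stronger diamond property: whenever $\Conf \to \ConfSub{1}$ and $\Conf \to \ConfSub{2}$ with $\ConfSub{1} \ne \ConfSub{2}$, there is a configuration $\Confp$ with $\ConfSub{1} \to \Confp$ and $\ConfSub{2} \to \Confp$.

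So suppose $\Conf = (\Auc, \IAuc) \to \ConfSub{1}$ by revealing $\BidderSub{1}$ and $\Conf \to \ConfSub{2}$ by revealing $\BidderSub{2}$, with $\ConfSub{1} \ne \ConfSub{2}$, hence $\BidderSub{1} \ne \BidderSub{2}$. First I would observe that $\BidderSub{1}$ and $\BidderSub{2}$ belong to different multibidders: two ready bidders of the same multibidder $\MBidder$ would both equal $\MBBidder{\MBidder}{\Car{\BiddersOfMB{\Conf}{\MBidder}} + 1}$ and hence coincide. Let $\MBidder$ and $\MBidderp$ be the distinct multibidders whose sequences contain $\BidderSub{1}$ and $\BidderSub{2}$, respectively. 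The key step is to show $\BidderSub{2} \in \CReady{\ConfSub{1}}$. The ``position'' part of readiness is immediate, since revealing $\BidderSub{1}$ adds no bidder to $\MBidderp$, so $\BiddersOfMB{\ConfSub{1}}{\MBidderp} = \BiddersOfMB{\Conf}{\MBidderp}$. For the other part, write $\Aucp = \Auc + \BidderSub{1}$ and argue $\Greedy{\Aucp}{\MBPri{\MBidderp}} = 0$: from $\BidderSub{2} \in \CReady{\Conf}$ we have $\Greedy{\Auc}{\MBPri{\MBidderp}} = 0$, so by Lemma~\ref{lem:uap-distribution} no bidder of $\MBidderp$ already present in $\Auc$ is matched in any greedy MWM of $\Auc$, whence by Lemma~\ref{lem:uap-one-way-street} none of them is matched in any greedy MWM of the extension $\Aucp$; and since multibidder priorities are distinct, the only bidders of $\Aucp$ with priority $\MBPri{\MBidderp}$ are precisely those bidders of $\MBidderp$ already in $\Auc$ (the newly revealed $\BidderSub{1}$ has priority $\MBPri{\MBidder} \ne \MBPri{\MBidderp}$). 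Hence $\Greedy{\Aucp}{\MBPri{\MBidderp}} = 0$, so $\BidderSub{2} \in \CReady{\ConfSub{1}}$; by symmetry $\BidderSub{1} \in \CReady{\ConfSub{2}}$.

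Finally I would close the diamond: from $\ConfSub{1}$ reveal $\BidderSub{2}$ and from $\ConfSub{2}$ reveal $\BidderSub{1}$; both steps land in $(\Auc + \BidderSub{1} + \BidderSub{2}, \IAuc) = (\Auc + \BidderSub{2} + \BidderSub{1}, \IAuc)$, which serves as $\Confp$. Together with termination, the diamond property yields confluence by the standard induction on the length of a longest reduction out of $\Conf$, hence a unique reachable normal form and a unique output. The main obstacle is the key step above --- that revealing one ready bidder cannot ``un-ready'' another --- and the crux there is the monotonicity supplied by Lemma~\ref{lem:uap-one-way-street} combined with the fact that distinct multibidders have distinct priorities, so their greedy match counts do not interfere.
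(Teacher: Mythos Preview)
Your proof is correct and takes a genuinely different route from the paper's. The paper argues by contradiction: it assumes two executions produce distinct outputs $\AucSub{1}$ and $\AucSub{2}$, locates the first bidder $\Bidderp$ revealed in $\ExecSub{1}$ that never appears in $\AucSub{2}$, and then, using Lemmas~\ref{lem:iuap-tail}, \ref{lem:iuap-unique}, and~\ref{lem:uap-one-way-street}, shows that the predecessor $\Bidderpp$ of $\Bidderp$ is still ready at the terminal configuration of $\ExecSub{2}$, contradicting termination. You instead recast the algorithm as an abstract rewriting system and establish the diamond property directly: revealing one ready bidder cannot un-ready another, because (i) two ready bidders lie in distinct multibidders, and (ii) by Lemma~\ref{lem:uap-one-way-street} together with distinctness of multibidder priorities, the greedy match count at the other multibidder's priority remains zero after the extension. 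Your approach is more modular and uses only the UAP-level Lemmas~\ref{lem:uap-distribution} and~\ref{lem:uap-one-way-street}, bypassing the IUAP-specific invariants $\CTailId$ and $\Unique{\cdot}$ that the paper's proof invokes; the paper's argument, on the other hand, avoids any appeal to rewriting-theoretic machinery and is entirely self-contained once those invariants are in place.
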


% Executions
\newcommand{\Exec}{X}
% Shortcuts for executions
% with subscripts
\newcommand{\ExecSub}[1]{\Exec_{#1}}

\begin{proof}
  Suppose not, and let $\ExecSub{1}$ and $\ExecSub{2}$ denote two executions of Algorithm~\ref{alg:to-uap} on input $\IAuc$ that produce distinct output UAPs $\AucSub{1} = (\BiddersSub{1}, \Items)$ and $\AucSub{2} = (\BiddersSub{2}, \Items)$.
  Without loss of generality, assume that $\Car{\BiddersSub{1}} \geq \Car{\BiddersSub{2}}$.
  Then there is a first iteration of execution $\ExecSub{1}$ in which the bidder added to $\Auc$ in line~\ref{line:BLAP} belongs to $\BiddersSub{1} \setminus \BiddersSub{2}$;
  let $\Bidderp$ denote this bidder.
  Let $\Confp = (\Aucp, \IAuc)$ where $\Aucp = (\Biddersp, \Items)$ denote the configuration in program variable $\Conf$ at the start of this iteration, and let $\MBidderp$ denote $\MBidderOfB{\Confp}{\Bidderp}$.
  Let $i$ be the integer such that $\Bidderp = \MBBidder{\MBidderp}{i}$.
  We know that $i > 1$ because it is easy to see that $\BiddersSub{2}$ contains $\MBBidder{\MBidderp}{1}$.
  Let $\Bidderpp$ denote $\MBBidder{\MBidderp}{i-1}$.
  Since $\Bidderp$ belongs to $\CReady{\Confp}$, Lemmas~\ref{lem:iuap-tail} and~\ref{lem:iuap-unique} imply that $\Bidderpp$ is not matched in any greedy MWM of $\Aucp$.
  Since $\Biddersp$ is contained in $\BiddersSub{2}$, Lemma~\ref{lem:uap-one-way-street} implies that $\Bidderpp$ is not matched in any greedy MWM of $\AucSub{2}$.
  Let $\ConfSub{2} = (\AucSub{2}, \IAuc)$ denote the final configuration of execution $\ExecSub{2}$;
  thus $\CReady{\ConfSub{2}}$ is empty and $\Car{\BiddersOfMB{\ConfSub{2}}{\MBidderp}} = i - 1$.
  By Lemma~\ref{lem:iuap-tail}, we conclude that $\Greedy{\AucSub{2}}{\MBPri{\MBidderp}} = 0$, and hence that $\Bidderpp$ is contained in $\CReady{\ConfSub{2}}$, a contradiction.
\end{proof}

\newcommand{\UAPId}{\Id{uap}}
\newcommand{\UAP}[1]{\UAPId(#1)}

For any IUAP $\IAuc$, we define $\UAP{\IAuc}$ as the unique (by Lemma~\ref{lem:iuap-confluence}) UAP returned by any execution of Algorithm~\ref{alg:to-uap} on input $\IAuc$.

We can use the modified incremental Hungarian step of Section~\ref{sec:inc-hungarian-step} in each iteration of the loop of Algorithm~\ref{alg:to-uap} to maintain UAP $\Auc$, and a greedy MWM of $\Auc$, as follows:
we maintain dual variables (a price for each item) and a residual graph;
the initial greedy MWM is the empty matching;
when a bidder $\Bidder$ is added to $\Auc$ at line~\ref{line:BLAP}, we perform an incremental Hungarian step to process $\Bidder$ to update the greedy MWM, the prices, and the residual graph.
Since we maintain a greedy MWM of $\Auc$ at each iteration of the loop, it is easy to see that identifying a bidder in $\CReady{\Conf}$ (or determining that this set is empty)  takes $\Oh{\Car{\Items}}$ time.
Thus the whole algorithm can be implemented in $\Oh{\Car{\IABidders{\IAuc}} \cdot \Car{\Items}^2}$ time.

We now present a lemma that is used in Section~\ref{sec:smiw} to establish weak stability (Lemmas~\ref{lem:smiw-valid}, \ref{lem:smiw-rational}, and~\ref{lem:smiw-stable}) and Pareto-optimality (Lemma~\ref{lem:smiw-pareto}).

\begin{lemma}
  \label{lem:iuap-exit}
  Let $\IAuc = (\MBidders, \Items)$ be an IUAP,
  let $(\BidderSeq, \Pri)$ be a multibidder that belongs to $\MBidders$,
  let $\UAP{\IAuc}$ be $(\Bidders, \Items)$,
  and let $\Match$ be a greedy MWM of the UAP $(\Bidders, \Items)$.
  Then the following claims hold:
  (1) if $\BidderSeq(k)$ is matched in $\Match$ for some $k$, then $\BidderSeq(k') \in \Bidders$ if and only if $1 \leq k' \leq k$;
  (2) if $\BidderSeq(k)$ is not matched in $\Match$ for any $k$, then $\BidderSeq(k) \in \Bidders$ for $1 \leq k \leq \Car{\BidderSeq}$.
\end{lemma}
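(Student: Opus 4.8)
The plan is to reason about the final configuration of an execution of Algorithm~\ref{alg:to-uap} on input $\IAuc$. Let $\Conf = (\Auc, \IAuc)$ be this configuration, so that $\Auc = (\Bidders, \Items) = \UAP{\IAuc}$; since the while loop has just exited, $\CReady{\Conf}$ is empty, and by Lemma~\ref{lem:iuap-tail} the predicate $\CTail{\Conf}$ holds, so by Lemma~\ref{lem:iuap-unique} the predicate $\Unique{\Auc}$ holds. Let $\MBidder$ denote the multibidder $(\BidderSeq, \Pri)$ and set $j = \Car{\BiddersOfMB{\Conf}{\MBidder}}$. Since the predicate $\Prefix{\Auc}{\IAuc}$ is a loop invariant, $\Bidders \cap \MBBiddersAll{\MBidder} = \{\BidderSeq(1), \ldots, \BidderSeq(j)\}$, and hence $\BidderSeq(k') \in \Bidders$ if and only if $1 \leq k' \leq j$. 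It therefore suffices to prove that $j = k$ under the hypothesis of claim~(1) and that $j = \Car{\BidderSeq}$ under the hypothesis of claim~(2).

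For claim~(1), I would argue as follows. Suppose $\BidderSeq(k)$ is matched in $\Match$. Then $\BidderSeq(k) \in \Bidders$, and $\MBidderOfB{\Conf}{\BidderSeq(k)} = \MBidder$ because $\BidderSeq(k) \in \MBBiddersAll{\MBidder}$. Since $\CTail{\Conf}$ holds and $\BidderSeq(k)$ is matched in the greedy MWM $\Match$, the definition of $\CTail{\Conf}$ gives $\BidderSeq(k) = \MBBidder{\MBidder}{j} = \BidderSeq(j)$; since the bidders in $\BidderSeq$ have distinct IDs, $k = j$, which is what we wanted.

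For claim~(2), I would suppose that $\BidderSeq(k)$ is not matched in $\Match$ for any $k$, and derive a contradiction from the assumption $j < \Car{\BidderSeq}$. Consider the bidder $\Bidder = \BidderSeq(j+1)$, which lies in $\MBBiddersAll{\MBidder} \subseteq \IABidders{\IAuc}$, so $\MBidderOfB{\Conf}{\Bidder} = \MBidder$. Since distinct multibidders of $\IAuc$ have distinct priorities, $\MBidder$ is the only multibidder of priority $\Pri$, so every priority-$\Pri$ bidder of $\IABidders{\IAuc}$ lies in $\MBBiddersAll{\MBidder}$; consequently the priority-$\Pri$ bidders in $\Bidders$ are precisely $\BidderSeq(1), \ldots, \BidderSeq(j)$. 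By hypothesis none of these is matched in the greedy MWM $\Match$, so $\Greedy{\Auc}{\Pri} = 0$, i.e., $\Greedy{\Auc}{\BPri{\Bidder}} = 0$. Together with $\Bidder = \BidderSeq(j+1) = \MBBidder{\MBidder}{\Car{\BiddersOfMB{\Conf}{\MBidder}} + 1}$, this shows that $\Bidder$ meets both conditions defining membership in $\CReady{\Conf}$, so $\Bidder \in \CReady{\Conf}$, contradicting the emptiness of $\CReady{\Conf}$. Hence $j = \Car{\BidderSeq}$.

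I expect this to be essentially bookkeeping with the definitions of $\CReady{\Conf}$, $\CTail{\Conf}$, and $\Prefix{\Auc}{\IAuc}$ and with the invariants already established (Lemmas~\ref{lem:iuap-tail} and~\ref{lem:iuap-unique}), with no deep obstacle. The only step that takes a little care is recognizing that the priority-$\Pri$ bidders present in $\Bidders$ are exactly the revealed bidders of $\MBidder$ --- which uses that distinct multibidders of an IUAP have distinct priorities --- since this is what lets the hypothesis of claim~(2) be translated into $\Greedy{\Auc}{\Pri} = 0$.
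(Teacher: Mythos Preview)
Your proof is correct and follows the same approach as the paper: claim~(1) from $\Prefix{\Auc}{\IAuc}$ together with $\CTail{\Conf}$, and claim~(2) from the emptiness of $\CReady{\Conf}$ at termination. The paper's proof is merely a two-sentence sketch of exactly these ingredients; you have supplied the details it omits (one superfluous detail: you invoke Lemma~\ref{lem:iuap-unique} but never actually use $\Unique{\Auc}$ in either argument).
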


\begin{proof}
Since $\Prefix{\Auc}{\IAuc}$ and $\CTail{\Conf}$ hold at the end of
Algorithm~\ref{alg:to-uap} by Lemma~\ref{lem:iuap-tail}, the first claim follows.
Since $\CReady{\Conf}$ is empty at the end of Algorithm~\ref{alg:to-uap},
the second claim follows.
\end{proof}

\subsection{Threshold of an Item}
\label{sec:iuap-threshold}

In this section, we define the threshold of an item in an IUAP and we establish Lemma~\ref{lem:iuap-threshold-george}, which plays a key role in establishing our strategyproofness results.
We start with some useful definitions.

% Winners of an IUAP
\newcommand{\IAWinners}[1]{\Id{winners}(#1)}
% Losers of an IUAP
\newcommand{\IALosers}[1]{\Id{losers}(#1)}

For any IUAP $\IAuc$, Lemmas~\ref{lem:iuap-tail} and~\ref{lem:iuap-unique} imply that $\Unique{\UAP{\IAuc}}$ holds, and thus that every greedy MWM of $\UAP{\IAuc}$ matches the same set of bidders.
We define this set of matched bidders as $\IAWinners{\IAuc}$.
For any IUAP $\IAuc$, we define $\IALosers{\IAuc}$ as $\Bidders \setminus \IAWinners{\IAuc}$ where $(\Bidders, \Items)$ is $\UAP{\IAuc}$.

Let $\IAuc = (\MBidders, \Items)$ be an IUAP and let $\Bidder = (\BidId, \UBid, \Pri)$ be a bidder for $\Items$.
Then we define the IUAP $\IAuc + \Bidder$ as follows:
if $\MBidders$ contains a multibidder $\MBidder$ of the form $(\BidderSeq, \Pri)$ for some sequence of bidders $\BidderSeq$, then we define $\IAuc + \Bidder$ as $(\MBidders - \MBidder + \MBidderp, \Items)$ where $\MBidderp = (\BidderSeqp, \Pri)$ and $\BidderSeqp$ is the sequence of bidders obtained by appending $\Bidder$ to $\BidderSeq$;
otherwise, we define $\IAuc + \Bidder$ as $(\MBidders + \MBidder, \Items)$ where $\MBidder = (\Seq{\Bidder}, \Pri)$.

\begin{lemma}
  \label{lem:iuap-losers-again}
  Let $\IAuc = (\MBidders, \Items)$ and $\IAucp = \IAuc + \Bidder$ be IUAPs.
  Then $\IALosers{\IAuc} \subseteq \IALosers{\IAucp}$.
\end{lemma}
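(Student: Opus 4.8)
The plan is to show that any bidder $\Bidder^\circ$ that is a loser in $\IAuc$ remains a loser in $\IAucp = \IAuc + \Bidder$. The natural tool is Lemma~\ref{lem:uap-one-way-street}: once a bidder is unmatched in every greedy MWM of some UAP, it stays unmatched in every greedy MWM of any extension. The difficulty is that $\UAP{\IAuc}$ and $\UAP{\IAucp}$ are both produced by the nondeterministic algorithm \AlgToUAP, and it is not immediately obvious that $\UAP{\IAuc}$'s bidder set is contained in $\UAP{\IAucp}$'s bidder set — adding $\Bidder$ to the end of one multibidder's sequence could in principle change which bidders get revealed. So the heart of the argument is a \emph{simulation / prefix} claim: the bidder set of $\UAP{\IAuc}$ is a subset of the bidder set of $\UAP{\IAucp}$.

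**First I would** exploit confluence (Lemma~\ref{lem:iuap-confluence}) to choose convenient executions. Run \AlgToUAP\ on $\IAucp$ but, whenever possible, reveal only bidders that also appear in $\IAuc$ — i.e., process $\IAuc$'s bidders first. Concretely, I claim that the set of bidders $\Bidders$ of $\UAP{\IAuc}$, revealed in some execution on $\IAuc$, forms a valid partial-execution prefix for $\IAucp$: at each step, a bidder is in $\CReady{\Conf}$ relative to $\IAuc$ exactly when it is in $\CReady{\Conf}$ relative to $\IAucp$, because the readiness predicate $\CReady{\cdot}$ only inspects (a) the current UAP $\Auc$ (same in both) and (b) the position of the bidder in its multibidder's sequence together with whether earlier bidders of that multibidder are revealed-and-unmatched. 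The only multibidder whose sequence differs between $\IAuc$ and $\IAucp$ is the one to which $\Bidder$ was appended, and appending at the \emph{end} does not affect the readiness of any bidder already in the original sequence. Hence after revealing exactly the bidders of $\UAP{\IAuc}$, we reach a configuration $(\UAP{\IAuc}, \IAucp)$, from which \AlgToUAP\ on $\IAucp$ continues (possibly revealing more bidders, including $\Bidder$). Therefore $\Bidders \subseteq \Bidders'$ where $(\Bidders', \Items) = \UAP{\IAucp}$, i.e., $\UAP{\IAucp}$ extends $\UAP{\IAuc}$ in the sense of Section~\ref{sec:uap-extend}.

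**Then** the conclusion is immediate: let $\Bidder^\circ \in \IALosers{\IAuc}$, so $\Bidder^\circ \in \Bidders$ and $\Bidder^\circ$ is not matched in any greedy MWM of $\UAP{\IAuc}$. Since $\UAP{\IAucp}$ extends $\UAP{\IAuc}$, Lemma~\ref{lem:uap-one-way-street} gives that $\Bidder^\circ$ is not matched in any greedy MWM of $\UAP{\IAucp}$; combined with $\Bidder^\circ \in \Bidders \subseteq \Bidders'$, this means $\Bidder^\circ \in \IALosers{\IAucp}$, as desired.

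**The main obstacle** is the simulation claim that $\UAP{\IAuc}$'s bidders can be revealed first in an execution of \AlgToUAP\ on $\IAucp$ — that is, that the $\IAuc$-execution is faithfully a prefix of some $\IAucp$-execution. I would need to verify carefully that the invariants $\Prefix{\cdot}{\cdot}$ and $\CTail{\cdot}$ (Lemma~\ref{lem:iuap-tail}) transfer, and that $\CReady{\cdot}$ agrees across the two IUAPs along this shared prefix; the append-at-the-end structure of $\IAuc + \Bidder$ is exactly what makes this go through. Everything after that is a one-line application of Lemma~\ref{lem:uap-one-way-street} plus confluence.
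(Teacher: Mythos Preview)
Your proposal is correct and follows essentially the same approach as the paper: use confluence (Lemma~\ref{lem:iuap-confluence}) to argue that $\UAP{\IAucp}$ extends $\UAP{\IAuc}$, then apply Lemma~\ref{lem:uap-one-way-street}. The paper compresses your simulation argument into the single clause ``using Lemma~\ref{lem:iuap-confluence}, it is easy to see that $\UAP{\IAucp}$ extends $\UAP{\IAuc}$,'' whereas you spell out the execution strategy explicitly, but the underlying reasoning is identical.
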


\begin{proof}
  Let $\Bidderp$ be a bidder in $\IALosers{\IAuc}$.
  Thus $\Bidderp$ is not matched in any greedy MWM of $\UAP{\IAuc}$.
  Using Lemma~\ref{lem:iuap-confluence}, it is easy to see that $\UAP{\IAucp}$ extends $\UAP{\IAuc}$.
  Thus Lemma~\ref{lem:uap-one-way-street} implies that $\Bidderp$ is not matched in any greedy MWM of $\UAP{\IAucp}$, and hence that $\Bidderp$ belongs to $\IALosers{\IAucp}$.
\end{proof}

\begin{lemma}
  \label{lem:iuap-threshold-uap}
  Let $\IAuc = (\MBidders, \Items)$ be an IUAP and let $\Item$ be an item in $\Items$.
  For $i \in \Set{1, 2}$, let $\IAucSub{i} = \IAuc + \BidderSub{i}$ be an IUAP where $\BidderSub{i} = (\BidIdSub{i}, \Set{(\Item, \OfferSub{i})}, \PriSub{i})$.
  Let $\AucSub{1} = (\BiddersSub{1}, \Items)$ denote $\UAP{\IAucSub{1}}$ and let $\AucSub{2} = (\BiddersSub{2}, \Items)$ denote $\UAP{\IAucSub{2}}$.
  Assume that $\BidIdSub{1} \not= \BidIdSub{2}$, $\PriSub{1} \not= \PriSub{2}$, and $\BidderSub{1}$ belongs to $\IAWinners{\IAucSub{1}}$.
  Then the following claims hold:
  if $\BidderSub{2}$ belongs to $\IAWinners{\IAucSub{2}}$ then $\BiddersSub{1} - \BidderSub{1} = \BiddersSub{2} - \BidderSub{2}$;
  if $\BidderSub{2}$ belongs to $\IALosers{\IAucSub{2}}$ then $\BiddersSub{1} - \BidderSub{1}$ contains $\BiddersSub{2} - \BidderSub{2}$.
\end{lemma}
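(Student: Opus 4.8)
The plan is to reduce the statement to an analysis of Algorithm~\ref{alg:to-uap} run from $\UAP{\IAuc}$, leaning on confluence (Lemma~\ref{lem:iuap-confluence}) and the item‑threshold machinery of Section~\ref{sec:uap-threshold}. Throughout, write $\AucSub{0} = (\BiddersSub{0}, \Items)$ for $\UAP{\IAuc}$.

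\emph{Step~1 (reduce to a run from $\AucSub{0}$).} Since $\IAucSub{i}$ differs from $\IAuc$ only by appending $\BidderSub{i}$ to the (possibly new) priority‑$\PriSub{i}$ multibidder, the reveal sequence of any execution of Algorithm~\ref{alg:to-uap} on $\IAuc$ can be replayed verbatim as a partial execution on $\IAucSub{i}$; by Lemma~\ref{lem:iuap-confluence} we may therefore compute $\AucSub{i}$ by first revealing exactly the bidders of $\BiddersSub{0}$, reaching $(\AucSub{0}, \IAucSub{i})$, and then continuing. Because $\BidderSub{i}$ lies in $\BiddersSub{i}$ (it is a winner or a loser of $\IAucSub{i}$) it must be revealed eventually; together with the fact that $\CReady{(\AucSub{0}, \IAuc)}$ is empty, this forces $\Greedy{\AucSub{0}}{\PriSub{i}} = 0$, forces every bidder of the priority‑$\PriSub{i}$ multibidder of $\IAuc$ (if any) to already lie in $\BiddersSub{0}$, and makes $\BidderSub{i}$ the unique element of $\CReady{(\AucSub{0}, \IAucSub{i})}$. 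Hence $\AucSub{i}$ is exactly the UAP obtained by continuing Algorithm~\ref{alg:to-uap} from $(\AucSub{0} + \BidderSub{i}, \IAucSub{i})$.

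\emph{Step~2 (locate $\BidderSub{1}$ and freeze item $\Item$).} By Lemmas~\ref{lem:iuap-tail} and~\ref{lem:iuap-unique}, $\Unique{\cdot}$ holds for every UAP reached along the continuation from $(\AucSub{0} + \BidderSub{1}, \IAucSub{1})$. Since $\BidderSub{1} \in \IAWinners{\IAucSub{1}}$ and $\AucSub{1}$ extends every such UAP, Lemma~\ref{lem:uap-one-way-street} shows $\BidderSub{1}$ is matched in some — hence, by uniqueness, in every — greedy MWM of each such UAP, and being a bid for $\Set{(\Item, \OfferSub{1})}$ it is matched to $\Item$ there. In particular $\Greedy{\AucSub{0} + \BidderSub{1}}{\PriSub{1}} = 1$, so Lemma~\ref{lem:uap-threshold} gives $(\OfferSub{1}, \PriSub{1}) > \AThreshold{\AucSub{0}}{\Item}$, and moreover $\Item$ and priority $\PriSub{1}$ are ``frozen'' along the whole continuation: $\Item$ is always occupied by $\BidderSub{1}$, and (via Lemma~\ref{lem:iuap-one}) no further priority‑$\PriSub{1}$ bidder is ever revealed. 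Consequently the reveal step of Algorithm~\ref{alg:to-uap}, restricted to bidders of $\IAuc$, never ``sees'' $\BidderSub{1}$ except as the occupant of $\Item$, so $\BiddersSub{1} - \BidderSub{1}$ is determined by $\IAuc$ and $\Item$ alone (independent of $(\OfferSub{1}, \PriSub{1})$ once it exceeds $\AThreshold{\AucSub{0}}{\Item}$). I would make this precise through the bijection $(\BiddersSub{0} + \BidderSub{1} + S, \Items) \mapsto (\BiddersSub{0} + \BidderSub{2} + S, \Items)$ on UAP states, $S \subseteq \IABidders{\IAuc}$, checking by induction along the reveal sequence that it carries reachable states to reachable states and preserves greedy‑MWM matched‑bidder sets modulo the swap $\BidderSub{1} \leftrightarrow \BidderSub{2}$ — legitimate because a bid that uses only $\Item$ and is locked to $\Item$ does not affect the greedy‑MWM status of any other bidder.

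\emph{Step~3 (the two claims).} If $\BidderSub{2} \in \IAWinners{\IAucSub{2}}$, Step~2 applies with $1$ replaced by $2$: $(\OfferSub{2}, \PriSub{2}) > \AThreshold{\AucSub{0}}{\Item}$, $\Item$ is frozen under $\BidderSub{2}$, and $\BiddersSub{2} - \BidderSub{2}$ equals the same $\IAuc$‑and‑$\Item$‑determined set, so the bijection yields $\BiddersSub{1} - \BidderSub{1} = \BiddersSub{2} - \BidderSub{2}$. If $\BidderSub{2} \in \IALosers{\IAucSub{2}}$, split on whether $\BidderSub{2}$ is matched in the greedy MWMs of $\AucSub{0} + \BidderSub{2}$. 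If it is not, then by $\Unique{\AucSub{0} + \BidderSub{2}}$ it is matched in none, so $\Greedy{\AucSub{0} + \BidderSub{2}}{\Pri} = \Greedy{\AucSub{0}}{\Pri}$ for all $\Pri$; hence $\CReady{(\AucSub{0} + \BidderSub{2}, \IAucSub{2})}$ is empty, $\BiddersSub{2} = \BiddersSub{0} + \BidderSub{2}$, and since $\BiddersSub{0} \subseteq \BiddersSub{1}$ with $\BidderSub{1} \notin \BiddersSub{0}$ we get $\BiddersSub{2} - \BidderSub{2} = \BiddersSub{0} \subseteq \BiddersSub{1} - \BidderSub{1}$. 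The remaining possibility — $\BidderSub{2}$ matched in every greedy MWM of $\AucSub{0} + \BidderSub{2}$ yet a loser of $\IAucSub{2}$, so that $\BidderSub{2}$ is later displaced from $\Item$ and a cascade of reveals follows — is the main obstacle. I would handle it by running Algorithm~\ref{alg:to-uap} on the common refinement $\IAuc^{*} = \IAucSub{1} + \BidderSub{2} = \IAucSub{2} + \BidderSub{1}$, factoring one execution through $(\AucSub{1}, \IAuc^{*})$ and another through $(\AucSub{2}, \IAuc^{*})$ to get $\BiddersSub{1}, \BiddersSub{2} \subseteq \IABidders{\UAP{\IAuc^{*}}}$, and then coupling the two continuations step by step, maintaining that the state on the $\BidderSub{2}$‑side (with $\BidderSub{2}$ renamed to $\BidderSub{1}$) is a ``sub‑state'' of the state on the $\BidderSub{1}$‑side; the point is that $\BidderSub{1}$'s claim on $\Item$ dominates $\BidderSub{2}$'s (compared via Lemma~\ref{lem:uap-in-out}), so every reveal forced on the $\BidderSub{2}$‑side is also forced on the $\BidderSub{1}$‑side. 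That coupling is the delicate part; Steps~1 and~2 and everything else are bookkeeping with Lemmas~\ref{lem:iuap-confluence}, \ref{lem:iuap-tail}, \ref{lem:iuap-unique}, \ref{lem:iuap-one}, \ref{lem:uap-one-way-street}, \ref{lem:uap-threshold}, and~\ref{lem:greedy-mwm-card}.
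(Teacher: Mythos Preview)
Your reduction in Step~1 is fine, and your instinct to use the common refinement $\IAuc^{*} = \IAucSub{1} + \BidderSub{2} = \IAucSub{2} + \BidderSub{1}$ is exactly right, but you bring it in too late and then propose to do far more work with it than is needed. The bijection of Step~2 and the step-by-step coupling of Step~3 are left as sketches; you yourself flag the coupling as ``the delicate part'', and neither is actually carried out.

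The paper's proof dispenses with both of these by a single observation you are missing: from $(\AucSub{1},\IAuc^{*})$, the continuation of Algorithm~\ref{alg:to-uap} reveals \emph{at most} $\BidderSub{2}$ and then halts. The reason is that $\BidderSub{1}$, being a winner in $\AucSub{1}$, is matched to $\Item$; since $\BidderSub{2}$ bids only on $\Item$, revealing $\BidderSub{2}$ can alter the greedy-MWM status of no bidder except possibly $\BidderSub{1}$; and $\BidderSub{1}$ is the last bidder of its multibidder in $\IAuc^{*}$ (because $\PriSub{1}\neq\PriSub{2}$), so no cascade is triggered. Hence $\UAP{\IAuc^{*}} = \AucSub{1} + \BidderSub{2}$ exactly. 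The first claim is then immediate by symmetry: also $\UAP{\IAuc^{*}} = \AucSub{2} + \BidderSub{1}$, so $\BiddersSub{1} + \BidderSub{2} = \BiddersSub{2} + \BidderSub{1}$, i.e., $\BiddersSub{1} - \BidderSub{1} = \BiddersSub{2} - \BidderSub{2}$. For the second claim one first checks $(\OfferSub{1},\PriSub{1}) > (\OfferSub{2},\PriSub{2})$ --- otherwise Lemma~\ref{lem:uap-in-out} applied in $\AucSub{1} + \BidderSub{2}$ forces $\BidderSub{2}\in\IAWinners{\IAuc^{*}}$, contradicting $\BidderSub{2}\in\IALosers{\IAucSub{2}}$ via Lemma~\ref{lem:iuap-losers-again}. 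Then $\UAP{\IAuc^{*}} = \AucSub{1} + \BidderSub{2}$ together with the trivial fact (confluence) that $\UAP{\IAuc^{*}}$ extends $\AucSub{2}$ gives $\BiddersSub{2} \subseteq \BiddersSub{1} + \BidderSub{2}$, hence $\BiddersSub{2} - \BidderSub{2} \subseteq \BiddersSub{1} - \BidderSub{1}$. No bijection, no coupling.
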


\begin{proof}
  Let $\IAucSub{3}$ denote the IUAP $\IAucSub{1} + \BidderSub{2}$, which is equal to $\IAucSub{2} + \BidderSub{1}$.
  For the first claim, assume that $\BidderSub{2}$ belongs to $\IAWinners{\IAucSub{2}}$.
  Using Lemma~\ref{lem:iuap-confluence}, it is straightforward to argue that $\UAP{\IAucSub{3}}$ is equal to $\AucSub{1} + \BidderSub{2} = (\BiddersSub{1} + \BidderSub{2}, \Items)$ and is also equal to $\AucSub{2} + \BidderSub{1} = (\BiddersSub{2} + \BidderSub{1}, \Items)$.
  Since $\BidderSub{1}$ belongs to $\BiddersSub{1}$ and $\BidderSub{2}$ belongs to $\BiddersSub{2}$, we conclude that $\BiddersSub{1} - \BidderSub{1} = \BiddersSub{2} - \BidderSub{2}$, as required.

  For the second claim, assume that $\BidderSub{2}$ belongs to $\IALosers{\IAucSub{2}}$.
  Suppose $(\OfferSub{1}, \PriSub{1}) < (\OfferSub{2}, \PriSub{2})$.
  Then Lemmas~\ref{lem:uap-in-out} and~\ref{lem:iuap-confluence} imply that $\BidderSub{2}$ belongs to $\IAWinners{\IAucSub{3}}$.
  Since $\BidderSub{2}$ belongs to $\IALosers{\IAucSub{2}}$, Lemma \ref{lem:iuap-losers-again} implies that $\BidderSub{2}$ belongs to $\IALosers{\IAucSub{2} + \BidderSub{1}} = \IALosers{\IAucSub{3}}$, a contradiction.
  Since $\PriSub{1} \not= \PriSub{2}$, we conclude that $(\OfferSub{1}, \PriSub{1}) > (\OfferSub{2}, \PriSub{2})$.
  Then, Lemma~\ref{lem:iuap-confluence} implies that $\UAP{\IAucSub{3}} = \UAP{\IAucSub{1}} + \BidderSub{2} = (\BiddersSub{1} + \BidderSub{2}, \Items)$.
  Since Lemma~\ref{lem:iuap-confluence} also implies that $\UAP{\IAucSub{3}}$ extends $\UAP{\IAucSub{2}}$, it follows that $\BiddersSub{1} + \BidderSub{2}$ contains $\BiddersSub{2}$, and hence that $\BiddersSub{1}$ contains $\BiddersSub{2} - \BidderSub{2}$.
  Since $\BidderSub{1}$ does not belong to $\BiddersSub{2}$, we conclude that $\BiddersSub{1} - \BidderSub{1}$ contains $\BiddersSub{2} - \BidderSub{2}$, as required.
\end{proof}

We are now ready to define the threshold of an item in an IUAP, and to state Lemma~\ref{lem:iuap-threshold-george}.
In Section~\ref{sec:smiw}, Lemma~\ref{lem:iuap-threshold-george} plays an important role in establishing that our SMIW mechanism is strategyproof (Lemma~\ref{lem:smiw-threshold}).
The proof of Lemma~\ref{lem:iuap-threshold-george} is provided in Section~\ref{sec:proof-iuap-threshold-proof}.

\comment{
  In the next definition, we are implicitly using the fact that by choosing $\Offer$ large enough, we can ensure that $\Bidder$ belongs to $\IAWinners{\IAucp}$.
}

% Threshold in an IUAP
\newcommand{\UAPwItem}[2]{\Id{uap}(#1, #2)}
\newcommand{\IAThreshold}[2]{\ThresholdId(#1, #2)}

Let $\IAuc = (\MBidders, \Items)$ be an IUAP and let $\Item$ be an item in $\Items$.
By Lemma~\ref{lem:iuap-threshold-uap}, there is a unique subset $\Bidders$ of $\IABidders{\IAuc}$ such that the following condition holds:
for any IUAP $\IAucp = \IAuc + \Bidder$ where $\Bidder$ is of the form $(\BidId, \Set{(\Item, \Offer)}, \Pri)$ and $\Bidder$ belongs to $\IAWinners{\IAucp}$, $\UAP{\IAucp}$ is equal to $(\Bidders + \Bidder, \Items)$.
We define $\UAPwItem{\IAuc}{\Item}$ as the UAP $(\Bidders, \Items)$, and we define $\IAThreshold{\IAuc}{\Item}$ as $\AThreshold{\UAPwItem{\IAuc}{\Item}}{\Item}$.

\begin{lemma}
  \label{lem:iuap-threshold-george}
  Let $\IAuc = (\MBidders, \Items)$ be an IUAP,
  let $\MBidder = (\BidderSeq, \Pri)$ be a multibidder that belongs to $\MBidders$,
  and let $\IAucp$ denote the IUAP $(\MBidders - \MBidder, \Items)$.
  Suppose that $(\BidderSeq(k), \Item)$ is matched in some greedy MWM of $\UAP{\IAuc}$ for some $k$.
  Then
  \begin{equation}
    \label{eq:iuap-threshold-success}
    (\WPair{\BidderSeq(k)}{\Item}, \Pri) \geq \IAThreshold{\IAucp}{\Item} .
  \end{equation}
  Furthermore, for each $k'$ and $\Itemp$ such that $1 \leq k' < k$ and $\Itemp$ belongs to $\BItems{\BidderSeq(k')}$, we have
  \begin{equation}
    \label{eq:iuap-threshold-failure}
    (\WPair{\BidderSeq(k')}{\Itemp}, \Pri) < \IAThreshold{\IAucp}{\Itemp} .
  \end{equation}
\end{lemma}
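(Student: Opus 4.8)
The plan is to derive both inequalities from the single-item threshold characterization of Lemma~\ref{lem:uap-threshold}, by introducing for the relevant item a fresh ``probe'' bidder of priority $\Pri$ bidding on that item alone, and tracking it through the definitions of $\UAPwItem{\IAucp}{\cdot}$ and $\IAThreshold{\IAucp}{\cdot}$. I first record the structure of $\UAP{\IAuc} = (\Bidders, \Items)$: combining Lemma~\ref{lem:iuap-exit}, the $\CTail$ invariant (Lemma~\ref{lem:iuap-tail}), and the resulting uniqueness $\Unique{\UAP{\IAuc}}$ (Lemma~\ref{lem:iuap-unique}), the hypothesis that $(\BidderSeq(k), \Item)$ is matched in some greedy MWM of $\UAP{\IAuc}$ yields: $\BidderSeq(k') \in \Bidders$ exactly for $1 \leq k' \leq k$; $\BidderSeq(k)$ is matched, necessarily to $\Item$, in \emph{every} greedy MWM; and $\BidderSeq(1), \dots, \BidderSeq(k-1)$ are matched in \emph{none}.

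For~\eqref{eq:iuap-threshold-success}, set $\Bidder^\circ = (\BidId^\circ, \Set{(\Item, \WPair{\BidderSeq(k)}{\Item})}, \Pri)$ with $\BidId^\circ$ a fresh ID, and $\IAucpp = \IAucp + \Bidder^\circ$; since $\IAucp$ has no multibidder of priority $\Pri$, the IUAP $\IAucpp$ is just $\IAuc$ with $\MBidder$ replaced by the one-bidder multibidder $\Seq{\Bidder^\circ}$. The crux is $\Bidder^\circ \in \IAWinners{\IAucpp}$. I would prove this by comparing executions of \AlgToUAP\ on $\IAuc$ and on $\IAucpp$: in $\UAP{\IAuc}$ the net effect of $\MBidder$ was merely that $\BidderSeq(k)$ took $\Item$ at weight $\WPair{\BidderSeq(k)}{\Item}$ while $\BidderSeq(1), \dots, \BidderSeq(k-1)$ lost, so $\Bidder^\circ$ is ``no weaker'' with respect to $\Item$; using Lemma~\ref{lem:uap-one-way-street} (losers persist under extension), Lemma~\ref{lem:uap-in-out}, Lemma~\ref{lem:iuap-losers-again}, and Lemma~\ref{lem:iuap-threshold-uap}, one rules out $\Bidder^\circ$ losing in $\IAucpp$. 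Granting the crux, the definition of $\UAPwItem{\IAucp}{\Item}$ gives $\UAP{\IAucpp} = \UAPwItem{\IAucp}{\Item} + \Bidder^\circ$, so $\Bidder^\circ$ is matched in every greedy MWM of $\UAPwItem{\IAucp}{\Item} + \Bidder^\circ$; Lemma~\ref{lem:uap-threshold} then forbids $(\WPair{\BidderSeq(k)}{\Item}, \Pri) < \AThreshold{\UAPwItem{\IAucp}{\Item}}{\Item} = \IAThreshold{\IAucp}{\Item}$, which is~\eqref{eq:iuap-threshold-success}.

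For~\eqref{eq:iuap-threshold-failure}, fix $k' < k$ and $\Itemp \in \BItems{\BidderSeq(k')}$ and suppose for contradiction $(\WPair{\BidderSeq(k')}{\Itemp}, \Pri) \geq \IAThreshold{\IAucp}{\Itemp}$. By the threshold characterization (Lemma~\ref{lem:uap-threshold} together with the definition of $\UAPwItem{\IAucp}{\Itemp}$, plus a slight inflation of the offer to force strictness when $(\WPair{\BidderSeq(k')}{\Itemp}, \Pri)$ sits exactly on the threshold), a probe bidding on $\Itemp$ alone with priority $\Pri$ and this offer wins $\Itemp$ in $\IAucp$ augmented by it. I would transfer this back to $\IAuc$ as follows: one may run \AlgToUAP\ on $\IAuc$ by revealing every non-$\MBidder$ bidder before any bidder of $\MBidder$, reaching the UAP $\UAP{\IAucp}$; then revealing $\BidderSeq(1), \dots, \BidderSeq(k')$ and the cascades they trigger keeps us within an extension of $\UAP{\IAucp}$, and a monotonicity comparison with $\UAPwItem{\IAucp}{\Itemp}$ --- using that $\BidderSeq(k')$ has priority $\Pri$, offers at least as much on $\Itemp$ as the probe, and has strictly more options, hence is ``at least as strong'' --- forces $\BidderSeq(k')$ to be matched in $\UAP{\IAuc}$, contradicting the structural fact that $\BidderSeq(k')$ (with $k' < k$) is matched in no greedy MWM of $\UAP{\IAuc}$. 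Here I would lean on Lemmas~\ref{lem:uap-one-way-street} and~\ref{lem:iuap-losers-again} and the priority-counting result of Section~\ref{sec:proof-iuap-threshold-proof} (Lemma~\ref{lem:iuap-priorities}).

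The main obstacle is the coupling of the two \AlgToUAP\ executions in each part: since $\BidderSeq(1), \dots, \BidderSeq(k)$ may bid on items other than $\Item$ (resp.\ $\Itemp$), their reveals can displace bidders of other multibidders and thereby change which bidders become ``ready'', so the execution on $\IAuc$ is not the execution on $\IAucp$ (or $\IAucpp$) with $\MBidder$'s moves inserted verbatim. Controlling this drift --- and with it the ``a one-item probe is no stronger than $\BidderSeq(k')$'' and ``$\UAPwItem{\IAucp}{\cdot}$ is the correct reference UAP'' steps --- is where the real work lies; the persistence-of-losers lemmas, Lemma~\ref{lem:iuap-threshold-uap}, the counting machinery of Section~\ref{sec:proof-iuap-threshold-proof}, and a perturbation argument for threshold ties are the tools I would use.
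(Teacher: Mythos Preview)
Your plan is in the right spirit --- reduce both inequalities to the single-item characterization of Lemma~\ref{lem:uap-threshold} via probe bidders --- but the part you flag as ``the main obstacle'' is a genuine gap, and the tools you list do not close it. Two concrete issues:

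\emph{For~\eqref{eq:iuap-threshold-success}.} Your crux is $\Bidder^\circ \in \IAWinners{\IAucpp}$, and you propose to rule out $\Bidder^\circ$ losing via Lemma~\ref{lem:iuap-threshold-uap}. But that lemma compares two single-item bidders with \emph{distinct} priorities, whereas $\Bidder^\circ$ and $\BidderSeq(k)$ both have priority $\Pri$, so it does not apply. More fundamentally, the cascade triggered in $\IAuc$ by the losing bidders $\BidderSeq(1),\dots,\BidderSeq(k-1)$ can reveal additional bidders of other multibidders that are simply absent from $\UAP{\IAucpp}$; a direct execution-coupling argument has to control this, and nothing in your list does.

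\emph{For~\eqref{eq:iuap-threshold-failure}.} The step ``$\BidderSeq(k')$ is at least as strong as the probe, hence must be matched'' is not justified: having more items in one's bid does not monotonically help a bidder become a winner in a UAP, since the additional edges can reroute the greedy MWM and the resulting cascades in the IUAP can go either way. Your perturbation idea for threshold ties is reasonable, but the comparison to $\UAPwItem{\IAucp}{\Itemp}$ still has the same drift problem as above.

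The paper sidesteps the execution-coupling difficulty entirely. Instead of comparing runs of \AlgToUAP, it first proves two structural facts about IUAP thresholds: they are monotone non-decreasing when a bidder is appended (Lemma~\ref{lem:iuap-threshold-nondec}) and unchanged when the appended bidder is a loser (Lemma~\ref{lem:iuap-threshold-unchanged}). These, together with the basic two-way comparison of Lemma~\ref{lem:iuap-threshold}, yield Lemmas~\ref{lem:iuap-threshold-success} and~\ref{lem:iuap-threshold-failure}: a bidder whose offer on some item exceeds the current IUAP threshold must win, and a bidder all of whose offers fall strictly below the thresholds must lose. Feeding $\BidderSeq(1),\dots,\BidderSeq(|\BidderSeq|)$ one at a time into $\IAucp$ and tracking the (unchanging, since each pre-$k$ bidder loses) thresholds gives Lemma~\ref{lem:iuap-threshold-summary}, from which both~\eqref{eq:iuap-threshold-success} and~\eqref{eq:iuap-threshold-failure} drop out. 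The point is that monotonicity and loser-invariance of thresholds absorb all the cascade drift you are worried about; you never need to line up two executions step by step.
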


\subsubsection{Proof of Lemma~\ref{lem:iuap-threshold-george}}
\label{sec:proof-iuap-threshold-proof}

The purpose of this section is to prove Lemma~\ref{lem:iuap-threshold-george}.
We do so by establishing a stronger result, namely Lemma~\ref{lem:iuap-threshold-summary} below.
We start with a useful definition.

% Priorities of an IUAP
\newcommand{\IAPriorities}[1]{\Id{priorities}(#1)}

For any IUAP $\IAuc$, we define $\IAPriorities{\IAuc}$ as $\SetBuild{\Pri}{\Bidder \in \IAWinners{\IAuc} \text{ and } \BPri{\Bidder} = \Pri}$.

\begin{lemma}
  \label{lem:iuap-priorities}
  Let $\IAuc = (\MBidders, \Items)$ and $\IAucp = \IAuc + \Bidder = (\MBiddersp, \Items)$ be IUAPs, let $\Pris$ denote $\IAPriorities{\IAuc}$, let $\Prisp$ denote $\IAPriorities{\IAucp}$, and let $\Pri$ denote $\BPri{\Bidder}$.
  Then $\Car{\Prisp} \geq \Car{\Pris}$ and $\Prisp \subseteq \Pris + \Pri$.
\end{lemma}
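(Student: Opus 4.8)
The plan is to analyze how the set of winners changes when we extend the IUAP $\IAuc$ by one bidder $\Bidder$ to form $\IAucp$. First I would recall from Lemma~\ref{lem:iuap-confluence} (and the argument in Lemma~\ref{lem:iuap-losers-again}) that $\UAP{\IAucp}$ extends $\UAP{\IAuc}$, and in fact $\UAP{\IAucp}$ can be obtained from $\UAP{\IAuc}$ by a sequence of single-bidder extensions (the bidder $\Bidder$ itself, possibly followed by further bidders of the multibidder containing $\Bidder$ that become ready once earlier ones lose, or, if earlier bidders of that multibidder are still winning in $\UAP{\IAuc}$, just $\Bidder$ alone). Combining Lemma~\ref{lem:uap-one-way-street} with Lemma~\ref{lem:iuap-tail}/\ref{lem:iuap-unique}, the set $\IAWinners{\IAucp}$ contains $\IAWinners{\IAuc}$ except possibly for the single bidder of priority $\Pri$ that may be ``ejected'' from winning status; and any new winner has priority $\Pri$, because by $\CTail$ each multibidder contributes at most one winning bidder and the only multibidder whose winner set can grow is the one with priority $\Pri$. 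Since $\IAPriorities{\cdot}$ records which priorities appear among winners, this gives $\Prisp \subseteq \Pris + \Pri$ directly: a priority in $\Prisp$ is either already in $\Pris$ or is $\Pri$.

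For the cardinality bound $\Car{\Prisp} \geq \Car{\Pris}$, the key observation is that by $\CTail$ and Lemma~\ref{lem:iuap-one}, $\Greedy{\UAP{\IAuc}}{\Prip} \le 1$ for every priority $\Prip$, so $\Car{\Pris}$ equals $\Car{\IAWinners{\IAuc}}$, and likewise $\Car{\Prisp} = \Car{\IAWinners{\IAucp}}$. Thus it suffices to show that the number of winning bidders does not decrease when we pass from $\IAuc$ to $\IAucp$. This is where Lemma~\ref{lem:greedy-mwm-card} enters: since $\UAP{\IAucp}$ extends $\UAP{\IAuc}$ and both greedy MWMs are MCMWMs of their respective UAPs, we have $\Car{\Matchp} \ge \Car{\Match}$ where $\Match$ is a greedy MWM of $\UAP{\IAuc}$ and $\Matchp$ is a greedy MWM of $\UAP{\IAucp}$; but the number of matched bidders in a matching equals its cardinality, and the number of matched bidders in a greedy MWM is $\Car{\IAWinners{\cdot}}$. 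Hence $\Car{\IAWinners{\IAucp}} \ge \Car{\IAWinners{\IAuc}}$, giving $\Car{\Prisp} \ge \Car{\Pris}$.

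The main obstacle I anticipate is being careful about the unfolding: $\UAP{\IAucp}$ is not literally $\UAP{\IAuc} + \Bidder$, because appending $\Bidder$ to a multibidder can change which later bidders of that multibidder are ``ready'' and hence get revealed by Algorithm~\ref{alg:to-uap}. One has to argue, using confluence (Lemma~\ref{lem:iuap-confluence}), that we may run the algorithm for $\IAucp$ by first replaying the execution for $\IAuc$, then revealing $\Bidder$, then continuing; and that the only multibidder whose revealed-bidder set or winner set can differ between the two runs is the one of priority $\Pri$. Equivalently — and perhaps more cleanly — one avoids reasoning about the detailed unfolding and instead argues abstractly: $\UAP{\IAucp}$ extends $\UAP{\IAuc}$ (Lemma~\ref{lem:iuap-losers-again}'s proof), so $\IAWinners{\IAuc} \setminus \IAWinners{\IAucp} \subseteq \IALosers{\IAucp}$ — wait, that inclusion is trivial; the real content is $\IAWinners{\IAucp} \setminus \IAWinners{\IAuc}$ contains only priority-$\Pri$ bidders and $\IAWinners{\IAuc} \setminus \IAWinners{\IAucp}$ contains at most one bidder, necessarily of priority $\Pri$. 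The first part follows because a winner in $\UAP{\IAucp}$ that belongs to $\IABidders{\IAuc}$ and is not in $\IAWinners{\IAuc}$ would contradict Lemma~\ref{lem:uap-one-way-street}, so any new winner lies in $\IABidders{\IAucp} \setminus \IABidders{\IAuc}$, which consists of bidders of priority $\Pri$. The second part follows because each priority $\Prip \ne \Pri$ still satisfies $\Greedy{\UAP{\IAucp}}{\Prip} = 1$ whenever it did for $\IAuc$: the priority-$\Prip$ winner of $\UAP{\IAuc}$ stays matched by Lemma~\ref{lem:uap-one-way-street} applied in the contrapositive — actually by noting its multibidder's revealed set is unchanged and invoking $\CTail$ — together with Lemma~\ref{lem:iuap-one}. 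Assembling these observations yields both claims.
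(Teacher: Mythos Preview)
Your overall strategy is the paper's: argue that $\UAP{\IAucp}$ extends $\UAP{\IAuc}$ via confluence, use Lemma~\ref{lem:iuap-one} together with Lemma~\ref{lem:greedy-mwm-card} for the cardinality bound, and use Lemma~\ref{lem:uap-one-way-street} (equivalently Lemma~\ref{lem:iuap-losers-again}) for the inclusion $\Prisp\subseteq\Pris+\Pri$. The cardinality half of your argument is fine.

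The inclusion half, however, rests on structural claims that are false. It is \emph{not} true that ``the only multibidder whose winner set can grow is the one with priority $\Pri$'', nor that $\IAWinners{\IAucp}\setminus\IAWinners{\IAuc}$ contains only priority-$\Pri$ bidders, nor that $\IAWinners{\IAuc}\setminus\IAWinners{\IAucp}$ contains at most one bidder of priority $\Pri$. Revealing $\Bidder$ can displace the current winner of some other multibidder $\MBidderp$ with $\MBPri{\MBidderp}=\Prip\ne\Pri$; then later bidders of $\MBidderp$ become ready and are revealed in $\UAP{\IAucp}$, and one of them may become the new priority-$\Prip$ winner. So both the revealed set and the winner can change for multibidders other than the priority-$\Pri$ one. (Your description of the unfolding as ``$\Bidder$ itself, possibly followed by further bidders of the multibidder containing $\Bidder$'' misses exactly these cascades, and also mishandles the case $\Pri\in\Pris$, where $\Bidder$ is never revealed and $\UAP{\IAucp}=\UAP{\IAuc}$.)

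Relatedly, your appeal to Lemma~\ref{lem:uap-one-way-street} is incomplete: that lemma applies to bidders present in $\UAP{\IAuc}$, not to arbitrary bidders in $\IABidders{\IAuc}$. The missing step---and this is what the paper actually uses---is that if $\Prip\notin\Pris$ then \emph{every} bidder of the priority-$\Prip$ multibidder is revealed in $\UAP{\IAuc}$ (Lemma~\ref{lem:iuap-exit}(2), or the termination condition $\CReady{\Conf}=\emptyset$), hence lies in $\IALosers{\IAuc}$; Lemma~\ref{lem:iuap-losers-again} then puts all of them in $\IALosers{\IAucp}$, so $\Prip\notin\Prisp$. That gives $\Prisp\subseteq\Pris+\Pri$ directly, with no need for any statement about which individual bidders win.
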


\begin{proof}
  Consider running Algorithm~\ref{alg:to-uap} on input $\IAucp$, where we avoid selecting bidder $\Bidder$ from $\CReady{\Conf}$ unless it is the only bidder in $\CReady{\Conf}$.
  (By Lemma~\ref{lem:iuap-confluence}, the final output is the same regardless of which bidder is selected from $\CReady{\Conf}$ at each iteration.)
  If $\Bidder$ never enters $\CReady{\Conf}$, then $\UAP{\IAucp} = \UAP{\IAuc}$, and so $\Prisp = \Pris$, and the claim of the lemma holds.

  Now suppose that $\Bidder$ enters $\CReady{\Conf}$ at some point.
  Let $\Auc = (\Bidders, \Items)$ denote the UAP at the start of the iteration in which $\Bidder$ is selected from $\CReady{\Conf}$.
  Then $\Auc$ is equal to $\UAP{\IAuc}$, and we deduce that $\UAP{\IAucp}$ extends $\UAP{\IAuc}$.
  Lemma~\ref{lem:iuap-one} implies that every greedy MWM of $\Auc = \UAP{\IAuc}$ (resp., $\UAP{\IAucp}$) matches exactly one bidder of each priority in $\Pris$ (resp., $\Prisp$).
  Then, since $\UAP{\IAucp}$ extends $\UAP{\IAuc}$, Lemma~\ref{lem:greedy-mwm-card} implies that $\Car{\Prisp} \geq \Car{\Pris}$.
  Furthermore, letting $\Biddersp$ denote the set of all bidders $\Bidderp$ in $\IABidders{\IAuc}$ such that $\BPri{\Bidderp}$ does not belong to $\Pris + \Pri$, we deduce that $\Biddersp$ is contained in $\IALosers{\IAuc} = \Bidders \setminus \IAWinners{\IAuc}$.
  Then Lemma~\ref{lem:iuap-losers-again} implies that no bidder in $\Biddersp$ is matched in any greedy MWM of $\UAP{\IAucp}$, and thus $\Prisp \subseteq \Pris + \Pri$.
\end{proof}

\comment{
  Onur: The following lemma is used only in the proof of Lemma ``IUAP threshold''.
}

\begin{lemma}
  \label{lem:uap-threshold-nondec}
  Let $\Auc = (\Bidders, \Items)$ and $\Aucp = \Auc + \Bidder$ be UAPs, and let $\Item$ be an item in $\Items$.
  Then $\AThreshold{\Auc}{\Item} \leq \AThreshold{\Aucp}{\Item}$.
\end{lemma}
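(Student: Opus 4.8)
The plan is to use the characterization of $\AThreshold{\Auc}{\Item}$ provided by the proof of Lemma~\ref{lem:uap-threshold}. Recall from that proof that if $\Match$ is a greedy MWM of $\Auc$ with weight $W = \WMatch{\Match}$ and priority $\Pris = \MPriority{\Match}$, and if we let $\Matchings$ be the set of matchings of $\Auc + \Bidder^{\circ}$ (for a generic bidder $\Bidder^{\circ}$ bidding only on $\Item$) that do not match $\Item$, then $\AThreshold{\Auc}{\Item} = (W - W', \Pris - \Pris')$ where $(W', \Pris')$ is the common (weight, priority) pair of the lexicographically-optimal elements of $\Matchings$ under the ordering (maximize weight, then cardinality, then priority). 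Since a matching that does not match $\Item$ in $\Auc + \Bidder^{\circ}$ is the same thing as such a matching in $\Auc$, and also the same thing as such a matching in $\Aucp$ or $\Aucp + \Bidder^{\circ}$, the quantity $(W', \Pris')$ is intrinsic to the ambient set of bidders that are present. So I would phrase everything directly in terms of $\Auc$ and $\Aucp = \Auc + \Bidder$.

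First I would fix a greedy MWM $\Match$ of $\Auc$ and a greedy MWM $\Matchp$ of $\Aucp$; write $(W, \Pris) = (\WMatch{\Match}, \MPriority{\Match})$ and $(\Wp, \Prisp) = (\WMatch{\Matchp}, \MPriority{\Matchp})$. Let $(W', \Pris')$ be the optimal (weight, then cardinality, then priority) pair among matchings of $\Auc$ not touching $\Item$, and let $(W'', \Pris'')$ be the corresponding pair among matchings of $\Aucp$ not touching $\Item$. Then $\AThreshold{\Auc}{\Item} = (W - W', \Pris - \Pris')$ and $\AThreshold{\Aucp}{\Item} = (\Wp - W'', \Prisp - \Pris'')$, so it suffices to show $(W - W', \Pris - \Pris') \leq (\Wp - W'', \Prisp - \Pris'')$ lexicographically. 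The first-coordinate comparison reduces to $W - W' \leq \Wp - W''$, i.e. $W - \Wp \leq W' - W''$. Now $\Match$, viewed inside $\Aucp$, is still a matching, so $\Wp \geq W$, giving $W - \Wp \leq 0$; and any matching of $\Auc$ not touching $\Item$ is also a matching of $\Aucp$ not touching $\Item$, so $W'' \geq W'$, giving $W' - W'' \leq 0$. This only gives both sides nonpositive, which is not enough, so I need a sharper argument.

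The sharper argument I expect to use: consider a greedy MWM $\Matchp$ of $\Aucp$ and a matching $\Match_0$ of $\Aucp$ not touching $\Item$ that achieves $(W'', \cdot, \Pris'')$; look at $\Matchp \oplus \Match_0$. The bidder $\Bidder$ may or may not be matched by each. In the alternating-path/cycle decomposition, isolate the component containing $\Item$ (if any) and argue, exactly as in Lemmas~\ref{lem:uap-xor} and~\ref{lem:uap-in-out}, that transferring along it lets one compare $W - W'$ against $\Wp - W''$ directly: the point is that a greedy MWM of $\Auc$ together with the same family of matchings not touching $\Item$, but now with $\Bidder$ removed, behaves monotonically because removing $\Bidder$ can only shrink the MWM weight by at most the "benefit" $\Bidder$ contributes at $\Item$. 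Concretely, I would show $W \leq W' + (\Wp - W'')$ by taking an optimal $\Item$-avoiding matching $\Match_0$ of $\Aucp$, noting $\Match_0 \setminus \{(\Bidder,\cdot)\}$ together with augmentation toward $\Item$ inside $\Auc$ yields a matching of $\Auc$ of weight at least $W - (\Wp - W'') + W'' = W - \Wp + 2W''$... — and here the bookkeeping must be done carefully with the digraph $\Digraph{\cdot}{\cdot}{\cdot}$ machinery. Then for ties in the weight coordinate I repeat the same comparison one level down with cardinality, and finally with priority, invoking that $\Match$ and $\Matchp$ are \emph{greedy} (hence maximum-priority MCMWMs) and that Lemma~\ref{lem:greedy-mwm-card} gives $\Car{\Matchp} \geq \Car{\Match}$.

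The main obstacle I anticipate is precisely this weight (and then cardinality, then priority) chasing through the removal of $\Bidder$: showing that the "gap" $\WAuc{\Auc'} - \WAuc{\Auc}$ created by adding $\Bidder$ is realized along a path ending at $\Item$ or at a matched bidder, and that this path interacts cleanly with an $\Item$-avoiding optimal matching so that the threshold (a difference of two optima) moves in the right direction. I would structure this by first handling the weight coordinate via a symmetric-difference exchange argument (mirroring Lemma~\ref{lem:uap-xor}), then, assuming equality there, handling cardinality via Lemma~\ref{lem:greedy-mwm-card}-style reasoning, and finally, assuming equality again, handling priority via a maximum-priority exchange as in Lemma~\ref{lem:uap-in-out}. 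Each step is routine in spirit but the case analysis on whether $\Bidder$ is matched in each of the relevant matchings is where care is needed.
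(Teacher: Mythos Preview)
Your proposal is not a proof but a plan, and the plan has a genuine gap at exactly the point you flag as ``the main obstacle'': you never actually establish the key inequality $W - W' \leq \Wp - W''$ on the first coordinate. The sketch you give (``$\Match_0 \setminus \{(\Bidder,\cdot)\}$ together with augmentation toward $\Item$ inside $\Auc$ yields a matching of $\Auc$ of weight at least $W - (\Wp - W'') + W''$\ldots'') does not produce the needed bound, and the subsequent cardinality and priority layers are left entirely to ``routine in spirit'' case analysis. Even if this exchange argument can be pushed through, it is substantially more work than the statement warrants, and you have not carried it out.

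The paper's proof avoids the explicit $(W - W', \Pris - \Pris')$ formula altogether and instead uses only the \emph{defining property} of the threshold (Lemma~\ref{lem:uap-threshold}) together with Lemma~\ref{lem:uap-one-way-street}. Suppose for contradiction that $\AThreshold{\Auc}{\Item} > \AThreshold{\Aucp}{\Item}$. Pick a fresh bidder $\Bidderp$ with $\BUBid{\Bidderp} = \{(\Item,\Offer)\}$ and $\BPri{\Bidderp} = \Pri$ such that $\AThreshold{\Aucp}{\Item} < (\Offer,\Pri) < \AThreshold{\Auc}{\Item}$. By Lemma~\ref{lem:uap-threshold} applied to $\Auc$, $\Bidderp$ is unmatched in every greedy MWM of $\Auc + \Bidderp$; since $\Aucp + \Bidderp$ extends $\Auc + \Bidderp$, Lemma~\ref{lem:uap-one-way-street} says $\Bidderp$ is unmatched in every greedy MWM of $\Aucp + \Bidderp$. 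But Lemma~\ref{lem:uap-threshold} applied to $\Aucp$ says $\Bidderp$ is matched in every greedy MWM of $\Aucp + \Bidderp$, a contradiction. This is the missing idea: rather than chasing weights, cardinalities, and priorities through symmetric differences, you can treat the threshold as a black box and let Lemma~\ref{lem:uap-one-way-street} do all the work.
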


\begin{proof}
  Assume for the sake of contradiction that $\AThreshold{\Auc}{\Item} > \AThreshold{\Aucp}{\Item}$.
  Then there exists a bidder $\Bidderp$ such that $\Bidderp$ does not belong to $\Bidders + \Bidder$, $\BUBid{\Bidderp} = \Set{(\Item, \Offer)}, \BPri{\Bidderp} = \Pri$, and
  \[
    \AThreshold{\Aucp}{\Item} < (\Offer, \Pri) < \AThreshold{\Auc}{\Item} .
  \]
  Since $(\Offer, \Pri) < \AThreshold{\Auc}{\Item}$, Lemma~\ref{lem:uap-threshold} implies that $\Bidderp$ is not matched in any greedy MWM of $\Auc + \Bidderp$.
  Thus Lemma~\ref{lem:uap-one-way-street} implies that $\Bidderp$ is not matched in any greedy MWM of $\Aucp + \Bidderp$.
  On the other hand, since $\AThreshold{\Aucp}{\Item} < (\Offer, \Pri)$, Lemma~\ref{lem:uap-threshold} implies that $\Bidderp$ is matched in every greedy MWM of $\Aucp + \Bidderp$, a contradiction.
\end{proof}

\comment{
  Right now, we include the assumption ``$\Pri$ does not belong to $\IAPriorities{\IAuc}$'' in the statement of Lemma~\ref{lem:iuap-threshold} below, but we don't actually use this assumption in the proof.
  The reason this assumption is present is it implies that $\Bidder$ belongs to either $\IAWinners{\IAucp}$ or $\IALosers{\IAucp}$.
  We may wish to strengthen the lemma statement to include this claim.
  Or we may wish to drop this assumption from the lemma.
  What we should do probably depends on whether we need this additional claim in any of the places where we invoke Lemma~\ref{lem:iuap-threshold}.
  I haven't checked this yet.

  Onur: I made it explicit everywhere that the bidder that is subject to this lemma belongs either to winners or to losers
}

\begin{lemma}
  \label{lem:iuap-threshold}
  Let $\IAuc = (\MBidders, \Items)$ and $\IAucp = \IAuc + \Bidder$ be IUAPs where $\Bidder = (\BidId, \Set{(\Item, \Offer)}, \Pri)$, $\Item$ is an item in $\Items$, and $\Pri$ does not belong to $\IAPriorities{\IAuc}$.
  If $\Bidder$ belongs to $\IAWinners{\IAucp}$, then $(\Offer, \Pri) > \IAThreshold{\IAuc}{\Item}$.
  If $\Bidder$ belongs to $\IALosers{\IAucp}$, then $(\Offer, \Pri) < \IAThreshold{\IAuc}{\Item}$.
\end{lemma}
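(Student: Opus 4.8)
The plan is to case-split on whether $\Bidder$ belongs to $\IAWinners{\IAucp}$ or to $\IALosers{\IAucp}$, and in each case reduce to the UAP-level threshold of Lemma~\ref{lem:uap-threshold} for the UAP $\Auc := \UAPwItem{\IAuc}{\Item} = (\Bidders, \Items)$ and the item $\Item$, exploiting that $\IAThreshold{\IAuc}{\Item} = \AThreshold{\Auc}{\Item}$ by definition. Two standing observations will be used throughout: $\Auc + \Bidder$ is a valid UAP (the bidders of $\Auc$ lie in $\IABidders{\IAuc}$, and $\BId{\Bidder}$ is distinct from all their IDs since $\IAucp$ is an IUAP), and $\BUBid{\Bidder} = \Set{(\Item, \Offer)}$, so $\Bidder$ is one of the bidders quantified over in Lemma~\ref{lem:uap-threshold}.

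For the winner case, I would apply the defining property of $\UAPwItem{\IAuc}{\Item}$ directly: since $\Bidder$ has the required form and belongs to $\IAWinners{\IAucp}$, we get $\UAP{\IAucp} = \Auc + \Bidder$, so $\Bidder$ is matched in every greedy MWM of $\Auc + \Bidder$. By Lemma~\ref{lem:uap-threshold} this is incompatible with $(\Offer, \Pri) \le \AThreshold{\Auc}{\Item}$: equality would force $\Bidder$ to be matched in some but not all greedy MWMs, and a strictly smaller value would force it to be matched in none. Hence $(\Offer, \Pri) > \AThreshold{\Auc}{\Item} = \IAThreshold{\IAuc}{\Item}$.

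For the loser case, the subtlety is that $\UAP{\IAucp}$ need not equal $\Auc + \Bidder$; all I need is that the bidder set $\Biddersp$ of $\UAP{\IAucp}$ satisfies $\Biddersp - \Bidder \subseteq \Bidders$, equivalently that $\Auc + \Bidder$ extends $\UAP{\IAucp}$. To obtain this I would introduce an auxiliary bidder $\BidderSub{1} = (\BidIdSub{1}, \Set{(\Item, \OfferSub{1})}, \PriSub{1})$ whose ID and priority do not occur in $\IAuc$ (so in particular $\BidIdSub{1} \ne \BidId$ and $\PriSub{1} \ne \Pri$) and whose offer $\OfferSub{1}$ is sufficiently large; then $\BidderSub{1}$ is forced into $\IAWinners{\IAuc + \BidderSub{1}}$ (it is revealed because its priority class is new, and once revealed its huge offer makes it matched to $\Item$ in every MWM), so the defining property of $\UAPwItem{\IAuc}{\Item}$ gives $\UAP{\IAuc + \BidderSub{1}} = \Auc + \BidderSub{1}$. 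Applying Lemma~\ref{lem:iuap-threshold-uap} with $\BidderSub{1}$ as the winning bidder and $\Bidder$ as $\BidderSub{2}$ (the required hypotheses $\BidIdSub{1} \ne \BidId$, $\PriSub{1} \ne \Pri$, $\BidderSub{1} \in \IAWinners{\IAuc + \BidderSub{1}}$ hold, and $\Bidder \in \IALosers{\IAucp}$) yields exactly $\Biddersp - \Bidder \subseteq \Bidders$. Since $\Bidder$ is matched in no greedy MWM of $\UAP{\IAucp}$, Lemma~\ref{lem:uap-one-way-street} transfers this to the extension $\Auc + \Bidder$, and then Lemma~\ref{lem:uap-threshold} rules out $(\Offer, \Pri) \ge \AThreshold{\Auc}{\Item}$, giving $(\Offer, \Pri) < \IAThreshold{\IAuc}{\Item}$.

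I expect the loser case to be the main obstacle, specifically the step of relating $\UAP{\IAucp}$ to $\Auc + \Bidder$: one must resist assuming equality and instead extract only the one-way containment $\Biddersp \subseteq \Bidders + \Bidder$ from Lemma~\ref{lem:iuap-threshold-uap}, which is precisely what is needed for the monotonicity of Lemma~\ref{lem:uap-one-way-street} to push ``$\Bidder$ loses'' up to the larger UAP. I also note that the hypothesis $\Pri \notin \IAPriorities{\IAuc}$ does not appear to be used in either case.
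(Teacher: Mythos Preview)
Your proof is correct and matches the paper's argument in the winner case. In the loser case you take a slight variant of the paper's route that is worth noting. The paper first applies Lemma~\ref{lem:uap-threshold} to the \emph{small} UAP $\UAP{\IAucp} - \Bidder$ to get $(\Offer,\Pri) < \AThreshold{\UAP{\IAucp}-\Bidder}{\Item}$, then uses Lemma~\ref{lem:iuap-threshold-uap} to see that $\UAPwItem{\IAuc}{\Item}$ extends $\UAP{\IAucp}-\Bidder$, and finally applies the threshold-monotonicity Lemma~\ref{lem:uap-threshold-nondec} to push the inequality up to $\IAThreshold{\IAuc}{\Item}$. You instead use Lemma~\ref{lem:iuap-threshold-uap} to see that $\UAPwItem{\IAuc}{\Item} + \Bidder$ extends $\UAP{\IAucp}$, push ``$\Bidder$ is unmatched in every greedy MWM'' up via Lemma~\ref{lem:uap-one-way-street}, and then apply Lemma~\ref{lem:uap-threshold} directly at the \emph{large} UAP $\UAPwItem{\IAuc}{\Item}$. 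These are equivalent in content: Lemma~\ref{lem:uap-threshold-nondec} is itself proved from Lemmas~\ref{lem:uap-threshold} and~\ref{lem:uap-one-way-street}, so your argument simply inlines that step and thereby avoids invoking Lemma~\ref{lem:uap-threshold-nondec} as a separate ingredient. Your explicit construction of the auxiliary winning bidder $\BidderSub{1}$ is also what the paper does implicitly when it cites Lemma~\ref{lem:iuap-threshold-uap} (the definition of $\UAPwItem{\IAuc}{\Item}$ already rests on the existence of such a winner). Finally, your observation that the hypothesis $\Pri \notin \IAPriorities{\IAuc}$ is unused is correct; the paper acknowledges this in a comment.
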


\begin{proof}
  First, assume that $\Bidder$ belongs to $\IAWinners{\IAucp}$.
  Thus $\Bidder$ is matched in every greedy MWM of $\UAP{\IAucp}$, which is equal to $\UAPwItem{\IAuc}{\Item} + \Bidder$ by definition.
  Lemma~\ref{lem:uap-threshold} implies that $(\Offer, \Pri) > \AThreshold{\UAPwItem{\IAuc}{\Item}}{\Item} = \IAThreshold{\IAuc}{\Item}$, as required.

  Now assume that $\Bidder$ belongs to $\IALosers{\IAucp}$.
  Thus $\Bidder$ is not matched in any greedy MWM of $\UAP{\IAucp}$.
  Define $\Bidders$ so that $\UAP{\IAucp} = (\Bidders + \Bidder, \Items)$, and let $\Auc$ denote the UAP $(\Bidders, \Items)$.
  Lemma~\ref{lem:uap-threshold} implies that $(\Offer, \Pri) < \AThreshold{\Auc}{\Item}$.
  Lemma~\ref{lem:iuap-threshold-uap} implies that $\UAPwItem{\IAuc}{\Item} + \Bidder$ extends $\UAP{\IAucp}$, and hence that $\UAPwItem{\IAuc}{\Item}$ extends $\Auc$.
  Lemma~\ref{lem:uap-threshold-nondec} therefore implies that $$\AThreshold{\Auc}{\Item} \leq \AThreshold{\UAPwItem{\IAuc}{\Item}}{\Item} = \IAThreshold{\IAuc}{\Item}.$$
  Thus $(\Offer, \Pri) < \IAThreshold{\IAuc}{\Item}$, as required.
\end{proof}

\begin{lemma}
  \label{lem:iuap-threshold-nondec}
  Let $\IAuc = (\MBidders, \Items)$ and $\IAucp = \IAuc + \Bidder$ be IUAPs, and let $\Item$ be an item in $\Items$.
  Then $\IAThreshold{\IAuc}{\Item} \leq \IAThreshold{\IAucp}{\Item}$.
\end{lemma}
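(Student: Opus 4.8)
The plan is to reduce this IUAP-level statement to the UAP-level monotonicity already established in Lemma~\ref{lem:uap-threshold-nondec}. Since $\IAThreshold{\IAuc}{\Item} = \AThreshold{\UAPwItem{\IAuc}{\Item}}{\Item}$ and likewise $\IAThreshold{\IAucp}{\Item} = \AThreshold{\UAPwItem{\IAucp}{\Item}}{\Item}$, and since iterating Lemma~\ref{lem:uap-threshold-nondec} gives $\AThreshold{\Auc}{\Item} \le \AThreshold{\Aucp}{\Item}$ whenever $\Aucp$ extends $\Auc$ (write the larger bidder set as the smaller one with finitely many bidders appended one at a time), it suffices to prove that $\UAPwItem{\IAucp}{\Item}$ extends $\UAPwItem{\IAuc}{\Item}$.

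To establish that extension, I would introduce a single ``probe'' bidder $\Bidderpp = (\BidId, \Set{(\Item, \OfferSub{0})}, \PriSub{0})$ for $\Items$, where $\BidId$ is not the ID of any bidder occurring in $\IAucp$, where $\PriSub{0}$ differs from the priority of every multibidder of $\IAucp$ (hence also of $\IAuc$), and where $\OfferSub{0}$ is chosen so large---relative to the edge weights appearing in $\IAucp$---that $\Bidderpp$ lies in both $\IAWinners{\IAuc + \Bidderpp}$ and $\IAWinners{\IAucp + \Bidderpp}$; such an $\OfferSub{0}$ exists because any maximum-weight matching of a UAP that contains $\Bidderpp$ must match $\Bidderpp$ to $\Item$ once $\OfferSub{0}$ dominates the total magnitude of the remaining edge weights (this is the fact used implicitly in the definition of $\IAThreshold{\cdot}{\cdot}$). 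By the defining property of $\UAPwItem{\cdot}{\Item}$, writing $\UAPwItem{\IAuc}{\Item} = (\BiddersSub{1}, \Items)$ and $\UAPwItem{\IAucp}{\Item} = (\BiddersSub{2}, \Items)$, we then get $\UAP{\IAuc + \Bidderpp} = (\BiddersSub{1} + \Bidderpp, \Items)$ and $\UAP{\IAucp + \Bidderpp} = (\BiddersSub{2} + \Bidderpp, \Items)$.

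Next I would note that, because $\Bidder$ (the bidder appended to $\IAuc$ to form $\IAucp$) and $\Bidderpp$ have distinct priorities, adding them to an IUAP in either order produces the same IUAP, so $\IAucp + \Bidderpp = (\IAuc + \Bidderpp) + \Bidder$. The argument in the proof of Lemma~\ref{lem:iuap-losers-again} (which appeals to Lemma~\ref{lem:iuap-confluence}) shows that $\UAP{(\IAuc + \Bidderpp) + \Bidder}$ extends $\UAP{\IAuc + \Bidderpp}$; combining this with the two equalities above yields $\BiddersSub{1} + \Bidderpp \subseteq \BiddersSub{2} + \Bidderpp$. Since $\BId{\Bidderpp}$ is fresh for $\IAucp$ while $\BiddersSub{1} \subseteq \IABidders{\IAuc}$ and $\BiddersSub{2} \subseteq \IABidders{\IAucp}$, the bidder $\Bidderpp$ belongs to neither $\BiddersSub{1}$ nor $\BiddersSub{2}$, so $\BiddersSub{1} \subseteq \BiddersSub{2}$; that is, $\UAPwItem{\IAucp}{\Item}$ extends $\UAPwItem{\IAuc}{\Item}$, and $\IAThreshold{\IAuc}{\Item} \le \IAThreshold{\IAucp}{\Item}$ follows by iterating Lemma~\ref{lem:uap-threshold-nondec}.

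The one genuinely delicate point is the setup of the probe bidder: one must simultaneously make it a winner in $\IAuc + \Bidderpp$ and in $\IAucp + \Bidderpp$ (so that the characterization of $\UAPwItem{\cdot}{\Item}$ applies on both sides) and keep its priority distinct from everything in sight (so that adding it commutes with adding $\Bidder$). Once the probe bidder is in hand, the rest is routine bookkeeping with the ``extends'' relation and the previously proved lemmas.
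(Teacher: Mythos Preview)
Your argument is correct and takes a genuinely different route from the paper. The paper argues by contradiction: assuming $\IAThreshold{\IAuc}{\Item} > \IAThreshold{\IAucp}{\Item}$, it inserts a single-item probe bidder $\Bidderp$ whose (offer, priority) pair lies strictly between the two thresholds and whose priority avoids $\IAPriorities{\IAuc} + \BPri{\Bidder}$; Lemma~\ref{lem:iuap-threshold} then forces $\Bidderp \in \IALosers{\IAuc + \Bidderp}$, Lemma~\ref{lem:iuap-losers-again} pushes this to $\Bidderp \in \IALosers{\IAucp + \Bidderp}$, while Lemma~\ref{lem:iuap-threshold} (together with Lemma~\ref{lem:iuap-priorities} to control priorities) forces $\Bidderp \in \IAWinners{\IAucp + \Bidderp}$, a contradiction.

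By contrast, you place your probe bidder far \emph{above} both thresholds (large offer, fresh priority and ID) so that the defining property of $\UAPwItem{\cdot}{\Item}$ fires on both sides, and then exploit the commutativity $(\IAuc + \Bidderpp) + \Bidder = \IAucp + \Bidderpp$ together with the ``$\UAP{\cdot}$ is monotone under adding a bidder'' fact (as in the proof of Lemma~\ref{lem:iuap-losers-again}) to deduce the structural inclusion $\UAPwItem{\IAuc}{\Item} \subseteq \UAPwItem{\IAucp}{\Item}$ at the level of bidder sets; the threshold inequality then drops out of iterated Lemma~\ref{lem:uap-threshold-nondec}. Your approach yields the stronger intermediate statement that $\UAPwItem{\IAucp}{\Item}$ extends $\UAPwItem{\IAuc}{\Item}$ and bypasses Lemmas~\ref{lem:iuap-threshold} and~\ref{lem:iuap-priorities} entirely; the paper's approach stays within the IUAP-threshold machinery it has just built and does not need to quantify how large the probe offer must be. Both are clean; yours is arguably more structural.
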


\comment{Onur: Chose $\Bidderp$ so that its priority does not belong to $\IAPriorities{\IAuc} + \BPri{\Bidder}$, and made it explicit that it belongs to either winners or losers.}

\begin{proof}
  Let $(\Offer, \Pri)$ denote $\IAThreshold{\IAuc}{\Item}$, let $(\Offerp, \Prip)$ denote $\IAThreshold{\IAucp}{\Item}$, and assume for the sake of contradiction that $(\Offer, \Pri) > (\Offerp, \Prip)$.

  Let $\Bidderp$ be a bidder $(\BidId, \Set{(\Item, \Offer)}, \Pripp)$ such that $\Pripp$ does not belong to $\IAPriorities{\IAuc} + \BPri{\Bidder}$, $\Pri > \Pripp$, and $(\Offer, \Pripp) > (\Offerp, \Prip)$.
  Let $\IAucpp$ denote $\IAuc + \Bidderp$ and let $\IAucppp$ denote $\IAucp + \Bidderp$.
  Since $\Pripp$ does not belong to $\IAPriorities{\IAuc}$, we deduce that $\Bidderp$ belongs to either $\IAWinners{\IAucpp}$ or $\IALosers{\IAucpp}$.
  Then, by Lemma~\ref{lem:iuap-threshold}, $\Bidderp$ belongs to $\IALosers{\IAucpp}$, and hence by Lemma~\ref{lem:iuap-losers-again}, $\Bidderp$ belongs to $\IALosers{\IAucppp}$.
  On the other hand, since $\Pripp$ does not belong to $\IAPriorities{\IAuc} + \BPri{\Bidder}$, Lemma~\ref{lem:iuap-priorities} implies that $\Pripp$ does not belong to $\IAPriorities{\IAucp}$, and we deduce that $\Bidderp$ belongs to either $\IAWinners{\IAucppp}$ or $\IALosers{\IAucppp}$.
  Then, Lemma~\ref{lem:iuap-threshold} implies that $\Bidderp$ belongs to $\IAWinners{\IAucppp}$, a contradiction.
\end{proof}

\begin{lemma}
  \label{lem:iuap-threshold-unchanged}
  Let $\IAuc = (\MBidders, \Items)$ and $\IAucp = \IAuc + \Bidder$ be IUAPs where $\Bidder$ belongs to $\IALosers{\IAucp}$, and let $\Item$ be an item in $\Items$.
  Then $\IAThreshold{\IAucp}{\Item} = \IAThreshold{\IAuc}{\Item}$.
\end{lemma}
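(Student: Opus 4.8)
Throughout, write $\IAucpp$, $\Bidderpp$, $\Bidderp$ for auxiliary IUAPs/bidders; the macro $\IAucp$ already denotes $\IAuc+\Bidder$. The plan is to combine Lemma~\ref{lem:iuap-threshold-nondec}, which already yields $\IAThreshold{\IAuc}{\Item}\leq\IAThreshold{\IAucp}{\Item}$, with a proof of the reverse inequality by contradiction, in the same spirit as the proof of Lemma~\ref{lem:iuap-threshold-nondec}. The one new ingredient I will need is an \emph{absorption property} for IUAPs: if $\Bidderpp$ is a bidder with $\Bidderpp\in\IALosers{\IAucpp+\Bidderpp}$, then $\UAP{\IAucpp+\Bidderpp}=\UAP{\IAucpp}+\Bidderpp$, the greedy MWM of $\UAP{\IAucpp}$ is still a greedy MWM of $\UAP{\IAucpp+\Bidderpp}$, and hence $\IAWinners{\IAucpp+\Bidderpp}=\IAWinners{\IAucpp}$.

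I would prove the absorption property by analyzing Algorithm~\ref{alg:to-uap} on input $\IAucpp+\Bidderpp$. By Lemma~\ref{lem:iuap-confluence} I may pick the execution that first replays some execution on $\IAucpp$ (this is legal because passing from $\IAucpp$ to $\IAucpp+\Bidderpp$ only appends a single bidder to the tail of one multibidder, which does not change the $\CReady$ status of any earlier bidder), reaching a configuration $\Conf$ whose UAP component is $\UAP{\IAucpp}$. Using Lemma~\ref{lem:iuap-exit} for the multibidder to which $\Bidderpp$ is appended, together with the fact that $\CReady$ was empty at the end of the replayed execution, one checks that the only bidder that can lie in $\CReady{\Conf}$ is $\Bidderpp$ itself; if $\Bidderpp\notin\CReady{\Conf}$ the algorithm halts at $\Conf$, so $\UAP{\IAucpp+\Bidderpp}=\UAP{\IAucpp}$ and $\Bidderpp$ does not occur in $\UAP{\IAucpp+\Bidderpp}$, contradicting $\Bidderpp\in\IALosers{\IAucpp+\Bidderpp}$. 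Hence $\Bidderpp$ is revealed next, and I would examine this revelation step via Lemmas~\ref{lem:uap-grow} and~\ref{lem:uap-swap}. If $\Bidderpp$ becomes matched it does so by displacing a bidder of priority strictly less than $\BPri{\Bidderpp}$, and the ensuing cascade of further revelations only ever involves bidders of priority strictly less than $\BPri{\Bidderpp}$ (each subsequent revelation either matches the new bidder and ends that branch, or leaves it unmatched so that its same-priority successor is revealed next, or displaces a bidder of even smaller priority); in particular $\Bidderpp$ is never displaced, so $\Bidderpp$ would end up matched in $\UAP{\IAucpp+\Bidderpp}$, again contradicting $\Bidderpp\in\IALosers{\IAucpp+\Bidderpp}$. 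Therefore revealing $\Bidderpp$ leaves the greedy MWM unchanged and makes $\CReady$ empty, the algorithm halts with $\UAP{\IAucpp+\Bidderpp}=\UAP{\IAucpp}+\Bidderpp$ and the same greedy MWM, and $\IAWinners{\IAucpp+\Bidderpp}=\IAWinners{\IAucpp}$ follows using that $\Unique{\UAP{\IAucpp+\Bidderpp}}$ holds.

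With the absorption property in hand I would finish as follows. Assume for contradiction that $\IAThreshold{\IAucp}{\Item}>\IAThreshold{\IAuc}{\Item}$, and pick a bidder $\Bidderp=(\BidId,\Set{(\Item,\Offer)},\Pri)$ whose priority $\Pri$ does not belong to $\IAPriorities{\IAuc}+\BPri{\Bidder}$ (so $\Pri\neq\BPri{\Bidder}$, $\Pri\notin\IAPriorities{\IAuc}$, and, by Lemma~\ref{lem:iuap-priorities}, $\Pri\notin\IAPriorities{\IAucp}$) and with $\IAThreshold{\IAuc}{\Item}<(\Offer,\Pri)<\IAThreshold{\IAucp}{\Item}$; such a choice exists because the bad priorities form a finite set. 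Since $\Pri\notin\IAPriorities{\IAuc}$, bidder $\Bidderp$ lies in $\IAWinners{\IAuc+\Bidderp}$ or in $\IALosers{\IAuc+\Bidderp}$, and since $(\Offer,\Pri)>\IAThreshold{\IAuc}{\Item}$, Lemma~\ref{lem:iuap-threshold} rules out the latter, so $\Bidderp\in\IAWinners{\IAuc+\Bidderp}$. Similarly, since $\Pri\notin\IAPriorities{\IAucp}$ and $(\Offer,\Pri)<\IAThreshold{\IAucp}{\Item}$, Lemma~\ref{lem:iuap-threshold} gives $\Bidderp\in\IALosers{\IAucp+\Bidderp}$. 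Now $\IAucp+\Bidderp=\IAuc+\Bidder+\Bidderp=\IAuc+\Bidderp+\Bidder$ (the reordering is valid because $\Bidder$ and $\Bidderp$ have different priorities and so are appended to different multibidders), and Lemma~\ref{lem:iuap-losers-again} gives $\Bidder\in\IALosers{(\IAuc+\Bidderp)+\Bidder}$. Applying the absorption property with $\IAucpp=\IAuc+\Bidderp$ and $\Bidderpp=\Bidder$, we obtain $\IAWinners{\IAuc+\Bidderp+\Bidder}=\IAWinners{\IAuc+\Bidderp}$, which contains $\Bidderp$; hence $\Bidderp\in\IAWinners{\IAucp+\Bidderp}$, contradicting $\Bidderp\in\IALosers{\IAucp+\Bidderp}$. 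This yields $\IAThreshold{\IAucp}{\Item}\leq\IAThreshold{\IAuc}{\Item}$, and therefore equality.

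The main obstacle is the absorption property, and within it the claim that a freshly revealed loser cannot be displaced by the cascade its revelation might trigger; this is exactly what guarantees that $\IAWinners$ --- and hence item thresholds, which are defined through $\IAWinners$ via $\UAPwItem{\cdot}{\cdot}$ --- are unaffected by adjoining a loser. The remainder is a routine reprise of the contradiction argument used for Lemma~\ref{lem:iuap-threshold-nondec}.
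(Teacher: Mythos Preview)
Your argument is correct, but it takes a genuinely different route from the paper's proof.

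Both proofs start the same way: invoke Lemma~\ref{lem:iuap-threshold-nondec} for one inequality, assume the other fails, and introduce an auxiliary single-item bidder $\Bidderp$ whose offer-priority pair lies strictly between the two thresholds and whose priority avoids $\IAPriorities{\IAuc}+\BPri{\Bidder}$.  From Lemma~\ref{lem:iuap-threshold} you both conclude $\Bidderp\in\IAWinners{\IAuc+\Bidderp}$ and $\Bidderp\in\IALosers{\IAucp+\Bidderp}$.  The divergence is in how the contradiction is closed.

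The paper never proves your absorption property.  Instead it works entirely with the \emph{priority sets} $\IAPriorities{\cdot}$ and the two conclusions of Lemma~\ref{lem:iuap-priorities} (the cardinality bound and the containment $\IAPriorities{\IAucp}\subseteq\IAPriorities{\IAuc}+\Pri$).  From $\Bidder\in\IALosers{\IAucp}$ and $\Bidderp\in\IALosers{\IAucp+\Bidderp}$ it deduces $\IAPriorities{\IAucp+\Bidderp}=\IAPriorities{\IAuc}$, and then a short case split on whether $\Car{\IAPriorities{\IAuc+\Bidderp}}$ exceeds $\Car{\IAPriorities{\IAuc}}$ forces a violation of Lemma~\ref{lem:iuap-priorities} applied to the step $(\IAuc+\Bidderp)\to(\IAuc+\Bidderp)+\Bidder$.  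This avoids any operational analysis of Algorithm~\ref{alg:to-uap} beyond what is already packaged in Lemma~\ref{lem:iuap-priorities}.

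Your route instead establishes the structural fact that appending a loser leaves $\IAWinners{\cdot}$ (and in fact the UAP itself, up to that one extra bidder) unchanged, via a cascade analysis of the incremental Hungarian steps.  The key observation---that once $\Bidderpp$ is matched, every subsequently revealed bidder has priority strictly below $\BPri{\Bidderpp}$ (because a ready bidder always lies in its own candidate set, so the displaced priority is at most its own, and no matched bidder shares its priority at the moment of revelation)---is sound, and the absorption property follows.  This is more work than the paper's bookkeeping with $\IAPriorities{\cdot}$, but it yields a reusable lemma that is strictly stronger than what Lemma~\ref{lem:iuap-priorities} alone gives: it pins down $\IAWinners{\cdot}$ exactly, not just its priority multiset.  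One small wording point: ``even smaller priority'' in your cascade description should be read as ``smaller than the currently revealed bidder's priority,'' not ``smaller than all previously displaced priorities''; the invariant you actually need and use is only that all cascade priorities stay below $\BPri{\Bidderpp}$.
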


\begin{proof}
  Suppose not.
  Then by Lemma~\ref{lem:iuap-threshold-nondec}, we have $\IAThreshold{\IAuc}{\Item} < \IAThreshold{\IAucp}{\Item}$.
  Let $\Pri$ denote $\BPri{\Bidder}$.
  Since $\IAucp = \IAuc + \Bidder$ and $\Bidder$ belongs to $\IALosers{\IAucp}$, we deduce that $\Pri$ does not belong to $\IAPriorities{\IAuc}$.
  Since $\Bidder$ belongs to $\IALosers{\IAucp}$, we deduce that $\Pri$ does not belong to $\IAPriorities{\IAucp}$.
  Hence Lemma~\ref{lem:iuap-priorities} implies that $\IAPriorities{\IAucp} = \IAPriorities{\IAuc}$.

  Let $\IAucpp$ denote $\IAuc + \Bidderp$ where $\Bidderp = (\BidId, \Set{(\Item, \Offerp)}, \Prip)$ is a bidder such that $\Prip$ does not belong to $\IAPriorities{\IAuc} + \Pri$ and $\IAThreshold{\IAuc}{\Item} < (\Offerp, \Prip) < \IAThreshold{\IAucp}{\Item}$.

  Let $\IAucppp$ denote $\IAucp + \Bidderp$.
  Since $\Prip$ does not belong to $\IAPriorities{\IAuc} + \Pri$, Lemma~\ref{lem:iuap-priorities} implies that $\Prip$ does not belong to $\IAPriorities{\IAucp}$, and we deduce that $\Bidderp$ belongs to either $\IAWinners{\IAucppp}$ or $\IALosers{\IAucppp}$.
  Since $(\Offerp, \Prip) < \IAThreshold{\IAucp}{\Item}$, Lemma \ref{lem:iuap-threshold} implies that $\Bidderp$ belongs to $\IALosers{\IAucppp}$.
  Hence Lemma~\ref{lem:iuap-priorities} implies that $\IAPriorities{\IAucppp} = \IAPriorities{\IAucp}$.
  Since we have established above that $\IAPriorities{\IAucp} = \IAPriorities{\IAuc}$, we deduce that $\IAPriorities{\IAucppp} = \IAPriorities{\IAuc}$.

  Since $\Prip$ does not belong to $\IAPriorities{\IAuc}$, we deduce that $\Bidderp$ belongs to either $\IAWinners{\IAucpp}$ or $\IALosers{\IAucpp}$.
  Since $(\Offerp, \Prip) > \IAThreshold{\IAuc}{\Item}$, Lemma~\ref{lem:iuap-threshold} implies that $\Bidderp$ belongs to $\IAWinners{\IAucpp}$ and hence $\Prip$ belongs to $\IAPriorities{\IAucpp}$.
  We consider two cases.

  Case 1: $\Car{\IAPriorities{\IAucpp}} \leq \Car{\IAPriorities{\IAuc}}$.
  Lemma~\ref{lem:iuap-priorities} implies that there exists a real $\Pripp$ in $\IAPriorities{\IAuc}$ that does not belong to $\IAPriorities{\IAucpp}$.
  Since $\Pri$ does not belong to $\IAPriorities{\IAuc}$, we have $\Pri \not= \Pripp$.
  Since $\IAucppp = \IAucpp + \Bidder$ and $\Pri \not= \Pripp$, Lemma~\ref{lem:iuap-priorities} implies that $\Pripp$ does not belong to $\IAPriorities{\IAucppp}$, a contradiction since $\IAPriorities{\IAucppp} = \IAPriorities{\IAuc}$.

  Case 2: $\Car{\IAPriorities{\IAucpp}} > \Car{\IAPriorities{\IAuc}}$.
  Since $\IAPriorities{\IAucppp} = \IAPriorities{\IAuc}$, we deduce that $\Car{\IAPriorities{\IAucpp}} > \Car{\IAPriorities{\IAucppp}}$.
  Since $\IAucppp = \IAucpp + \Bidder$, Lemma~\ref{lem:iuap-priorities} implies that $\Car{\IAPriorities{\IAucppp}} \geq \Car{\IAPriorities{\IAucpp}}$, a contradiction.
\end{proof}

\begin{lemma}
  \label{lem:iuap-threshold-success}
  Let $\IAuc = (\MBidders, \Items)$ and $\IAucp = \IAuc + \Bidder$ be IUAPs where $\Bidder = (\BidId, \UBid, \Pri)$ and $\Pri$ does not belong to $\IAPriorities{\IAuc}$, and let $\Item$ be an item in $\Items$.
  Assume that $(\Item, \Offer)$ belongs to $\UBid$, and that $\IAThreshold{\IAuc}{\Item} < (\Offer, \Pri)$.
  Then $\Bidder$ belongs to $\IAWinners{\IAucp}$.
\end{lemma}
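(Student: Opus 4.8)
The plan is to reduce to the single-bid situation of Lemma~\ref{lem:iuap-threshold}. Let $\Bidderp$ denote the bidder $(\BidId, \Set{(\Item, \Offer)}, \Pri)$ and let $\IAucpp$ denote $\IAuc + \Bidderp$. Since $\Pri \notin \IAPriorities{\IAuc}$, the bidder $\Bidderp$ lies in $\IAWinners{\IAucpp}$ or in $\IALosers{\IAucpp}$, and since $\IAThreshold{\IAuc}{\Item} < (\Offer, \Pri)$, Lemma~\ref{lem:iuap-threshold} places $\Bidderp$ in $\IAWinners{\IAucpp}$. It then suffices to transfer this conclusion from $\Bidderp$ to $\Bidder$, whose bid $\UBid$ contains that of $\Bidderp$.

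First I would run Algorithm~\ref{alg:to-uap} on $\IAucp$ using the moves of some execution of Algorithm~\ref{alg:to-uap} on $\IAuc$; this is a legal prefix because $\IAucp$ and $\IAuc$ have the same multibidders, except that the priority-$\Pri$ multibidder of $\IAucp$ has $\Bidder$ appended. It reaches the configuration whose UAP is $\UAP{\IAuc}$. At that point $\Greedy{\UAP{\IAuc}}{\Pri} = 0$, and every predecessor of $\Bidder$ in its multibidder (namely, every priority-$\Pri$ bidder of $\IAuc$) has been revealed and is unmatched by Lemma~\ref{lem:iuap-exit}, since $\Pri \notin \IAPriorities{\IAuc}$; so $\Bidder$ is ready, and revealing it gives the UAP $\UAP{\IAuc} + \Bidder$. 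Carrying out the same prefix on $\IAucpp$ and then revealing $\Bidderp$ gives $\UAP{\IAuc} + \Bidderp$; since the invariant of Lemma~\ref{lem:iuap-tail} holds here, Lemma~\ref{lem:iuap-unique} gives $\Unique{\UAP{\IAuc} + \Bidderp}$, and if $\Bidderp$ were unmatched in $\UAP{\IAuc} + \Bidderp$ then Lemma~\ref{lem:uap-one-way-street} would make it a permanent loser, contradicting $\Bidderp \in \IAWinners{\IAucpp}$; hence $\Bidderp$ is matched in every greedy MWM of $\UAP{\IAuc} + \Bidderp$. A short argument at the UAP level shows that enlarging one bidder's bid preserves the property of being matched in every greedy MWM: if a greedy MWM of the enlarged UAP left that bidder unmatched, that matching would also be an MWM of the original UAP, of the same cardinality and priority, hence a greedy MWM of it, a contradiction. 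Applying this with $\Bidderp$'s bid enlarged to $\UBid$, I conclude that $\Bidder$ is matched in every greedy MWM of $\UAP{\IAuc} + \Bidder$.

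It remains to show that $\Bidder$ stays matched through the rest of Algorithm~\ref{alg:to-uap}, that is, that $\Bidder \in \IAWinners{\IAucp}$; this is the crux of the proof. Every bidder revealed after $\Bidder$ has priority in $\IAPriorities{\IAuc}$ (a priority outside $\IAPriorities{\IAuc}$ other than $\Pri$ is already exhausted in $\UAP{\IAuc}$, and $\Pri$ is exhausted once $\Bidder$ has been revealed), and when such a ready bidder is revealed, Lemmas~\ref{lem:uap-xor} and~\ref{lem:uap-swap} imply that at most one currently matched bidder is dislodged, namely a minimum-priority one among the associated candidate set. I would argue that $\Bidder$ is never this bidder: the remaining currently matched bidders all have priorities in $\IAPriorities{\IAuc}$ whereas $\BPri{\Bidder} = \Pri \notin \IAPriorities{\IAuc}$, and the hypothesis $\IAThreshold{\IAuc}{\Item} < (\Offer, \Pri)$ forces $\Bidder$ to remain matched. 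Concretely I would set this up by contradiction: if $\Bidder \in \IALosers{\IAucp}$ then $\Pri \notin \IAPriorities{\IAucp}$, so Lemma~\ref{lem:iuap-priorities} gives $\IAPriorities{\IAucp} = \IAPriorities{\IAuc}$, and Lemma~\ref{lem:iuap-threshold-unchanged} gives $\IAThreshold{\IAucp}{\Itemp} = \IAThreshold{\IAuc}{\Itemp}$ for every item $\Itemp$; then, mirroring the proof of Lemma~\ref{lem:iuap-threshold-unchanged}, I would introduce a fresh single-bid bidder on $\Item$ with offer $\Offer$ and a new priority just below $\Pri$ that avoids $\IAPriorities{\IAuc} + \Pri$, and apply Lemma~\ref{lem:iuap-threshold} to it in both $\IAuc$ extended by the fresh bidder and $\IAucp$ extended by the fresh bidder, using Lemmas~\ref{lem:iuap-losers-again} and~\ref{lem:iuap-priorities} to reach a contradiction about whether the fresh bidder wins.

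I expect this last step to be the main obstacle, since the matched set is not monotone under extension of the UAP: promoting ``matched immediately after $\Bidder$ is revealed'' to ``matched in the final UAP'' is exactly what needs work. The delicate point is that $\Bidder$'s larger bid may let it grab an item other than $\Item$, triggering cascades different from those of the single-bid run, so one cannot simply identify $\UAP{\IAucp}$ with $\UAP{\IAucpp}$ with $\Bidder$'s bid enlarged; the argument must instead control, priority class by priority class, which bidders can be displaced, and it is here that the threshold hypothesis on $\Item$ does the real work.
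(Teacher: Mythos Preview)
Your final contradiction argument has the right setup but the wrong closing step. You propose to introduce a fresh single-bid bidder $\Bidderp$ on $\Item$ with offer $\Offer$ and priority $\Prip < \Pri$ avoiding $\IAPriorities{\IAuc} + \Pri$, with $(\Offer, \Prip) > \IAThreshold{\IAuc}{\Item}$, and then ``apply Lemma~\ref{lem:iuap-threshold} to it in both $\IAuc$ extended by the fresh bidder and $\IAucp$ extended by the fresh bidder \ldots\ to reach a contradiction about whether the fresh bidder wins.'' But there is no such contradiction: since $\IAThreshold{\IAucp}{\Item} = \IAThreshold{\IAuc}{\Item}$ by Lemma~\ref{lem:iuap-threshold-unchanged}, Lemma~\ref{lem:iuap-threshold} tells you $\Bidderp$ wins in \emph{both} $\IAuc + \Bidderp$ and $\IAucp + \Bidderp$. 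Mirroring the proof of Lemma~\ref{lem:iuap-threshold-unchanged} does not help either, because that proof exploits a strict gap between two thresholds, whereas here the thresholds coincide.

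The missing tool is Lemma~\ref{lem:uap-in-out}. The paper's argument is exactly your fresh-bidder construction applied to $\IAucp$ alone: let $\IAucpp = \IAucp + \Bidderp$; then $\Bidderp \in \IAWinners{\IAucpp}$ (by Lemma~\ref{lem:iuap-threshold}), so in any greedy MWM $\Match$ of $\UAP{\IAucpp}$ the edge $(\Bidderp, \Item)$ is present; meanwhile $\Bidder \in \IALosers{\IAucp}$ implies $\Bidder \in \IALosers{\IAucpp}$ by Lemma~\ref{lem:iuap-losers-again}, so $\Bidder$ is unmatched in $\Match$. Now $\Bidder$ has $(\Item, \Offer)$ in its bid with priority $\Pri$, and $\Bidderp$ is matched to $\Item$ with the same offer $\Offer$ and priority $\Prip$; Lemma~\ref{lem:uap-in-out} forces $(\Offer, \Pri) < (\Offer, \Prip)$, i.e.\ $\Pri < \Prip$, contradicting the choice $\Prip < \Pri$. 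Note that this direct argument renders your entire second paragraph (the reduction to $\UAP{\IAuc} + \Bidder$ and the bid-enlargement step) unnecessary: the paper never establishes that $\Bidder$ is matched immediately after being revealed, and it does not need to.
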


\begin{proof}
  Suppose not.
  Let $\Aucp = (\Biddersp, \Items)$ denote $\UAP{\IAucp}$.
  Since $\Pri$ does not belong to $\IAPriorities{\IAuc}$, we deduce that $\Bidder$ belongs to $\Biddersp$.
  Thus $\Bidder$ belongs to $\Biddersp \setminus \IAWinners{\IAucp} = \IALosers{\IAucp}$, and so $\IAThreshold{\IAucp}{\Item} = \IAThreshold{\IAuc}{\Item}$ by Lemma~\ref{lem:iuap-threshold-unchanged}.

  Let $\IAucpp$ denote $\IAucp + \Bidderp$ where $\Bidderp = (\BidId, \Set{(\Item, \Offer)}, \Prip)$ is a bidder such that $\Prip$ does not belong to $\IAPriorities{\IAuc} + \Pri$, $\IAThreshold{\IAuc}{\Item} < (\Offer, \Prip)$, and $\Prip < \Pri$.
  Since $\Prip$ does not belong to $\IAPriorities{\IAuc} + \Pri$, we deduce that $\Bidderp$ belongs to either $\IAWinners{\IAucpp}$ or $\IALosers{\IAucpp}$.
  Then, by Lemma~\ref{lem:iuap-threshold}, $\Bidderp$ belongs to $\IAWinners{\IAucpp}$.
  Let $\Aucpp = (\Bidderspp, \Items)$ denote $\UAP{\IAucpp}$, and let $\Match$ be a greedy MWM of $\Aucpp$.
  Since $\Bidderp$ belongs to $\IAWinners{\IAucpp}$, the edge $(\Bidderp, \Item)$ belongs to $\Match$.
  Since $\Bidder$ belongs to $\IALosers{\IAucp}$, Lemma~\ref{lem:iuap-losers-again} implies that $\Bidder$ belongs to $\IALosers{\IAucpp}$, and hence that $\Bidder$ is unmatched in $\Match$.
  By Lemma~\ref{lem:uap-in-out}, we find that $(\Offer, \Pri) < (\Offer, \Prip)$ and hence $\Pri < \Prip$, a contradiction.
\end{proof}

\begin{lemma}
  \label{lem:iuap-threshold-failure}
  Let $\IAuc = (\MBidders, \Items)$ and $\IAucSub{0} = \IAuc + \Bidder$ be IUAPs where $\Bidder = (\BidId, \UBid, \Pri)$, $\Pri$ does not belong to $\IAPriorities{\IAuc}$, and $\UBid = \Set{(\ItemSub{1}, \OfferSub{1}), \dotsc, (\ItemSub{k}, \OfferSub{k})}$.
  Assume that $(\OfferSub{i}, \Pri) < \IAThreshold{\IAuc}{\ItemSub{i}}$ holds for all $i$ such that $1 \leq i \leq k$.
  Then $\Bidder$ belongs to $\IALosers{\IAucSub{0}}$.
\end{lemma}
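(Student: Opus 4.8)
The plan is to argue by contradiction: assuming $\Bidder$ wins in $\IAucSub{0}=\IAuc+\Bidder$, I will surround $\Bidder$ with one ``shadow'' single-item competitor for each item in $\UBid$, each tuned so as to be a \emph{loser}, and then apply Lemma~\ref{lem:uap-in-out} to obtain a contradictory comparison of priorities.

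First a preliminary remark: since $\Pri\notin\IAPriorities{\IAuc}$ we have $\Greedy{\UAP{\IAuc}}{\Pri}=0$, so in a run of Algorithm~\ref{alg:to-uap} on $\IAucSub{0}$ every bidder preceding $\Bidder$ in the priority-$\Pri$ multibidder of $\IAucSub{0}$ is revealed and unmatched, whence $\Bidder$ eventually becomes ready and is revealed; thus $\Bidder$ lies in the bidder set of $\UAP{\IAucSub{0}}$ (and is the last-revealed bidder of its multibidder there), so it suffices to rule out $\Bidder\in\IAWinners{\IAucSub{0}}$. Suppose then that $\Bidder\in\IAWinners{\IAucSub{0}}$. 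For each $i$ with $1\le i\le k$, using $(\OfferSub{i},\Pri)<\IAThreshold{\IAuc}{\ItemSub{i}}$ and the density of $\Reals$, choose a real $\Prip_i$ with $\Prip_i>\Pri$, with $(\OfferSub{i},\Prip_i)<\IAThreshold{\IAuc}{\ItemSub{i}}$, and with the $\Prip_i$ pairwise distinct and all lying outside $\IAPriorities{\IAuc}\cup\{\Pri\}$; put $\Bidderp_i=(\BidIdSub{i}',\Set{(\ItemSub{i},\OfferSub{i})},\Prip_i)$ for a fresh ID $\BidIdSub{i}'$, and let $\IAucppp$ denote $\IAucSub{0}+\Bidderp_1+\dotsb+\Bidderp_k$. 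Since $\Prip_i\notin\IAPriorities{\IAuc}$, the bidder $\Bidderp_i$ is a winner or loser of $\IAuc+\Bidderp_i$; if it were a winner, Lemma~\ref{lem:iuap-threshold} would force $(\OfferSub{i},\Prip_i)>\IAThreshold{\IAuc}{\ItemSub{i}}$, a contradiction, so $\Bidderp_i\in\IALosers{\IAuc+\Bidderp_i}$, and hence $\Bidderp_i\in\IALosers{\IAucppp}$ by Lemma~\ref{lem:iuap-losers-again} (as $\IAucppp$ extends $\IAuc+\Bidderp_i$).

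The crux is that $\Bidder$ is still a winner of $\IAucppp$. Since each $\Bidderp_i$ is a loser whose priority lies outside $\IAPriorities{\IAucSub{0}}$, repeated use of Lemma~\ref{lem:iuap-priorities} — using the invariant of Lemma~\ref{lem:iuap-tail} to see that the loser $\Bidderp_i$ keeps its priority out of $\IAPriorities{\IAucSub{0}+\Bidderp_1+\dotsb+\Bidderp_i}$ — gives $\IAPriorities{\IAucppp}=\IAPriorities{\IAucSub{0}}$, and this set contains $\Pri$ because $\Bidder\in\IAWinners{\IAucSub{0}}$. By Lemma~\ref{lem:iuap-one} exactly one priority-$\Pri$ bidder is matched in each greedy MWM of $\UAP{\IAucppp}$, and by the $\CTail$ invariant this matched bidder is the last-revealed bidder of the priority-$\Pri$ multibidder of $\IAucppp$. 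But that multibidder is precisely the one already present in $\IAucSub{0}$ (the $\Bidderp_i$ have distinct, different priorities), and $\Bidder$ — revealed and last in $\UAP{\IAucSub{0}}$ — remains its last-revealed bidder in the extension $\UAP{\IAucppp}$; hence $\Bidder\in\IAWinners{\IAucppp}$.

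Now fix a greedy MWM $\Match$ of $\UAP{\IAucppp}$. Then $\Bidder$ is matched in $\Match$, necessarily to some $\ItemSub{j}$ with $1\le j\le k$, while $\Bidderp_j\in\IALosers{\IAucppp}$ is unmatched in every greedy MWM of $\UAP{\IAucppp}$. Both $\Bidder$ and $\Bidderp_j$ have an edge of weight $\OfferSub{j}$ to $\ItemSub{j}$ in $\UAP{\IAucppp}$, so Lemma~\ref{lem:uap-in-out}, applied with $\Bidderp_j$ as the unmatched bidder and $\Bidder$ as the bidder matched to $\ItemSub{j}$, yields $(\OfferSub{j},\Prip_j)<(\OfferSub{j},\Pri)$, i.e.\ $\Prip_j<\Pri$, contradicting $\Prip_j>\Pri$. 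Therefore $\Bidder\notin\IAWinners{\IAucSub{0}}$, so $\Bidder\in\IALosers{\IAucSub{0}}$. The step demanding the most care is the proof that $\Bidder$ stays a winner of $\IAucppp$: the competitors $\Bidderp_i$ have \emph{higher} priority than $\Bidder$ and could a priori displace it, and the point is to first show each is itself a loser (hence inert at the level of winning priorities) and then to use the confluence/$\CTail$ structure of the IUAP-to-UAP map to pin down the unique matched priority-$\Pri$ bidder as $\Bidder$ itself.
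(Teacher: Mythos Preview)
Your argument is correct and follows essentially the same strategy as the paper: assume $\Bidder$ wins, introduce one single-item ``shadow'' bidder per item with priority strictly above $\Pri$ but still below the relevant threshold, show each shadow bidder is a loser, show $\Bidder$ remains the unique matched priority-$\Pri$ bidder in the enlarged IUAP, and then derive a contradiction from Lemma~\ref{lem:uap-in-out}. The only notable technical difference is in how loser-hood of the shadow bidders is established: the paper adds them sequentially to $\IAucSub{0}$ and invokes Lemma~\ref{lem:iuap-threshold-nondec} together with Lemma~\ref{lem:iuap-threshold} at each step, whereas you add each $\Bidderp_i$ directly to $\IAuc$, apply Lemma~\ref{lem:iuap-threshold} there, and then propagate to $\IAucppp$ via repeated use of Lemma~\ref{lem:iuap-losers-again} (using that the additions commute since all the new priorities are distinct). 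Your route thus bypasses Lemma~\ref{lem:iuap-threshold-nondec} entirely, which is a mild simplification; otherwise the two proofs are the same.
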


\begin{proof}
  Suppose not.
  Since $\Pri$ does not belong to $\IAPriorities{\IAuc}$, we deduce that $\Bidder$ belongs to $\IAWinners{\IAucSub{0}}$, and hence that $\Pri$ belongs to $\IAPriorities{\IAucSub{0}}$.

  For $i$ ranging from $1$ to $k$, let $\IAucSub{i}$ denote the IUAP $\IAucSub{i-1} + \BidderSub{i}$ where $\BidderSub{i} = (\BidIdSub{i}, \Set{(\ItemSub{i}, \OfferSub{i})}, \PriSub{i})$ and $\PriSub{i}$ is a real number satisfying the following conditions:
  $\PriSub{i}$ does not belong to $\IAPriorities{\IAucSub{i-1}}$;
  $\Pri < \PriSub{i}$;
  $(\OfferSub{i}, \PriSub{i}) < \IAThreshold{\IAuc}{\ItemSub{i}}$.
  Since $\PriSub{i}$ does not belong to $\IAPriorities{\IAucSub{i-1}}$, we deduce that $\BidderSub{i}$ belongs to either $\IAWinners{\IAucSub{i}}$ or $\IALosers{\IAucSub{i}}$ for $1 \leq i \leq k$.
  Then, by Lemmas~\ref{lem:iuap-threshold} and~\ref{lem:iuap-threshold-nondec}, we deduce that $\BidderSub{i}$ belongs to $\IALosers{\IAucSub{i}}$ for $1 \leq i \leq k$.
  By repeated application of Lemma \ref{lem:iuap-priorities}, we find that $\IAPriorities{\IAucSub{i}} = \IAPriorities{\IAucSub{0}}$ for $1 \leq i \leq k$, and hence that $\Pri$ belongs to $\IAPriorities{\IAucSub{k}}$.

  We claim that $\Bidder$ belongs to $\IAWinners{\IAucSub{k}}$.
  To prove this claim, let $\MBidder$ denote the unique multibidder in $\IAucSub{k}$ for which $\MBPri{\MBidder} = \BPri{\Bidder}$.
  Let $\ell$ denote $\Car{\MBBiddersAll{\MBidder}}$, and observe that $\Bidder = \MBBidder{\MBidder}{\ell}$.
  Furthermore, since $\Pri$ does not belong to $\IAPriorities{\IAuc}$, we deduce that $\MBBidder{\MBidder}{i}$ belongs to $\IALosers{\IAuc}$ for $1 \leq i < \ell$.
  By repeated application of Lemma~\ref{lem:iuap-losers-again}, we deduce that $\MBBidder{\MBidder}{i}$ belongs to $\IALosers{\IAucSub{k}}$ for $1 \leq i< \ell$.
  Since $\Pri$ belongs to $\IAPriorities{\IAucSub{k}}$, the claim follows.

  Let $\Match$ denote a greedy MWM of $\UAP{\IAucSub{k}}$.
  Since $\Bidder$ belongs to $\IAWinners{\IAucSub{k}}$, there is a unique integer $i$, $1 \leq i \leq k$, such that $\Match$ contains edge $(\Bidder, \ItemSub{i})$.
  Let $i$ denote this integer.
  Since $\PriSub{i}$ does not belong to $\IAPriorities{\IAucSub{k}}$, we know that $\BidderSub{i}$ belongs to $\IALosers{\IAucSub{k}}$ and hence that $\BidderSub{i}$ is not matched in any greedy MWM of $\UAP{\IAucSub{k}}$.
  By Lemma~\ref{lem:uap-in-out}, we deduce that $(\OfferSub{i}, \PriSub{i}) < (\OfferSub{i}, \Pri)$.
  Hence $\PriSub{i} < \Pri$, contradicting the definition of $\PriSub{i}$.
\end{proof}

\begin{lemma}
  \label{lem:iuap-threshold-summary}
  Let $\IAucSub{0} = (\MBidders, \Items)$ be an IUAP, let $\Pri$ be a real that is not equal to the priority of any multibidder in $\MBidders$, let $k$ be a nonnegative integer, and for $i$ ranging from $1$ to $k$, let $\IAucSub{i}$ denote the IUAP $\IAucSub{i-1} + \BidderSub{i}$, where $\BPri{\BidderSub{i}} = \Pri$.
  Let $I$ denote the set of all integers $i$ in $\Set{1, \dotsc, k}$ such that there exists an item $\Item$ in $\Items$ for which $(\WPair{\BidderSub{i}}{\Item}, \Pri) > \IAThreshold{\IAucSub{0}}{\Item}$.
  If $I$ is empty, then $\Pri$ does not belong to $\IAPriorities{\IAucSub{k}}$.
  Otherwise, $\BidderSub{j}$ belongs to $\IAWinners{\IAucSub{k}}$, where $j$ denotes the minimum integer in $I$.
\end{lemma}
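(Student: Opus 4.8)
The plan is to prove the statement by induction on $k$. For the base case $k=0$ we have $I=\emptyset$, and since $\CTail$ holds at the end of Algorithm~\ref{alg:to-uap} (Lemma~\ref{lem:iuap-tail}) every matched bidder has the priority of one of the multibidders in $\MBidders$ (Lemma~\ref{lem:iuap-one}); as $\Pri$ is not such a priority, $\Pri\notin\IAPriorities{\IAucSub{0}}$, as required. For the inductive step, set $I'=I\cap\Set{1,\dotsc,k-1}$. Suppose first that $I'$ is nonempty, so $j:=\min I=\min I'\le k-1$. By the inductive hypothesis $\BidderSub{j}\in\IAWinners{\IAucSub{k-1}}$, hence by the $\CTail$ invariant $\BidderSub{j}$ is the last revealed bidder of its multibidder in the run of Algorithm~\ref{alg:to-uap} producing $\UAP{\IAucSub{k-1}}$; thus $\BidderSub{j+1},\dotsc,\BidderSub{k-1}$ are never revealed, and since $\IAucSub{k}$ only appends $\BidderSub{k}$ to this multibidder, $\BidderSub{k}$ never becomes ready and that same run is a valid execution of Algorithm~\ref{alg:to-uap} on $\IAucSub{k}$. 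By confluence (Lemma~\ref{lem:iuap-confluence}), $\UAP{\IAucSub{k}}=\UAP{\IAucSub{k-1}}$, so $\BidderSub{j}\in\IAWinners{\IAucSub{k}}$, which is what is claimed since $j=\min I$.

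Now suppose $I'$ is empty, so the inductive hypothesis gives $\Pri\notin\IAPriorities{\IAucSub{k-1}}$. Using the cardinality inequality together with $\IAPriorities{\IAucSub{i+1}}\subseteq\IAPriorities{\IAucSub{i}}+\Pri$ from Lemma~\ref{lem:iuap-priorities}, note that $\Pri\in\IAPriorities{\IAucSub{i}}$ would force $\IAPriorities{\IAucSub{i+1}}=\IAPriorities{\IAucSub{i}}$ and hence $\Pri\in\IAPriorities{\IAucSub{k-1}}$, a contradiction; so $\Pri\notin\IAPriorities{\IAucSub{i}}$ for every $i\le k-1$. Then Lemma~\ref{lem:iuap-exit}(2) shows each $\BidderSub{i}$ with $i\le k-1$ lies in $\UAP{\IAucSub{i}}$ unmatched, i.e.\ $\BidderSub{i}\in\IALosers{\IAucSub{i}}$, and applying Lemma~\ref{lem:iuap-threshold-unchanged} repeatedly yields $\IAThreshold{\IAucSub{k-1}}{\Item}=\IAThreshold{\IAucSub{0}}{\Item}$ for every item $\Item$. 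If $k\in I$, then $j=\min I=k$ and some item $\Item$ satisfies $(\WPair{\BidderSub{k}}{\Item},\Pri)>\IAThreshold{\IAucSub{k-1}}{\Item}$, so Lemma~\ref{lem:iuap-threshold-success} (with $\IAuc=\IAucSub{k-1}$ and $\Bidder=\BidderSub{k}$) gives $\BidderSub{k}\in\IAWinners{\IAucSub{k}}$, as needed. If instead $k\notin I$, so $I=\emptyset$, it suffices to prove $\BidderSub{k}\in\IALosers{\IAucSub{k}}$: every priority-$\Pri$ bidder of $\UAP{\IAucSub{k}}$ is some $\BidderSub{i}$ with $i\le k$, and such a bidder is a loser of $\IAucSub{i}$ and hence (Lemma~\ref{lem:iuap-losers-again}) of $\IAucSub{k}$, so none is matched and $\Pri\notin\IAPriorities{\IAucSub{k}}$.

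The step I expect to be the main obstacle is this last claim $\BidderSub{k}\in\IALosers{\IAucSub{k}}$ when $k\notin I$: the definition of $I$ uses a \emph{strict} inequality, so from $k\notin I$ I only obtain $(\WPair{\BidderSub{k}}{\Item},\Pri)\le\IAThreshold{\IAucSub{k-1}}{\Item}$ for every item, whereas Lemma~\ref{lem:iuap-threshold-failure} requires strict inequalities. I would argue by contradiction. If $\BidderSub{k}\in\IAWinners{\IAucSub{k}}$, then by the contrapositive of Lemma~\ref{lem:iuap-threshold-failure} some item $\Item$ has $(\WPair{\BidderSub{k}}{\Item},\Pri)\ge\IAThreshold{\IAucSub{k-1}}{\Item}$, which combined with the reverse inequality forces equality. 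Consider the single-item bidder $\Bidderp=(\BId{\BidderSub{k}},\Set{(\Item,\WPair{\BidderSub{k}}{\Item})},\Pri)$ and the IUAP $\IAucSub{k-1}+\Bidderp$: Lemma~\ref{lem:iuap-threshold} excludes $\Bidderp\in\IALosers{\IAucSub{k-1}+\Bidderp}$ (that would force a strict inequality at $\Item$), and an inspection of Algorithm~\ref{alg:to-uap} excludes $\Bidderp$ being unrevealed (otherwise some $\BidderSub{i}$ with $i\le k-1$, a loser of $\IAucSub{k-1}+\Bidderp$ by Lemma~\ref{lem:iuap-losers-again}, would be matched), so $\Bidderp\in\IAWinners{\IAucSub{k-1}+\Bidderp}$; hence by the definition of $\UAPwItem{\IAucSub{k-1}}{\Item}$ we get $\UAP{\IAucSub{k-1}+\Bidderp}=\UAPwItem{\IAucSub{k-1}}{\Item}+\Bidderp$. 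But the bid of $\Bidderp$ on $\Item$ equals $\AThreshold{\UAPwItem{\IAucSub{k-1}}{\Item}}{\Item}$, so the third condition of Lemma~\ref{lem:uap-threshold} says $\Bidderp$ is matched in some but not all greedy MWMs of this UAP, contradicting $\Unique{\UAP{\IAucSub{k-1}+\Bidderp}}$ (Lemmas~\ref{lem:iuap-tail} and~\ref{lem:iuap-unique}). This contradiction yields $\BidderSub{k}\in\IALosers{\IAucSub{k}}$ and completes the induction.
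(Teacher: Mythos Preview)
Your proof is correct and follows essentially the same path as the paper's, which also shows $\BidderSub{i}\in\IALosers{\IAucSub{i}}$ for $i<j$ via Lemmas~\ref{lem:iuap-threshold-unchanged} and~\ref{lem:iuap-threshold-failure}, invokes Lemma~\ref{lem:iuap-threshold-success} at index $j$, and then uses confluence to conclude $\UAP{\IAucSub{k}}=\UAP{\IAucSub{j}}$; your inductive repackaging is equivalent. The equality edge case you flag is real---the paper's appeal to Lemma~\ref{lem:iuap-threshold-failure} silently needs strict $<$ while $I=\emptyset$ literally only gives $\le$---and your resolution via $\Unique$ is correct, though a shorter route is to observe that the two clauses of Lemma~\ref{lem:iuap-threshold} already force any single-item bidder with fresh priority $\Pri$ to satisfy a strict inequality in one direction or the other, so equality at the threshold is impossible.
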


\begin{proof}
  If $I$ is empty, then by repeated application of Lemmas~\ref{lem:iuap-threshold-unchanged} and~\ref{lem:iuap-threshold-failure}, we find that $\BidderSub{i}$ belongs to $\IALosers{\IAucSub{i}}$ for $1 \leq i \leq k$.
  By repeated application of Lemma~\ref{lem:iuap-losers-again}, we deduce that $\BidderSub{i}$ belongs to $\IALosers{\IAucSub{k}}$ for $1 \leq i \leq k$.
  It follows that $\Pri$ does not belong to $\IAPriorities{\IAucSub{k}}$, as required.

  Now assume that $I$ is nonempty, and let $j$ denote the minimum integer in $I$.
  Arguing as in the preceding paragraph, we find that $\Pri$ does not belong to $\IAPriorities{\IAucSub{j-1}}$.
  By repeated application of Lemma \ref{lem:iuap-threshold-unchanged}, we deduce that $\IAThreshold{\IAucSub{j-1}}{\Item} = \IAThreshold{\IAucSub{0}}{\Item}$ for all items $\Item$ in $\Items$.
  Thus Lemma~\ref{lem:iuap-threshold-success} implies that $\BidderSub{j}$ belongs to $\IAWinners{\IAucSub{j}}$.
  Then, since $\BidderSub{j+1}, \dotsc, \BidderSub{k}$ all have the same priority as $\BidderSub{j}$, it is easy to argue by Lemma~\ref{lem:iuap-confluence} that $\UAP{\IAucSub{k}} = \UAP{\IAucSub{j}}$, and hence $\BidderSub{j}$ belongs to $\IAWinners{\IAucSub{k}}$, as required.
\end{proof}

\begin{proof}[Proof of Lemma~\ref{lem:iuap-threshold-george}]
  It is easy to see that the claims of the lemma follow from Lemma~\ref{lem:iuap-threshold-summary}.
\end{proof}

\comment{
  Remaining sections use Lemma~\ref{lem:iuap-exit} from Section~\ref{sec:unfold-iuap}, and Lemma~\ref{lem:iuap-threshold-george} from Section~\ref{sec:iuap-threshold}.
}

\DeclarePairedDelimiter\abs{\lvert}{\rvert}

\newcommand{\items}{{\BItemsId}}
\newcommand{\threshold}{\ThresholdId}

\newcommand{\Uap}{\UAPId}

\section{Stable Marriage with Indifferences}
\label{sec:smiw}

The \emph{stable marriage model with incomplete and weak preferences
  (SMIW)} involves a set $P$ of men and a set $Q$ of women.  The
preference relation of each man $p$ in $P$ is specified as a binary
relation $\succeq_p$ over $Q \cup \{\varnothing\}$ that satisfies
transitivity and totality, where $\varnothing$ denotes being
unmatched.
Similarly, the preference relation of each woman $q$ in $Q$ is
specified as a binary relation $\succeq_q$ over $P \cup
\{\varnothing\}$ that satisfies transitivity and totality, where
$\varnothing$ denotes being unmatched.
To allow indifferences,
the preference relations are not required to satisfy antisymmetry.
We will use $\succ_p$ and $\succ_q$ to denote the asymmetric part
of $\succeq_p$ and $\succeq_q$ respectively.

A matching is a function $\mu$ from $P$ to $Q \cup \{\varnothing\}$
such that for any woman $q$ in $Q$, there exists at most one man $p$
in $P$ for which $\mu(p) = q$. Given a matching $\mu$ and a woman $q$
in $Q$, we denote
\begin{equation*}
\mu(q) =
\begin{cases}
p & \text{if } \mu(p) = q \\
\varnothing & \text{if there is no man $p$ in $P$ such that $\mu(p) = q$}
\end{cases}
\end{equation*}

A matching $\mu$ is \emph{individually rational} if
for any man $p$ in $P$ and woman $q$ in $Q$ such that $\mu(p) = q$,
we have $q \succeq_p \varnothing$ and $p \succeq_q \varnothing$.
A pair $(p, q')$ in $P \times Q$ is said to form a
\emph{strongly blocking pair} for a matching $\mu$
if $q' \succ_p \mu(p)$ and $p \succ_{q'} \mu(q')$.
A matching is \emph{weakly stable} if it is individually rational and does not admit a strongly blocking pair.

For any matching $\mu$ and $\mu'$, we say that the binary
relation $\mu \succeq \mu'$ holds if for every man $p$ in $P$
and woman $q$ in $Q$, we have $\mu(p) \succeq_p \mu'(p)$
and $\mu(q) \succeq_q \mu'(q)$.
We let $\succ$ denote the asymmetric part of $\succeq$.
We say that a matching $\mu$ \emph{Pareto-dominates}
another matching $\mu'$ if $\mu \succ \mu'$.
We say that a matching is \emph{Pareto-optimal} if it is not Pareto-dominated by any other matching.
A matching is \emph{Pareto-stable} if it is Pareto-optimal and weakly stable.

A \emph{mechanism} is an algorithm that,
given $(P, Q, (\succeq_p)_{p \in P}, (\succeq_q)_{q \in Q})$,
produces a matching $\mu$.
A mechanism is said to be \emph{strategyproof (for the men)} if
for any man $p$ in $P$ expressing preference $\succeq_p'$ instead of
his true preference $\succeq_p$, we have $\mu(p) \succeq_p \mu'(p)$,
where $\mu$ and $\mu'$ are the matchings produced by the mechanism
given $\succeq_p$ and $\succeq_p'$, respectively, when
all other inputs are fixed.

By introducing extra men or women who prefer being unmatched
to being matched with any potential partner, we may
assume without loss of generality that the number of men
is equal to the number of women.
So, $P = \{p_1, \ldots, p_n\}$ and $Q = \{q_1, \ldots, q_n\}$.

\subsection{Algorithm}

The computation of a matching for SMIW is shown in Algorithm~\ref{alg:smiw}.
We construct an item for each woman in line~\ref{line:smiw-item},
and a multibidder for each man in line~\ref{line:smiw-bidder}
by examining the tiers of preferences of the men and
the utilities of the women.
Together with dummy items constructed in line~\ref{line:smiw-dummy},
this forms an IUAP,
from which we obtain a UAP and a greedy MWM $M_0$.
Using Lemma~\ref{lem:iuap-exit}, we argue that for any man $p_i$, exactly one of the bidders associated with $p_i$ is matched in $M_0$;
see the proof of Lemma~\ref{lem:smiw-valid}.
Finally, in line~\ref{line:smiw-return}, we use $M_0$ to determine the match of a man $p_i$ as follows, where $\Bidder$ denotes the unique bidder associated with $p_i$ that is matched in $M_0$:
if $\Bidder$ is matched in $M_0$ to the item corresponding to a woman $q_j$, then we match $p_i$ to $q_j$;
otherwise, $\Bidder$ is matched to a dummy item in $M_0$, and we leave $p_i$ unmatched.

\begin{algorithm}[htb]
\caption{}
\label{alg:smiw}
\begin{algorithmic}[1]

\State Let $p_0$ denote $\varnothing$.

\ForAll{$1 \leq j \leq n$}
  
\State Convert the preference relation $\succeq_{q_j}$ of woman $q_j$
into utility function $\psi_{q_j} \colon P \cup \{\varnothing\}
\to \Reals$ that satisfies the followings:
$\psi_{q_j}(\varnothing) = 0$;
for any $i$ and $i'$ in $\{0, 1, \ldots, n\}$, we have
$p_i \succeq_{q_j} p_{i'}$ if and only if
$\psi_{q_j}(p_i) \geq \psi_{q_j}(p_{i'})$.
This utility assignment should not depend on the preferences of the men.

\State \label{line:smiw-item}
Construct an item $v_j$ corresponding to woman $q_j$.

\EndFor

\ForAll{$n < j \leq 2 n$}

\State Let $q_j$ denote $\varnothing$.

\State \label{line:smiw-dummy}
Construct a dummy item $v_j$ corresponding to $q_j$.

\EndFor

\ForAll{$1 \leq i \leq n$}

\State Partition the set $\{1, \ldots, n\} \cup \{n+i\}$ of woman indices
into tiers $\tau_i(1), \ldots, \tau_i(K_i)$ according to the
preference relation of man $p_i$, such that for any $j$ in $\tau_i(k)$
and $j'$ in $\tau_i(k')$, we have $q_j \succeq_{p_i} q_{j'}$ if and
only if $k \leq k'$.

\State For $j$ in $\{1, \ldots, n\} \cup \{n+i\}$,
denote tier number $\kappa_i(q_j)$
as the unique $k$ such that $j$ in $\tau_i(k)$.

\State \label{line:smiw-bidder}
Construct a multibidder $t_i = (\sigma_i , z_i)$
with priority $z_i = i$ corresponding to man $p_i$.
The multibidder $t_i$ has $K_i$ bidders.
For each bidder $\sigma_i(k)$ we define
$\items(\sigma_i(k))$ as $\{ v_j \mid j \in \tau_i(k) \}$
and $w(\sigma_i(k), q_j)$ as $\psi_{q_j}(p_i)$,
where $\psi_{q_{n+i}}(p_i)$ is defined to be $0$.

\EndFor

\State $(T, V) = (\{ t_i \mid 1 \leq i \leq n \},
\{v_j \mid 1 \leq j \leq 2 n\} )$.

\State \label{line:smiw-uap}
$(U, V) = \Uap(T, V)$.

\State \label{line:smiw-greedy}
Compute a greedy MWM $M_0$ of UAP $(U, V)$
  as described in Section~\ref{sec:inc-hungarian-step}.

\State \label{line:smiw-return}
Output matching $\mu$ such that for all $1 \leq i \leq n$
and $1 \leq j \leq 2 n$, we have $\mu(p_i) = q_j$ if and only if
$\sigma_i(k)$ is matched to item $v_j$ in $M_0$ for some $k$.
\end{algorithmic}
\end{algorithm}

In Lemma~\ref{lem:smiw-rational}, we prove individually rationality
by arguing that the dummy items ensure that no man
or woman is matched to an unacceptable partner.
In Lemma~\ref{lem:smiw-stable}, we prove weak stability
using the properties of a greedy MWM.
In Lemmas~\ref{lem:smiw-pareto} and~\ref{lem:smiw-dominate},
we prove Pareto-optimality by showing that any matching that
Pareto-dominates the output matching induces another MWM
that contradicts the greediness of the MWM produced by the algorithm.
In Lemma~\ref{lem:smiw-threshold}, we establish two properties
of IUAP thresholds that are used to show strategyproofness
in Theorem~\ref{thm:smiw}.

\begin{lemma} \label{lem:smiw-valid}
Algorithm~\ref{alg:smiw} produces a valid matching.
\begin{proof}
First, we show that for any man $p_i$ where $1 \leq i \leq n$,
there exists at most one $j$ in $\{1, \ldots, 2 n\}$
such that bidder $\sigma_i(k)$ is matched to
item $v_j$ in $M_0$ for some $k$.
For the sake of contradiction, suppose bidder $\sigma_i(k)$
is matched to item $v_j$ and bidder $\sigma_i(k')$ is matched
to item $v_{j'}$ in $M_0$ for some $k$ and $k'$ where $j \neq j'$.
By Lemma~\ref{lem:iuap-exit}, we have $k \leq k'$ and $k' \leq k$.
Therefore, bidder $\sigma_i(k) = \sigma_i(k')$ is matched in $M_0$
to both $v_j$ and $v_{j'}$, which is a contradiction.

Next, we show that for any man $p_i$ where $1 \leq i \leq n$,
there exists at least one $j$ in $\{1, \ldots, 2 n\}$
such that bidder $\sigma_i(k)$ is matched to
item $v_j$ in $M_0$ for some $k$. For the sake of contradiction,
suppose bidder $\sigma_i(k)$ is unmatched in $M_0$ for all $k$.
Let $j$ denote $n+i$ and let $k$ denote $\kappa_i(q_j)$.
By Lemma~\ref{lem:iuap-exit},
the set $\Bidders$ contains bidder $\sigma_i(k)$.
Since both bidder $\sigma_i(k)$ and item $v_j$ are unmatched by $M_0$,
adding the pair $(\sigma_i(k), v_j)$ to $M_0$ gives
a matching of $(U, V)$ with the same weight and larger cardinality.
This contradicts the fact that $M_0$ is a greedy MWM of $(U, V)$.

This shows that $\mu(p_i)$ is well-defined for all men $p_i$
where $1 \leq i \leq n$.
Furthermore, since each item $v_j$ where $1 \leq j \leq n$ is matched to
at most one bidder in $M_0$, each woman $q_j$ is matched to
at most one man $p_i$ in $\mu$ where $1 \leq i \leq n$.
Hence, $\mu$ is a valid matching.
\end{proof}
\end{lemma}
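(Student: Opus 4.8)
The plan is to verify the two defining conditions of a valid matching directly from the way $\mu$ is read off from $M_0$ in Algorithm~\ref{alg:smiw}. The condition ``at most one man per woman'' is the easy half: for each $j$ with $1 \le j \le n$, the item $v_j$ is incident to at most one edge of the matching $M_0$, so there is at most one pair $(\sigma_i(k), v_j)$ in $M_0$, and hence at most one man $p_i$ with $\mu(p_i) = q_j$. The substance of the lemma is that $\mu(p_i)$ is well defined for each man $p_i$, i.e., that there is exactly one index $j \in \{1, \dots, 2n\}$ for which some bidder $\sigma_i(k)$ is matched to $v_j$ in $M_0$. I would prove this in two parts, in both cases applying Lemma~\ref{lem:iuap-exit} to the IUAP $(T, V)$, the multibidder $t_i = (\sigma_i, z_i)$, and the greedy MWM $M_0$ of the UAP $(U, V) = \Uap(T, V)$.

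For the ``at most one'' part, suppose toward a contradiction that $\sigma_i(k)$ is matched to $v_j$ and $\sigma_i(k')$ to $v_{j'}$ in $M_0$ with $j \ne j'$. Applying claim~(1) of Lemma~\ref{lem:iuap-exit} with $\sigma_i(k)$ in the role of the matched bidder, the fact that $\sigma_i(k') \in U$ forces $k' \le k$; applying it again with $\sigma_i(k')$ in that role forces $k \le k'$. Hence $k = k'$, so the one bidder $\sigma_i(k)$ is matched in $M_0$ to both $v_j$ and $v_{j'}$, contradicting that $M_0$ is a matching.

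For the ``at least one'' part, suppose toward a contradiction that no $\sigma_i(k)$ is matched in $M_0$. Then claim~(2) of Lemma~\ref{lem:iuap-exit} gives $\sigma_i(k) \in U$ for every $k$ with $1 \le k \le K_i$. Let $k^{*} = \kappa_i(q_{n+i})$. By construction the dummy item $v_{n+i}$ lies in $\items(\sigma_i(k^{*}))$ with weight $w(\sigma_i(k^{*}), q_{n+i}) = \psi_{q_{n+i}}(p_i) = 0$, and it is incident to no other bidder, since the index $n+i$ occurs only in the woman-index set $\{1, \dots, n\} \cup \{n+i\}$ used to build $t_i$. So $\sigma_i(k^{*})$ belongs to $U$ and is unmatched in $M_0$, and its unique neighbor $v_{n+i}$ is therefore also unmatched; thus $M_0 + (\sigma_i(k^{*}), v_{n+i})$ is a matching of $(U, V)$ of the same weight as $M_0$ but of strictly larger cardinality, contradicting the fact that the greedy MWM $M_0$ is an MCMWM. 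Combining the two parts shows $\mu(p_i)$ is well defined, and together with the first paragraph this proves that $\mu$ is a valid matching.

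The only points requiring care are bookkeeping ones: confirming that the hypotheses of Lemma~\ref{lem:iuap-exit} are met (they are, because $M_0$ is by construction a greedy MWM of the UAP $\Uap(T, V)$), and justifying that the private dummy item $v_{n+i}$ has exactly one incident bidder. Beyond Lemma~\ref{lem:iuap-exit} and the MCMWM property of greedy MWMs, no further ingredients are needed, so I do not anticipate a genuine obstacle.
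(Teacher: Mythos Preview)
Your proposal is correct and follows essentially the same argument as the paper: both use claim~(1) of Lemma~\ref{lem:iuap-exit} for the ``at most one'' direction, claim~(2) together with the private dummy item $v_{n+i}$ and the MCMWM property of a greedy MWM for the ``at least one'' direction, and the fact that each item is matched at most once for the ``at most one man per woman'' condition. One tiny wording slip: you write ``its unique neighbor $v_{n+i}$,'' but it is $v_{n+i}$ whose unique incident bidder is $\sigma_i(k^{*})$ (not the other way around, since $\sigma_i(k^{*})$ may have other women in its tier); the logic you intend is clear and correct regardless.
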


\begin{lemma} \label{lem:smiw-rational}
Algorithm~\ref{alg:smiw} produces an individually rational matching.
\begin{proof}
We have shown in Lemma~\ref{lem:smiw-valid} that $\mu$ is a valid matching.
Consider man $p_i$ and woman $q_j$ such that $\mu(p_i) = q_j$,
where $i$ and $j$ belong to $\{1, \ldots, n\}$.
Let $k$ denote $\kappa_i(q_j)$ and let $k'$ denote $\kappa_i(q_{n+i})$.
It suffices to show that $k \leq k'$ and $\psi_{q_j}(p_i) \geq 0$.

Since $\mu(p_i) = q_j$,
bidder $\sigma_i(k)$ is matched to item $v_j$ in $M_0$.
Since $M_0$ is an MWM,
we have $\psi_{q_j}(p_i) = w(\sigma_i(k), v_j) \geq 0$.

It remains to show that $k \leq k'$.
For the sake of contradiction, suppose $k > k'$.
Since bidder $\sigma_i(k)$ is matched to item $v_j$ in $M_0$,
by Lemma~\ref{lem:iuap-exit} the set $\Bidders$ contains bidder $\sigma_i(k')$.
Since bidder $\sigma_i(k')$ is not matched in $M_0$, 
the dummy item $v_{n+i}$ is also not matched in $M_0$.
Hence, adding the pair $(\sigma_i(k'), v_{n+i})$ to $M_0$
gives a matching in $(U, V)$ with the same weight
and larger cardinality.
This contradicts the fact that $M_0$ is a greedy MWM of $(U, V)$.
\end{proof}
\end{lemma}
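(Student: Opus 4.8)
The plan is to reduce the two conditions that individual rationality imposes on the output matching $\mu$ to statements about the greedy MWM $M_0$ of the UAP $(U,V)$, and then to derive those statements from the defining properties of a greedy MWM, namely that it is a maximum-weight matching and, among such matchings, one of maximum cardinality. The main ingredients are Lemma~\ref{lem:iuap-exit} (to know which bidders of a multibidder are revealed) and Lemma~\ref{lem:smiw-valid} (to know that exactly one of them is matched in $M_0$).

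Concretely, I would fix a man $p_i$ and a woman $q_j$ with $\mu(p_i) = q_j$, where necessarily $i, j \in \{1, \dots, n\}$, and set $k = \kappa_i(q_j)$ and $k' = \kappa_i(q_{n+i})$. Since $q_{n+i} = \varnothing$ represents being unmatched in the preference of $p_i$, and since the tier partition $\tau_i$ and the utilities $\psi_{q_j}$ (with $\psi_{q_j}(\varnothing) = 0$) faithfully encode the men's and women's preference relations, it suffices to establish (a) $\psi_{q_j}(p_i) \geq 0$, which gives $p_i \succeq_{q_j} \varnothing$, and (b) $k \leq k'$, which gives $q_j \succeq_{p_i} \varnothing$.

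For (a): by the definition of $\mu$, bidder $\sigma_i(k)$ is matched to item $v_j$ in $M_0$, and the weight of that edge is $w(\sigma_i(k), v_j) = \psi_{q_j}(p_i)$; were this weight negative, deleting the edge from $M_0$ would yield a matching of strictly larger weight, contradicting that $M_0$ is an MWM. For (b): I would argue by contradiction, supposing $k' < k$. Since $\sigma_i(k)$ is matched in $M_0$, Lemma~\ref{lem:iuap-exit} places $\sigma_i(k')$ in $U$; and by Lemma~\ref{lem:smiw-valid}, $\sigma_i(k)$ is the only bidder of the multibidder $t_i$ that $M_0$ matches, so $\sigma_i(k')$ is unmatched. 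The index $n+i$ appears in exactly one tier of $p_i$, namely $\tau_i(k')$, and in no tier of any other man, so the dummy item $v_{n+i}$ is incident only on $\sigma_i(k')$ and is therefore also unmatched in $M_0$. Adding the edge $(\sigma_i(k'), v_{n+i})$, which has weight $\psi_{q_{n+i}}(p_i) = 0$, to $M_0$ then produces a matching of $(U,V)$ with the same weight and strictly larger cardinality, contradicting the fact that a greedy MWM is a maximum-cardinality MWM. Hence $k \leq k'$.

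I expect the only real subtlety to be the bookkeeping around the dummy items: confirming that $v_{n+i}$ has $\sigma_i(k')$ as its unique possible partner in $(U,V)$ (so that the absence of $\sigma_i(k')$ from $M_0$'s matched set forces $v_{n+i}$ to be unmatched), and correctly invoking Lemma~\ref{lem:iuap-exit} to guarantee that $\sigma_i(k')$ has actually been revealed. Everything else is a direct appeal to the weight- and cardinality-maximality of a greedy MWM, together with the already-established validity of $\mu$.
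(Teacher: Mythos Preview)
Your proposal is correct and follows essentially the same route as the paper's proof: reduce individual rationality to $\psi_{q_j}(p_i)\geq 0$ and $k\le k'$, obtain the first from $M_0$ being an MWM, and obtain the second by contradiction via Lemma~\ref{lem:iuap-exit} and the observation that adding the weight-zero edge $(\sigma_i(k'),v_{n+i})$ would increase cardinality. Your write-up is in fact a bit more explicit than the paper's in justifying why $\sigma_i(k')$ is unmatched (citing Lemma~\ref{lem:smiw-valid}) and why $v_{n+i}$ is unmatched (it is incident only on $\sigma_i(k')$), but the argument is the same.
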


\begin{lemma} \label{lem:smiw-stable}
Algorithm~\ref{alg:smiw} produces a weakly stable matching.
\begin{proof}
By Lemma~\ref{lem:smiw-rational}, it remains only to show that $\mu$ does not admit a strongly blocking pair.
Consider man $p_i$ and woman $q_{j'}$, where $i$ and
$j'$ belong to $\{1, \ldots, n\}$.
We want to show that $(p_i, q_{j'})$
does not form a strongly blocking pair.
Let $q_j$ denote $\mu(p_i)$ and let $p_{i'}$ denote $\mu(q_{j'})$,
where $j$ belongs to $\{1, \ldots, n\} \cup \{n+i\}$
and $i'$ belongs to $\{0, 1, \ldots, n\}$.
It suffices to show that either
$\kappa_i(q_j) \leq \kappa_i(q_{j'})$ or
$\psi_{q_{j'}}(p_{i'}) \geq \psi_{q_{j'}}(p_i)$.
For the sake of contradiction, suppose 
$\kappa_i(q_j) > \kappa_i(q_{j'})$ and
$\psi_{q_{j'}}(p_{i'}) < \psi_{q_{j'}}(p_i)$.
Let $k$ denote $\kappa_i(q_j)$ and let $k'$ denote $\kappa_i(q_{j'})$.
Since $\sigma_i(k)$ is matched in $M_0$ and $k' < k$,
Lemma~\ref{lem:iuap-exit} implies that the set $\Bidders$ contains bidder $\sigma_i(k')$ and that $\sigma_i(k')$ is unmatched in $M_0$.
We consider two cases.

Case 1: $i' = 0$. Then $\psi_{q_{j'}}(p_i) > \psi_{q_{j'}}(p_{i'}) = 0$.
Since neither bidder $\sigma_i(k')$ nor item $v_{j'}$ is matched in $M_0$,
adding the pair $(\sigma_i(k'), v_{j'})$ to $M_0$ gives a matching of $(U, V)$
with a larger weight. This contradicts the fact that $M_0$ is an MWM
of $(U, V)$.

Case 2: $i' \neq 0$.
Since $p_{i'} = \mu(q_{j'})$, there exists $k''$ 
such that bidder $\sigma_{i'}(k'')$ is matched to $v_{j'}$ in $M_0$.
Since $\sigma_i(k')$ is unmatched in $M_0$, the matching $M_0 - (\sigma_{i'}(k''), v_{j'}) + (\sigma_i(k'), v_{j'})$ is a matching of $(U, V)$ with weight $\WMatch{M_0} - \psi_{q_{j'}}(p_{i'}) + \psi_{q_{j'}}(p_i)$, which is greater than $\WMatch{M_0}$.
This contradicts the fact that $M_0$ is an MWM
of $(U, V)$.
\end{proof}
\end{lemma}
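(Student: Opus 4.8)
The plan is to reduce weak stability to its two ingredients and handle them separately. Individual rationality is exactly Lemma~\ref{lem:smiw-rational}, so the only real work is to show that the output matching $\mu$ admits no strongly blocking pair. I would fix a man $p_i$ and a woman $q_{j'}$ with $i, j' \in \{1,\dots,n\}$, set $q_j = \mu(p_i)$ and $p_{i'} = \mu(q_{j'})$, and assume toward a contradiction that $q_{j'} \succ_{p_i} q_j$ and $p_i \succ_{q_{j'}} p_{i'}$. Unwinding the definitions in Algorithm~\ref{alg:smiw}, the first inequality says $\kappa_i(q_{j'}) < \kappa_i(q_j)$, and the second says $\psi_{q_{j'}}(p_i) > \psi_{q_{j'}}(p_{i'})$ (reading $\psi_{q_{j'}}(\varnothing) = 0$ when $i' = 0$).

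The crux is to extract, from the hypothesis $q_{j'} \succ_{p_i} q_j$, an unused bidder of $p_i$ that is willing to bid on $q_{j'}$. Writing $k = \kappa_i(q_j)$ and $k' = \kappa_i(q_{j'})$, the bidder $\sigma_i(k)$ is matched in the greedy MWM $M_0$ (that is how $\mu(p_i) = q_j$ arises) and $k' < k$, so Lemma~\ref{lem:iuap-exit} puts $\sigma_i(k')$ into the bidder set $U$; moreover $\sigma_i(k')$ is unmatched in $M_0$, since $M_0$ matches at most one bidder of each man (Lemma~\ref{lem:smiw-valid}). By construction $v_{j'}$ belongs to $\items(\sigma_i(k'))$ with edge weight $\psi_{q_{j'}}(p_i)$. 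Now I would split on whether $q_{j'}$ is matched in $\mu$. If $i' = 0$, then $v_{j'}$ is unmatched in $M_0$ and $\psi_{q_{j'}}(p_i) > 0$, so $M_0 + (\sigma_i(k'), v_{j'})$ is a strictly heavier matching of $(U,V)$, contradicting that $M_0$ is an MWM. If $i' \neq 0$, some bidder $\sigma_{i'}(k'')$ is matched to $v_{j'}$ in $M_0$, and replacing $(\sigma_{i'}(k''), v_{j'})$ by $(\sigma_i(k'), v_{j'})$ changes the total weight by $\psi_{q_{j'}}(p_i) - \psi_{q_{j'}}(p_{i'}) > 0$, again contradicting maximality of $M_0$.

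The delicate point --- and the one I would take most care with --- is the invocation of Lemma~\ref{lem:iuap-exit}: one must check that $\sigma_i(k')$ genuinely exists in $U$ (rather than having never been ``revealed'' by Algorithm~\ref{alg:to-uap}) and that it is unmatched in $M_0$, and one must keep the index bookkeeping straight --- the hypothesis $q_{j'} \succ_{p_i} q_j$ already forbids $j' = j$, and the two cases $i' = 0$ and $i' \neq 0$ are exhaustive since $\mu(q_{j'})$ lies in $P \cup \{\varnothing\}$. Everything after that is a one-line exchange argument against the weight-maximality of $M_0$. It is worth flagging that greediness of $M_0$ is not needed for this lemma --- plain MWM suffices --- whereas greediness will be the essential ingredient in the later Pareto-optimality argument.
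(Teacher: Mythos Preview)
Your proposal is correct and follows essentially the same approach as the paper: reduce to the absence of a strongly blocking pair, translate the blocking conditions into tier and utility inequalities, use Lemma~\ref{lem:iuap-exit} to locate the unmatched bidder $\sigma_i(k')$ in $U$, and then derive a heavier matching in each of the two cases $i'=0$ and $i'\neq 0$. Your closing remark that only the MWM property (not greediness) is used here is accurate and worth noting.
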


\begin{lemma} \label{lem:smiw-pareto}
Let $\mu$ be the matching produced by Algorithm~\ref{alg:smiw}
and let $\mu'$ be a matching such that
$\mu'(p) \succeq_p \mu(p)$ for every man $p$ in $P$ and
\begin{equation*}
\sum_{q \in Q} \psi_q(\mu'(q)) \geq \sum_{q \in Q} \psi_q(\mu(q)).
\end{equation*}
Then $\mu(p) \succeq_p \mu'(p)$ for every man $p$ in $P$ and
\begin{equation*}
\sum_{q \in Q} \psi_q(\mu'(q)) = \sum_{q \in Q} \psi_q(\mu(q)).
\end{equation*}
\begin{proof}
For any $i$ such that $1 \leq i \leq n$, let $k_i$ denote
$\kappa_i(\mu(p_i))$ and let $k_i'$ denote $\kappa_i(\mu'(p_i))$.

Below we use $\mu'$ to construct an MWM $M_0'$ of $(U, V)$.
We give the construction of $M_0'$ first, and then argue that $M_0'$ is an MWM of $(U, V)$.
Let $M_0'$ denote the set of bidder-item pairs $(\sigma_{i}(k_i'), v_j)$
such that $\mu'(p_i) = q_j$ where $i$ in $\{1, \ldots, n\}$
and $j$ in $\{1, \ldots, n\} \cup \{n+i\}$.
It is easy to see that $M_0'$ is a valid matching.
Notice that for any $1 \leq i \leq n$,
since $\mu'(p_i) \succeq_{p_i} \mu(p_i)$, we have $k'_i \leq k_i$.
So, by Lemma~\ref{lem:iuap-exit}, the set $\Bidders$ contains all bidders
$\sigma_{i}(k_i')$. Hence, $M_0'$ is a matching of $(U, V)$.
Furthermore, it is easy to see that
\begin{equation*}
w(M_0') = \sum_{1 \leq j \leq n} \psi_{q_j}(\mu'(q_j))
\geq \sum_{1 \leq j \leq n} \psi_{q_j}(\mu(q_j)) = w(M_0).
\end{equation*}
Thus $M_0'$ is an MWM of $(U, V)$, and we have
\begin{equation*}
\sum_{1 \leq j \leq n} \psi_{q_j}(\mu'(q_j))
= \sum_{1 \leq j \leq n} \psi_{q_j}(\mu(q_j)).
\end{equation*}
Furthermore, $M_0'$ is an MCMWM of $(U, V)$
because both $M_0'$ and $M_0$ have cardinality equal to $n$.
Also, $M_0'$ is a greedy MWM of $(U, V)$,
because both $M_0'$ and $M_0$ have priorities equal to
$\sum_{1 \leq i \leq n} z_i$.
Hence, for each $1 \leq i \leq n$, we have $k_i \leq k_i'$
by Lemma~\ref{lem:iuap-exit}. Thus, 
$\mu(p_i) \succeq_{p_i} \mu'(p_i)$ for all $1 \leq i \leq n$.
\end{proof}
\end{lemma}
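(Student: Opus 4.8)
The plan is to convert the hypothesized matching $\mu'$ into a matching $M_0'$ of the UAP $(U, V)$ and to compare it with the greedy MWM $M_0$ produced by the algorithm. For $1 \le i \le n$, write $k_i = \kappa_i(\mu(p_i))$ and $k_i' = \kappa_i(\mu'(p_i))$; the hypothesis $\mu'(p_i) \succeq_{p_i} \mu(p_i)$ immediately gives $k_i' \le k_i$. Define $M_0'$ to be the set of pairs $(\sigma_i(k_i'), v_j)$, where $v_j$ is the item attached to $\mu'(p_i)$: the woman-item $v_j$ (with $1 \le j \le n$) if $\mu'(p_i) = q_j$, and the dummy item $v_{n+i}$ if $\mu'(p_i) = \varnothing$. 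Since $\mu'$ is a valid matching and the dummy item $v_{n+i}$ is used by at most the single bidder $\sigma_i(k_i')$, no item appears twice, so $M_0'$ is a valid matching. Moreover, since $k_i' \le k_i$ and $\sigma_i(k_i)$ is matched in $M_0$, Lemma~\ref{lem:iuap-exit} places each bidder $\sigma_i(k_i')$ in $U$; hence $M_0'$ is in fact a matching of $(U, V)$. (One small point to note here is that $\mu'$ is an arbitrary valid matching, so the individual rationality of $\mu$ from Lemma~\ref{lem:smiw-rational}, together with $\mu'(p_i) \succeq_{p_i} \mu(p_i)$, is what keeps $\kappa_i(\mu'(p_i))$ within the tier range $\{1, \dotsc, n\} \cup \{n+i\}$, so that $\sigma_i(k_i')$ is well defined and connected to the relevant item.)

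Next I would read off the weight, cardinality, and priority of $M_0'$. By the construction of the bidders, $w(\sigma_i(k_i'), v_j) = \psi_{q_j}(p_i)$ whenever $1 \le j \le n$, and $w(\sigma_i(k_i'), v_{n+i}) = 0$ (consistent with $\psi_q(\varnothing) = 0$); summing over $i$ and rewriting $\mu'(p_i) = q_j$ as $\mu'(q_j) = p_i$ shows $w(M_0') = \sum_{q \in Q} \psi_q(\mu'(q))$. The same bookkeeping applied to $M_0$, which matches $\sigma_i(k_i)$ to the item for $\mu(p_i)$, gives $w(M_0) = \sum_{q \in Q} \psi_q(\mu(q))$. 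The hypothesis then yields $w(M_0') \ge w(M_0)$, and since $M_0$ is an MWM of $(U, V)$ we must have $w(M_0') = w(M_0)$; this is exactly the asserted equality of the two sums, and it shows that $M_0'$ is an MWM of $(U, V)$. Furthermore $|M_0'| = n = |M_0|$ by Lemma~\ref{lem:smiw-valid}, so $M_0'$ is an MCMWM, and its priority is $\sum_{1 \le i \le n} z_i$, equal to that of the greedy MWM $M_0$; hence $M_0'$ is itself a greedy MWM of $(U, V)$.

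To finish, I would apply claim~(1) of Lemma~\ref{lem:iuap-exit} twice. Since $M_0$ and $M_0'$ are both greedy MWMs of $(U, V)$, and $\sigma_i(k_i)$ (resp.\ $\sigma_i(k_i')$) is matched in $M_0$ (resp.\ $M_0'$), Lemma~\ref{lem:iuap-exit} identifies the set of revealed bidders of the multibidder $t_i$ with $\{\sigma_i(1), \dotsc, \sigma_i(k_i)\}$ and also with $\{\sigma_i(1), \dotsc, \sigma_i(k_i')\}$. Therefore $k_i = k_i'$, so $\mu(p_i)$ and $\mu'(p_i)$ lie in the same preference tier of $p_i$, and in particular $\mu(p_i) \succeq_{p_i} \mu'(p_i)$, for every man $p_i$.

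The routine parts of the argument are the dummy-item bookkeeping and the weight identity; the step that really does the work is observing that $M_0'$ is not merely maximum-weight but actually \emph{greedy} (maximum cardinality and maximum priority among MWMs), since this is precisely what licenses a second application of Lemma~\ref{lem:iuap-exit}, this time to $M_0'$, and that second application is what pins down $k_i = k_i'$. I expect the main subtlety, beyond this observation, to be making sure the translation from $\mu'$ to $M_0'$ is faithful in both directions — that every man's bidder is revealed (needing $k_i' \le k_i$ and Lemma~\ref{lem:iuap-exit}) and that the weight of $M_0'$ equals the women's total utility exactly.
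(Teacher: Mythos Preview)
Your proof is correct and matches the paper's approach essentially line for line: build $M_0'$ from $\mu'$, use $k_i' \le k_i$ and Lemma~\ref{lem:iuap-exit} to place it in $(U,V)$, compare weight, cardinality, and priority to see that $M_0'$ is a greedy MWM, then reapply Lemma~\ref{lem:iuap-exit} to force $k_i = k_i'$. The parenthetical about individual rationality is unnecessary and slightly garbled (the tiers $\tau_i$ partition all of $\{1,\dotsc,n\}\cup\{n+i\}$, so $\kappa_i(\mu'(p_i))$ and the edge $(\sigma_i(k_i'),v_j)$ exist by construction alone), but this does not affect correctness.
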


\begin{lemma} \label{lem:smiw-dominate}
Let $\mu$ be the matching produced by Algorithm~\ref{alg:smiw}
and $\mu'$ be a matching such that $\mu' \succeq \mu$.
Then, $\mu \succeq \mu'$.
\begin{proof}
Since $\mu' \succeq \mu$, we have $\mu'(p_i) \succeq_{p_i} \mu(p_i)$
and $\psi_{q_j}(\mu'(q_j)) \geq \psi_{q_j}(\mu(q_j))$
for every $i$ and $j$ in $\{1, \ldots, n\}$.
So, by Lemma~\ref{lem:smiw-pareto}, we have
$\mu(p_i) \succeq_{p_i} \mu'(p_i)$ for every $i$ in $\{1, \ldots, n\}$ and
\begin{equation*}
\sum_{1 \leq j \leq n} \psi_{q_j}(\mu'(q_j))
= \sum_{1 \leq j \leq n} \psi_{q_j}(\mu(q_j)).
\end{equation*}
Therefore, $\psi_{q_j}(\mu'(q_j)) = \psi_{q_j}(\mu(q_j))$
for every $j$ in $\{1, \ldots, n\}$.
This shows that $\mu \succeq \mu'$.
\end{proof}
\end{lemma}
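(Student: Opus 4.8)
The plan is to derive Lemma~\ref{lem:smiw-dominate} from Lemma~\ref{lem:smiw-pareto} by translating the hypothesis $\mu' \succeq \mu$ into the two hypotheses of that lemma. First I would unpack the definition of $\mu' \succeq \mu$: it asserts $\mu'(p) \succeq_p \mu(p)$ for every man $p$ in $P$ and $\mu'(q) \succeq_q \mu(q)$ for every woman $q$ in $Q$. The man-side condition is literally the first hypothesis of Lemma~\ref{lem:smiw-pareto}. For the woman-side condition, I would invoke the defining property of the utility function $\psi_{q}$ constructed in Algorithm~\ref{alg:smiw}, namely that $p_i \succeq_{q} p_{i'}$ if and only if $\psi_{q}(p_i) \geq \psi_{q}(p_{i'})$, to conclude $\psi_{q}(\mu'(q)) \geq \psi_{q}(\mu(q))$ for each woman $q$; summing over $q \in Q$ then gives $\sum_{q \in Q} \psi_q(\mu'(q)) \geq \sum_{q \in Q} \psi_q(\mu(q))$, which is the second hypothesis of Lemma~\ref{lem:smiw-pareto}.

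Applying Lemma~\ref{lem:smiw-pareto}, I obtain two conclusions: $\mu(p) \succeq_p \mu'(p)$ for every man $p$, and $\sum_{q \in Q} \psi_q(\mu'(q)) = \sum_{q \in Q} \psi_q(\mu(q))$. The first is exactly the man-side half of the desired relation $\mu \succeq \mu'$. For the woman-side half, I would combine the equality of the sums with the term-by-term inequalities $\psi_q(\mu'(q)) \geq \psi_q(\mu(q))$ established in the previous paragraph: a sum of nonnegative differences that equals zero forces each difference to vanish, so $\psi_q(\mu'(q)) = \psi_q(\mu(q))$ for every woman $q$. Translating this back through the defining property of $\psi_q$ (now using the ``$\geq$ implies $\succeq_q$'' direction) yields $\mu(q) \succeq_q \mu'(q)$ for every woman $q$. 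Together with the man-side inequalities, this is precisely $\mu \succeq \mu'$.

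I do not anticipate a genuine obstacle in this lemma: the substantive argument — using $\mu'$ to build an MWM $M_0'$ of $(U,V)$ and leveraging the greediness of $M_0$ to force $\mu$ back into contention — is already carried out inside Lemma~\ref{lem:smiw-pareto}. The only points requiring care here are the two directions of the $\succeq_q$-versus-$\psi_q$ correspondence and the elementary observation that an equality of sums of term-wise ``$\geq$'' comparisons upgrades each comparison to an equality; both are routine. As a sanity check, this lemma together with Lemma~\ref{lem:smiw-stable} yields Pareto-stability of the output of Algorithm~\ref{alg:smiw}, since $\mu \succeq \mu'$ and $\mu' \succeq \mu$ together say that $\mu'$ does not Pareto-dominate $\mu$.
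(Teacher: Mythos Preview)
Your proposal is correct and follows essentially the same approach as the paper: unpack $\mu' \succeq \mu$ into the two hypotheses of Lemma~\ref{lem:smiw-pareto}, apply that lemma, then use the resulting sum equality together with the term-wise inequalities $\psi_{q_j}(\mu'(q_j)) \geq \psi_{q_j}(\mu(q_j))$ to force term-wise equality and hence $\mu(q) \succeq_q \mu'(q)$. You are merely a bit more explicit than the paper about the $\succeq_q$-versus-$\psi_q$ translation, but the argument is the same.
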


\begin{lemma} \label{lem:smiw-threshold}
Consider Algorithm~\ref{alg:smiw}.
Suppose $\mu(p_i) = q_j$, where $1 \leq i \leq n$ and
$j$ belongs to $\{1, \ldots, n\} \cup \{n + i\}$.
Then, we have
\begin{equation} \label{eq:smiw-threshold-success}
(\psi_{q_j}(p_i), i) \geq \threshold((T-t_i, V), v_j).
\end{equation}
Furthermore, for all $j'$ in $\{1, \ldots, n\} \cup \{n + i\}$
such that $\kappa_i(q_{j'}) < \kappa_i(q_j)$, we have
\begin{equation} \label{eq:smiw-threshold-failure}
(\psi_{q_{j'}}(p_i), i) < \threshold((T-t_i, V), v_{j'}).
\end{equation}
\begin{proof}
Let $k$ denote $\kappa_i(q_j)$.
Since $\mu(p_i) = q_j$,
we know that bidder $\sigma_i(k)$ is matched to item $v_j$ in $M_0$.
So, inequality~(\ref{eq:iuap-threshold-success}) 
of Lemma~\ref{lem:iuap-threshold-george} implies
inequality~(\ref{eq:smiw-threshold-success}), because
$w(\sigma_i(k), v_j) = \psi_{q_j}(p_i)$
and $z_i = i$.

Now, suppose $\kappa_i(q_{j'}) < \kappa_i(q_j)$.
Let $k'$ denote $\kappa_i(q_{j'})$.
Since $k' < k$, inequality~(\ref{eq:iuap-threshold-failure})
of Lemma~\ref{lem:iuap-threshold-george} implies
inequality~(\ref{eq:smiw-threshold-failure}), because
$w(\sigma_i(k'), v_{j'}) = \psi_{q_{j'}}(p_i)$
and $z_i = i$.
\end{proof}
\end{lemma}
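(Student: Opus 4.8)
The plan is to obtain Lemma~\ref{lem:smiw-threshold} as a direct instantiation of Lemma~\ref{lem:iuap-threshold-george}, applied to the IUAP $(T, V)$ built by Algorithm~\ref{alg:smiw} with the multibidder $t_i = (\sigma_i, z_i)$ in the role of the multibidder $\MBidder = (\BidderSeq, \Pri)$ there. Recall that $t_i$ belongs to $T$, its priority $z_i$ equals $i$, the matching $M_0$ is a greedy MWM of $(U, V) = \Uap(T, V)$, and $(T - t_i, V)$ is again an IUAP, so that $\threshold((T - t_i, V), v)$ is well defined for every item $v$.

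First I would check the hypothesis of Lemma~\ref{lem:iuap-threshold-george}. Writing $k$ for $\kappa_i(q_j)$, the assumption $\mu(p_i) = q_j$ together with line~\ref{line:smiw-return} says that $\sigma_i(k')$ is matched to $v_j$ in $M_0$ for some $k'$; since $\items(\sigma_i(k')) = \{v_{j''} \mid j'' \in \tau_i(k')\}$, the presence of the edge $(\sigma_i(k'), v_j)$ in $M_0$ forces $j \in \tau_i(k')$, so that $k' = \kappa_i(q_j) = k$. Thus $(\sigma_i(k), v_j)$ is matched in the greedy MWM $M_0$ of $\Uap(T, V)$, which is exactly the hypothesis of Lemma~\ref{lem:iuap-threshold-george} with $\IAuc = (T, V)$ and item $v_j$.

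Next I would translate the two conclusions of Lemma~\ref{lem:iuap-threshold-george} into the notation of this section. Inequality~(\ref{eq:iuap-threshold-success}) gives $(w(\sigma_i(k), v_j), z_i) \geq \threshold((T - t_i, V), v_j)$; since $w(\sigma_i(k), v_j) = \psi_{q_j}(p_i)$ by the construction in line~\ref{line:smiw-bidder} (the convention $\psi_{q_{n+i}}(p_i) = 0$ covering the case $j = n + i$) and $z_i = i$, this is precisely inequality~(\ref{eq:smiw-threshold-success}). For inequality~(\ref{eq:smiw-threshold-failure}), fix $j'$ with $\kappa_i(q_{j'}) < k$, set $k' = \kappa_i(q_{j'})$, and note that $1 \leq k' < k$ and $v_{j'} \in \items(\sigma_i(k'))$; then inequality~(\ref{eq:iuap-threshold-failure}) applied with item $v_{j'}$ yields $(w(\sigma_i(k'), v_{j'}), i) < \threshold((T - t_i, V), v_{j'})$, and $w(\sigma_i(k'), v_{j'}) = \psi_{q_{j'}}(p_i)$ turns this into inequality~(\ref{eq:smiw-threshold-failure}).

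I do not expect a serious obstacle here: all of the substantive work is carried out inside Lemma~\ref{lem:iuap-threshold-george} and, transitively, Lemmas~\ref{lem:iuap-threshold-summary}, \ref{lem:iuap-threshold-success}, and~\ref{lem:iuap-threshold-failure}. The only care required is in the two bookkeeping identifications above --- that the bidder of $t_i$ matched to $v_j$ in $M_0$ is $\sigma_i(\kappa_i(q_j))$, and that $v_{j'} \in \items(\sigma_i(\kappa_i(q_{j'})))$ whenever $\kappa_i(q_{j'}) < \kappa_i(q_j)$ --- together with verifying that the relevant edge weights and the priority of $t_i$ match $\psi_{q_j}(p_i)$ and $i$, and observing that the case of a dummy item ($j = n + i$ or $j' = n + i$) is handled uniformly with the real items by the construction.
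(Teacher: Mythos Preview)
Your proposal is correct and follows essentially the same route as the paper: both proofs obtain the lemma by directly instantiating Lemma~\ref{lem:iuap-threshold-george} with $\IAuc = (T, V)$, $\MBidder = t_i$, and the index $k = \kappa_i(q_j)$, then translating the edge weights and priority via $w(\sigma_i(k), v_j) = \psi_{q_j}(p_i)$ and $z_i = i$. Your write-up is slightly more explicit than the paper in justifying that the bidder of $t_i$ matched to $v_j$ in $M_0$ must be $\sigma_i(\kappa_i(q_j))$ and in handling the dummy-item case, but the argument is the same.
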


\begin{theorem} \label{thm:smiw}
Algorithm~\ref{alg:smiw} is a strategyproof Pareto-stable mechanism for
the stable marriage problem with incomplete and weak preferences
(for any fixed choice of utility assignment).
\begin{proof}
We have shown in Lemma~\ref{lem:smiw-stable} that the algorithm
produces a weakly stable matching. Moreover, Lemma~\ref{lem:smiw-dominate}
shows that the weakly stable matching produced is not Pareto-dominated
by any other matching. Hence, the algorithm produces a Pareto-stable matching.
It remains to show that the algorithm is a strategyproof mechanism.

Suppose man $p_i$ expresses $\succeq_{p_i}'$ instead of his true
preference relation $\succeq_{p_i}$, where $1 \leq i \leq n$.  Let
$\mu$ and $\mu'$ be the resulting matchings given $\succeq_{p_i}$ and
$\succeq_{p_i}'$, respectively.  Let $q_j$ denote $\mu(p_i)$ and let
$q_{j'}$ denote $\mu'(p_i)$, where $j$ and $j'$ belong to $\{1,
\ldots, n\} \cup \{n+i\}$.  Let $k$ denote $\kappa_i(q_j)$ and let
$k'$ denote $\kappa_i(q_{j'})$, where $\kappa_i(\cdot)$ denotes the
tier number with respect to $\succeq_{p_i}$.  It suffices to show that
$k \leq k'$.  For the sake of contradiction, suppose $k > k'$.

Let $(T, V)$ be the IUAP,
let $t_i$ be the multibidder corresponding to man $p_i$, and let
$v_{j'}$ be the item corresponding to woman $q_{j'}$
constructed in the algorithm given input $\succeq_{p_i}$.
Since $\mu(p_i) = q_j$,
by inequality~(\ref{eq:smiw-threshold-failure}) of Lemma~\ref{lem:smiw-threshold},
we have
\begin{equation*}
(\psi_{q_{j'}}(p_i), i) < \threshold((T - t_i, V), v_{j'}).
\end{equation*}

Now, consider the behavior of the algorithm when preference relation
$\succeq_{p_i}$ is replaced with $\succeq_{p_i}'$.  Let $(T',
V')$ be the IUAP, let $t'_i$ be the multibidder corresponding to man
$p_i$, and let $v'_{j'}$ be the item corresponding to woman $q_{j'}$
constructed in the algorithm given input $\succeq_{p_i}'$.  Since
$\mu'(p_i) = q_{j'}$, by inequality~(\ref{eq:smiw-threshold-success}) of
Lemma~\ref{lem:smiw-threshold}, we have
\begin{equation*}
(\psi_{q_{j'}}(p_i), i) \geq \threshold((T' - t'_i, V'), v_{j'}').
\end{equation*}

Notice that in Algorithm~\ref{alg:smiw}, the only part of the IUAP
instance that depends on the preferences of man $p_i$ is the
multibidder corresponding to man $p_i$. In particular, we have
$T - t_i = T' - t'_i$, $V = V'$, and $v_{j'} = v_{j'}'$.
Hence, we get
\begin{align*}
(\psi_{q_{j'}}(p_i), i) <{} & \threshold((T - t_i, V), v_{j'}) \\
={} & \threshold((T' - t'_i, V'), v_{j'}') \\
\leq{} & (\psi_{q_{j'}}(p_i), i),
\end{align*}
which is a contradiction.
\end{proof}
\end{theorem}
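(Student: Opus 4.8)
The plan is to dispatch Pareto-stability and strategyproofness separately, relying entirely on the lemmas already proved in this section together with the IUAP threshold machinery of Section~\ref{sec:iuap}. Pareto-stability is immediate: Lemma~\ref{lem:smiw-stable} gives that the output matching $\mu$ is weakly stable, and Lemma~\ref{lem:smiw-dominate} shows that any matching $\mu'$ with $\mu' \succeq \mu$ in fact satisfies $\mu \succeq \mu'$, so $\mu$ is not Pareto-dominated by any matching, i.e., $\mu$ is Pareto-optimal. Combining the two yields that $\mu$ is Pareto-stable, and this part requires no new work.

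For strategyproofness, I would fix a man $p_i$ with $1 \le i \le n$, let $\mu$ be the matching output on the true profile and $\mu'$ the matching output when $p_i$ reports $\succeq_{p_i}'$ with all other inputs held fixed (both are valid matchings by Lemma~\ref{lem:smiw-valid} applied to the two instances). Write $q_j = \mu(p_i)$ and $q_{j'} = \mu'(p_i)$, where $j, j' \in \{1,\dots,n\} \cup \{n+i\}$, and let $k = \kappa_i(q_j)$ and $k' = \kappa_i(q_{j'})$ be the tier numbers computed with respect to the \emph{true} preference $\succeq_{p_i}$. Since $q_{j'} \succeq_{p_i} q_j$ is equivalent to $k' \le k$, it suffices to show $k \le k'$; so I would assume for contradiction that $k > k'$ and derive a contradiction.

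The heart of the argument is to apply both halves of Lemma~\ref{lem:smiw-threshold}. On the true profile, $\mu(p_i) = q_j$ together with $\kappa_i(q_{j'}) = k' < k = \kappa_i(q_j)$ and inequality~(\ref{eq:smiw-threshold-failure}) give $(\psi_{q_{j'}}(p_i), i) < \threshold((T - t_i, V), v_{j'})$. On the misreported profile, writing $(T', V')$ for the IUAP, $t'_i$ for the multibidder for $p_i$, and $v'_{j'}$ for the item for $q_{j'}$, the fact $\mu'(p_i) = q_{j'}$ and inequality~(\ref{eq:smiw-threshold-success}) give $(\psi_{q_{j'}}(p_i), i) \ge \threshold((T' - t'_i, V'), v'_{j'})$. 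The observation that closes the gap is that, in Algorithm~\ref{alg:smiw}, the only component of the IUAP that depends on $p_i$'s declared preferences is the multibidder $t_i$ constructed at line~\ref{line:smiw-bidder}: the items and dummy items and the women's utility functions $\psi_{q}$ are built without reference to the men, and the priority $z_i = i$ is fixed. Hence $T - t_i = T' - t'_i$, $V = V'$, and $v_{j'} = v'_{j'}$, so the two thresholds coincide, and chaining the inequalities yields $(\psi_{q_{j'}}(p_i), i) < (\psi_{q_{j'}}(p_i), i)$, a contradiction. I would also note explicitly that the quantity $\psi_{q_{j'}}(p_i)$ is literally the same on both profiles precisely because the utility assignment does not depend on the men, which is why the theorem is stated for any fixed such assignment.

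The main obstacle is not this final assembly, which is a short contradiction once Lemma~\ref{lem:smiw-threshold} is available, but the chain of results feeding it: Lemma~\ref{lem:smiw-threshold} rests on Lemma~\ref{lem:iuap-threshold-george}, which in turn depends on the monotonicity and invariance properties of IUAP thresholds (Lemmas~\ref{lem:iuap-threshold}--\ref{lem:iuap-threshold-summary}). Those are the technically delicate parts, but they are all established earlier and may be used as black boxes here. One minor subtlety to keep in mind while writing up is that $q_{j'}$ may be the dummy woman $q_{n+i}$ (the ``remain unmatched'' option): the tier partition at line~\ref{line:smiw-bidder} includes the index $n+i$, sets $\psi_{q_{n+i}}(p_i) = 0$, and Lemma~\ref{lem:smiw-threshold} is stated with $j'$ ranging over $\{1,\dots,n\} \cup \{n+i\}$, so the dummy case needs no separate treatment.
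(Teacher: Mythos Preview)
Your proposal is correct and follows essentially the same approach as the paper's own proof: Pareto-stability via Lemmas~\ref{lem:smiw-stable} and~\ref{lem:smiw-dominate}, and strategyproofness by assuming $k>k'$, invoking both inequalities of Lemma~\ref{lem:smiw-threshold} on the true and misreported profiles, and using that removing $p_i$'s multibidder makes the two IUAPs identical to obtain a contradiction. Your additional remarks about the invariance of $\psi_{q_{j'}}(p_i)$ and the dummy case $j'=n+i$ are accurate and make explicit points the paper leaves implicit.
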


\section{College Admissions with Indifferences}
\label{sec:caw}

The \emph{college admissions model with weak preferences (CAW)}
involves a set $P$ of students and a set $Q$ of colleges.
The preference relation of each student $p$ in $P$
is specified as a binary relation $\succeq_p$ over
$Q \cup \{\varnothing\}$
that satisfies transitivity and totality,
where $\varnothing$ denotes being unmatched.
The preference relation of each college $q$ in $Q$
over individual students is specified as
a binary relation $\succeq_q$ over $P \cup \{\varnothing\}$
that satisfies transitivity and totality,
where $\varnothing$ denotes being unmatched.
Each college $q$ in $Q$ has an associated integer capacity $c_q>0$.
We will use $\succ_p$ and $\succ_q$ to denote the asymmetric parts
of $\succeq_p$ and $\succeq_q$, respectively.

The colleges' preference relation over individual students
can be extended to group preference using responsiveness.
We say that a transitive and reflexive relation
$\succeq_q'$ over the power set $2^P$ is \emph{responsive
to the preference relation $\succeq_q$} if the following conditions
hold:
for any $S \subseteq P$ and $p$ in $P \setminus S$,
we have $p \succeq_q \varnothing$ if and only if $S \cup \{p\} \succeq_q' S$;
for any $S \subseteq P$ and any $p$ and $p'$ in $P \setminus S$,
we have $p \succeq_q p'$ if and only if $S \cup \{p\} \succeq_q' S \cup \{p'\}$.
Furthermore, we say that a relation $\succeq_q'$ is
\emph{minimally responsive to
the preference relation $\succeq_q$}
if it is responsive to the preference relation $\succeq_q$
and does not strictly contain another relation that is responsive
to the preference relation $\succeq_q$.

A \emph{(capacitated) matching} is a function $\mu$ from $P$ to
$Q \cup \{ \varnothing \}$ such that for any college $q$ in $Q$,
there exists at most $c_q$ students $p$ in $P$ for which $\mu(p) = q$.
Given a matching $\mu$ and a college $q$ in $Q$, we let
$\mu(q)$ denote $\{p \in P \mid \mu(p) = q\}$.

A matching $\mu$ is \emph{individually rational} if
for any student $p$ in $P$ and college $q$ in $Q$ such that $\mu(p) = q$,
we have $q \succeq_p \varnothing$ and $p \succeq_q \varnothing$.
A pair $(p', q)$ in $P \times Q$ is said to form a
\emph{strongly blocking pair} for a matching $\mu$
if $q \succ_{p'} \mu(p')$ and at least one of the following two conditions holds:
(1) there exists a student $p$ in $P$ such that $\mu(p) = q$ and $p' \succ_q p$;
(2) $\abs{\mu(q)} < c_q$ and $p' \succ_q \varnothing$.
A matching is \emph{weakly stable} if it is individually rational and does not admit a strongly blocking pair.

Let $\succeq_q'$ be the group preference associated with college $q$ in $Q$.
For any matching $\mu$ and $\mu'$,
we say that the binary relation $\mu \succeq \mu'$ holds
if for every student $p$ in $P$ and college $q$ in $Q$, we have
$\mu(p) \succeq_p \mu'(p)$ and
$\mu(q) \succeq_q' \mu'(q)$.
We let $\succ$ denote the asymmetric part of $\succeq$.
We say that a matching $\mu$ \emph{Pareto-dominates}
another matching $\mu'$ if $\mu \succ \mu'$.
We say that a matching is \emph{Pareto-optimal} if it is not Pareto-dominated by any other matching.
A matching is \emph{Pareto-stable} if it is Pareto-optimal and weakly stable.

A \emph{mechanism} is an algorithm that, given $(P, Q,
(\succeq_p)_{p \in P}, (\succeq_q)_{q \in Q}, (c_q)_{q \in Q})$,
produces a matching $\mu$.
A mechanism is said to be \emph{strategyproof (for the students)} if
for any student $p$ in $P$ expressing preference $\succeq_p'$ instead of
their true preference $\succeq_p$, we have $\mu(p) \succeq_p \mu'(p)$,
where $\mu$ and $\mu'$ are the matchings produced by the mechanism
given $\succeq_p$ and $\succeq_p'$, respectively, when
all other inputs are fixed.

Without loss of generality, we may assume that the number of students
equals the total capacity of the colleges.
So, $P = \{p_i\}_{1 \leq i \leq \abs{P}}$ and
$Q = \{q_j\}_{1 \leq j \leq \abs{Q}}$ such that
$\abs{P} = \sum_{1 \leq j \leq \abs{Q}} c_{q_j}$.

\subsection{Algorithm}

The computation of a matching for CAW is shown in Algorithm~\ref{alg:caw}.
We transform each student to a man in line~\ref{line:caw-man},
and each slot of a college to a woman in line~\ref{line:caw-woman}.
This forms an SMIW.
Using this SMIW, we produce a matching by invoking
Algorithm~\ref{alg:smiw} in lines~\ref{line:caw-smiw} and~\ref{line:caw-return}.

\begin{algorithm}[htb]
\caption{}
\label{alg:caw}
\begin{algorithmic}[1]
\State \label{line:caw-man}
For each $1 \leq i \leq \abs{P}$,
construct man $p'_i$ corresponding to student $p_i$.

\State \label{line:caw-woman}
For each $1 \leq j \leq \abs{Q}$,
construct women $q'_{j 1}, \ldots, q'_{j c}$ corresponding to college $q_j$
with capacity $c = c_{q_j}$.

\State $(P', Q') = (\{p'_i \mid 1 \leq i \leq \abs{P}\},
\{ q'_{j k} \mid 1 \leq j \leq \abs{Q} \text{ and } 1 \leq k \leq c_{q_j} \})$.

\State Let $p_0$ denote $\varnothing$. Let $p'_0$ denote $\varnothing$.

\State Let $q_0$ denote $\varnothing$. Let $q'_{0 0}$ denote $\varnothing$.

\State For each $1 \leq i \leq \abs{P}$, define the preference relation
$\succeq_{p'_i}$ over $Q' \cup \{q'_{0 0}\}$ for man $p'_i$ using the
preference relation of his corresponding student, such that $q'_{j k}
\succeq_{p'_i} q'_{j' k'}$ if and only if $q_j \succeq_{p_i} q_{j'}$.

\State For each $1 \leq j \leq \abs{Q}$ and $1 \leq k \leq c_{q_j}$,
define the preference relation $\succeq_{q'_{j k}}$ over $P' \cup \{p'_0\}$
for woman $q'_{j k}$ using the preference relation of her corresponding college,
such that
$p'_i \succeq _{q'_{j k}} p'_{i'}$ if and only if $p_i \succeq_{q_j} p_{i'}$.

\State \label{line:caw-smiw}
Compute matching $\mu_0$ for SMIW $(P', Q',
(\succeq_{p'})_{p' \in P'}, (\succeq_{q'})_{q' \in Q'})$
using Algorithm~\ref{alg:smiw},
where we require the utility functions associated with the same
college to be the same.

\State \label{line:caw-return}
Output matching $\mu$, such that for all $1 \leq i \leq \abs{P}$
and $0 \leq j \leq \abs{Q}$,
we have $\mu(p_i) = q_j$ if and only if $\mu_0(p'_i) = q'_{j k}$ for some $k$.
\end{algorithmic}
\end{algorithm}

\begin{lemma} \label{lem:caw-ir}
Algorithm~\ref{alg:caw} produces an individually rational matching.

\begin{proof}
It is easy to see that $\mu$ satisfies the capacity constraints
because each college $q_j$ is associated with $c_{q_j}$ women $q'_{j k}$
and each woman can be matched with at most one man in $\mu_0$
by Lemma~\ref{lem:smiw-valid}.

The individual rationality of $\mu$ follows from the individual
rationality of $\mu_0$.
Let $p_i$ in $P$ and $q_j$ in $Q$ such that $\mu(p_i) = q_j$.
Then $\mu_0(p'_i) = q'_{j k}$ for some $k$.
By Lemma~\ref{lem:smiw-rational}, we have
$q'_{j k} \succeq_{p'_i} \varnothing$ and
$p'_i \succeq_{q'_{j k}} \varnothing$.
Hence, $q_j \succeq_{p_i} \varnothing$ and
$p_i \succeq_{q_j} \varnothing$.
\end{proof}
\end{lemma}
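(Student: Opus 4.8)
The plan is to exploit the fact that Algorithm~\ref{alg:caw} is essentially a wrapper around Algorithm~\ref{alg:smiw}: it builds an SMIW instance in which each student $p_i$ becomes a man $p'_i$ and each college $q_j$ is replaced by $c_{q_j}$ interchangeable ``slot women'' $q'_{j1}, \dots, q'_{j c_{q_j}}$, runs Algorithm~\ref{alg:smiw} to obtain an SMIW matching $\mu_0$, and then recovers $\mu$ by collapsing the slots of each college. All the real work has already been done inside Algorithm~\ref{alg:smiw}; what remains is a bookkeeping argument that validity and individual rationality survive the two translation steps.

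First I would check that $\mu$ is a well-defined capacitated matching. Lemma~\ref{lem:smiw-valid} guarantees that $\mu_0$ is a valid SMIW matching, so each slot woman $q'_{jk}$ is matched to at most one man in $\mu_0$; summing over the $c_{q_j}$ slots associated with $q_j$ shows that at most $c_{q_j}$ students are assigned to $q_j$ in $\mu$, i.e.\ $\Car{\mu(q_j)} \leq c_{q_j}$. Moreover $\mu$ is a function from $P$ to $Q \cup \{\varnothing\}$, since $\mu_0(p'_i)$ is a single element of $Q' \cup \{\varnothing\}$ for each $i$.

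Next I would transfer individual rationality. Fix $p_i$ in $P$ and $q_j$ in $Q$ with $\mu(p_i) = q_j$; by definition of $\mu$ there is a slot $k$ with $\mu_0(p'_i) = q'_{jk}$. Lemma~\ref{lem:smiw-rational}, applied to the SMIW instance $(P', Q', (\succeq_{p'})_{p' \in P'}, (\succeq_{q'})_{q' \in Q'})$, gives $q'_{jk} \succeq_{p'_i} \varnothing$ and $p'_i \succeq_{q'_{jk}} \varnothing$. Now I would unwind the two preference-construction clauses of Algorithm~\ref{alg:caw} ($q'_{jk} \succeq_{p'_i} q'_{j'k'}$ iff $q_j \succeq_{p_i} q_{j'}$, and $p'_i \succeq_{q'_{jk}} p'_{i'}$ iff $p_i \succeq_{q_j} p_{i'}$) in the special cases $j' = 0$ and $i' = 0$, using the conventions $q'_{00} = \varnothing$, $q_0 = \varnothing$, $p'_0 = \varnothing$, $p_0 = \varnothing$, to conclude $q_j \succeq_{p_i} \varnothing$ and $p_i \succeq_{q_j} \varnothing$. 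That is precisely individual rationality of $\mu$.

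There is no genuine obstacle here; the only point requiring care is making sure the ``$\varnothing$'' outcome of the CAW instance is consistently identified with the slot $q'_{00}$ (respectively $p'_0$) of the SMIW instance, so that the ``if and only if'' clauses defining the transformed preferences really do specialize to statements about acceptability, rather than merely comparing genuine partners. Once that identification is in place, the lemma is immediate from Lemmas~\ref{lem:smiw-valid} and~\ref{lem:smiw-rational}.
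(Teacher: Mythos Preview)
Your proposal is correct and follows essentially the same approach as the paper's proof: you derive the capacity constraint from Lemma~\ref{lem:smiw-valid} and individual rationality from Lemma~\ref{lem:smiw-rational}, then translate the SMIW acceptability conditions back to CAW via the preference-construction clauses of Algorithm~\ref{alg:caw}. The only difference is that you spell out the $j'=0$, $i'=0$ unwinding and the well-definedness of $\mu$ as a function a bit more explicitly than the paper does.
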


\begin{lemma} \label{lem:caw-stable}
Algorithm~\ref{alg:caw} produces a weakly stable matching.

\begin{proof}
By Lemma~\ref{lem:caw-ir}, it remains only to show that $\mu$ does not admit a strongly blocking pair.
Consider student $p_{i'}$ in $P$ and college $q_j$ in $Q$.
In what follows, we use the weak stability of $\mu_0$ to show that
$(p_{i'}, q_j)$ does not form a strongly blocking pair.

Let $q'_{j' k'}$ denote $\mu_0(p'_{i'})$.
It is possible that $q'_{j' k'} = \varnothing$, in which case $j' = k' = 0$.
For $1 \leq k \leq c_{q_j}$, let $p'_{i_k}$ denote $\mu_0(q'_{j k})$,
where $p'_{i_k}$ belongs to $P' \cup \{ p'_0 \}$.
By Lemma~\ref{lem:smiw-stable}, for any $1 \leq k \leq c_{q_j}$,
either $q'_{j' k'} \succeq_{p'_{i'}} q'_{j k}$
or $p'_{i_k} \succeq_{q'_{j k}} p'_{i'}$,
for otherwise $(p'_{i'}, q'_{j k})$ forms a strongly blocking pair.

Suppose $q'_{j' k'} \succeq_{p'_{i'}} q'_{j k}$ for some $1 \leq k \leq c_{q_j}$.
Then $q_{j'} \succeq_{p_{i'}} q_j$,
and hence $(p_{i'}, q_j)$ does not form a strongly blocking pair.

Otherwise, $p'_{i_k} \succeq_{q'_{j k}} p'_{i'}$ for all $1 \leq k \leq c_{q_j}$.
Then $p_{i_k} \succeq_{q_j} p_{i'}$ for all $1 \leq k \leq c_{q_j}$.
In particular, we have $p_{i_k} \succeq_{q_j} p_{i'}$
for all students $p_{i_k}$ in $P$ such that $\mu(p_{i_k}) = q_j$.
Furthermore, if $\abs{\mu(q_j)} < c_{q_j}$,
then $p_{i_k} = \varnothing$ for some $1 \leq k \leq c_{q_j}$.
Hence $\varnothing \succeq_{q_j} p_{i'}$.
It follows that $(p_{i'}, q_j)$ does not form a strongly blocking pair.
\end{proof}
\end{lemma}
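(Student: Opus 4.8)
The plan is to reduce weak stability of the CAW matching $\mu$ to weak stability of the SMIW matching $\mu_0$ computed in line~\ref{line:caw-smiw}. Individual rationality of $\mu$ is already furnished by Lemma~\ref{lem:caw-ir} (which in turn rests on Lemma~\ref{lem:smiw-rational}), so the only remaining task is to rule out strongly blocking pairs. I would therefore fix an arbitrary student $p_{i'}$ in $P$ and college $q_j$ in $Q$ and argue directly that $(p_{i'}, q_j)$ is not a strongly blocking pair for $\mu$, invoking weak stability of $\mu_0$ (Lemma~\ref{lem:smiw-stable}) slot-by-slot.

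First I would record the correspondence built by Algorithm~\ref{alg:caw}: $\mu(p_i) = q_j$ exactly when $\mu_0(p'_i) = q'_{jk}$ for some slot $k$; the men's preferences satisfy $q'_{jk} \succeq_{p'_i} q'_{j'k'}$ iff $q_j \succeq_{p_i} q_{j'}$; and the women's preferences satisfy $p'_i \succeq_{q'_{jk}} p'_{i'}$ iff $p_i \succeq_{q_j} p_{i'}$, all with the usual conventions $p'_0 = q'_{00} = \varnothing$ corresponding to $p_0 = q_0 = \varnothing$. Write $\mu_0(p'_{i'}) = q'_{j'k'}$ (allowing $j' = k' = 0$), and for $1 \le k \le c_{q_j}$ write $p'_{i_k} = \mu_0(q'_{jk})$ (allowing $p'_{i_k} = p'_0$). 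Since $\mu_0$ is weakly stable, for every slot $k$ the pair $(p'_{i'}, q'_{jk})$ is not strongly blocking for $\mu_0$, i.e.\ $q'_{j'k'} \succeq_{p'_{i'}} q'_{jk}$ or $p'_{i_k} \succeq_{q'_{jk}} p'_{i'}$.

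I would then split into two cases. If $q'_{j'k'} \succeq_{p'_{i'}} q'_{jk}$ holds for some $k$, then $q_{j'} \succeq_{p_{i'}} q_j$; since $\mu(p_{i'}) = q_{j'}$, this gives $\mu(p_{i'}) \succeq_{p_{i'}} q_j$, so the requirement $q_j \succ_{p_{i'}} \mu(p_{i'})$ fails and $(p_{i'}, q_j)$ is not strongly blocking. Otherwise $p'_{i_k} \succeq_{q'_{jk}} p'_{i'}$ for all $k$, hence $p_{i_k} \succeq_{q_j} p_{i'}$ for all $k$. Every student $p$ with $\mu(p) = q_j$ equals some $p_{i_k}$ (distinct students occupy distinct slots, since $\mu_0$ is a valid matching), so $p \succeq_{q_j} p_{i'}$, which kills condition~(1) of the CAW strongly-blocking definition; and if $|\mu(q_j)| < c_{q_j}$ then some slot $q'_{jk}$ is unmatched in $\mu_0$, so $p_{i_k} = \varnothing$ and $\varnothing \succeq_{q_j} p_{i'}$, which kills condition~(2). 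In all cases $(p_{i'}, q_j)$ fails to be strongly blocking, so $\mu$ is weakly stable.

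The only real subtlety — the main obstacle — is matching the single SMIW pair $(p'_{i'}, q'_{jk})$ to the right slot $k$ so as to certify each of the two disjuncts of the CAW definition: for the ``displace an existing student'' disjunct one must use the slot that actually holds the student being displaced, and for the ``under capacity'' disjunct one must use a vacant slot, which exists precisely because $|\mu(q_j)| < c_{q_j}$. Everything else is a routine unwinding of the preference correspondences together with Lemmas~\ref{lem:caw-ir} and~\ref{lem:smiw-stable}; in particular, the ``same utility function per college'' requirement in line~\ref{line:caw-smiw} plays no role here, since all slots of a college are already preference-equivalent on both sides.
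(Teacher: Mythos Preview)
Your proposal is correct and follows essentially the same approach as the paper's own proof: both set up the slot-by-slot notation $q'_{j'k'}=\mu_0(p'_{i'})$ and $p'_{i_k}=\mu_0(q'_{jk})$, invoke Lemma~\ref{lem:smiw-stable} for each slot, and perform the identical case split on whether the man-side disjunct holds for some slot or the woman-side disjunct holds for all slots. Your added remarks about distinct students occupying distinct slots and about which slot to use for each CAW disjunct are accurate elaborations of steps the paper leaves implicit.
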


\begin{lemma} \label{lem:caw-dominate}
Suppose that for every college $q$ in $Q$, the group preference
relation $\succeq_q'$ is minimally responsive to $\succeq_q$.
Let $\mu$ be the matching produced by Algorithm~\ref{alg:caw}
and let $\mu'$ be a matching such that $\mu' \succeq \mu$.
Then $\mu \succeq \mu'$.

\begin{proof}
Since $\mu'$ is a matching that satisfies the capacity constraints,
we can construct an SMIW matching
$\mu_0' \colon P' \to Q' \cup \{q'_{0 0}\}$
such that for all $1 \leq i \leq \abs{P}$ and $0 \leq j \leq \abs{Q}$,
we have $\mu'(p_i) = q_j$ if and only if $\mu_0(p'_i) = q'_{j k}$
for some $k$.

Since $\mu' \succeq \mu$, we have $\mu'(p_i) \succeq_{p_i} \mu(p_i)$
for every $1 \leq i \leq \abs{P}$
and $\mu'(q_j) \succeq_{q_j}' \mu(q_j)$
for every $1 \leq j \leq \abs{Q}$.
Thus $\mu_0'(p'_i) \succeq_{p'_i} \mu_0(p'_i)$
for every $1 \leq i \leq \abs{P}$ and
\begin{equation*}
\sum_{1 \leq k \leq c_{q_j}} \psi_{q'_{j k}}(\mu_0'(q'_{j k}))
\geq \sum_{1 \leq k \leq c_{q_j}} \psi_{q'_{j k}}(\mu_0(q'_{j k}))
\end{equation*}
for every $1 \leq j \leq \abs{Q}$.
Hence, by Lemma~\ref{lem:smiw-pareto}, we have
$\mu_0(p'_i) \succeq_{p'_i} \mu_0'(p'_i)$
for every $1 \leq i \leq \abs{P}$ and 
\begin{equation*}
\sum_{1 \leq j \leq \abs{Q}} \sum_{1 \leq k \leq c_{q_j}}
\psi_{q'_{j k}}(\mu_0'(q'_{j k}))
= \sum_{1 \leq j \leq \abs{Q}} \sum_{1 \leq k \leq c_{q_j}}
\psi_{q'_{j k}}(\mu_0(q'_{j k})).
\end{equation*}
Therefore, we have $\mu(p_i) \succeq_{p_i} \mu'(p_i)$
for every $1 \leq i \leq \abs{P}$ and
\begin{equation*}
\sum_{1 \leq k \leq c_{q_j}} \psi_{q'_{j k}}(\mu_0'(q'_{j k}))
= \sum_{1 \leq k \leq c_{q_j}} \psi_{q'_{j k}}(\mu_0(q'_{j k}))
\end{equation*}
for every $1 \leq j \leq \abs{Q}$.
We conclude that $\mu(q_j) \succeq_{q_j}' \mu'(q_j)$
for every $1 \leq j \leq \abs{Q}$.
Thus $\mu \succeq \mu'$.
\end{proof}
\end{lemma}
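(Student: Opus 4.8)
The plan is to deduce this claim (which is exactly Pareto-optimality of $\mu$ for CAW) from its SMIW counterpart, Lemma~\ref{lem:smiw-pareto}, applied to the auxiliary SMIW instance $(P', Q', (\succeq_{p'})_{p' \in P'}, (\succeq_{q'})_{q' \in Q'})$ that Algorithm~\ref{alg:caw} passes to Algorithm~\ref{alg:smiw}; write $\mu_0$ for the SMIW matching produced at line~\ref{line:caw-smiw}. First I would lift the given matching $\mu'$ to an SMIW matching $\mu_0'$ of this instance: for each student $p_i$ with $\mu'(p_i) = q_j$, assign the man $p'_i$ to one of the $c_{q_j}$ slot-women $q'_{j 1}, \ldots, q'_{j c_{q_j}}$ of $q_j$ (which slot is chosen does not matter). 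Because $\mu'$ respects every capacity $c_{q_j}$ and there are exactly $c_{q_j}$ slots, $\mu_0'$ is a valid SMIW matching; and because Algorithm~\ref{alg:caw} forces all slot-women of a college to share one utility function, the college-side utility totals are independent of the slot choices.

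Next I would verify the two hypotheses of Lemma~\ref{lem:smiw-pareto} for $\mu_0$ and $\mu_0'$. The student-side hypothesis $\mu_0'(p'_i) \succeq_{p'_i} \mu_0(p'_i)$ is immediate from $\mu'(p_i) \succeq_{p_i} \mu(p_i)$ together with the way $\succeq_{p'_i}$ is defined from $\succeq_{p_i}$. For the aggregate-utility hypothesis I need: for every college $q_j$, the relation $\mu'(q_j) \succeq_{q_j}' \mu(q_j)$ implies $\sum_{p \in \mu'(q_j)} \psi_{q_j}(p) \geq \sum_{p \in \mu(q_j)} \psi_{q_j}(p)$, where $\psi_{q_j}$ is the common utility function of the slot-women of $q_j$; summing over the colleges then gives $\sum_{q' \in Q'} \psi_{q'}(\mu_0'(q')) \geq \sum_{q' \in Q'} \psi_{q'}(\mu_0(q'))$. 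This is the one place that minimal responsiveness is essential: any relation $S \succeq_q' S'$ in a minimally responsive $\succeq_q'$ can be realized by a sequence of elementary responsiveness moves, each of which swaps some element $p$ of the current set for an element $p'$ with $p \succeq_q p'$, or deletes some element $p$ with $p \succeq_q \varnothing$, or inserts some element $p$ with $\varnothing \succeq_q p$; every such move weakly decreases $\sum_p \psi_q(p)$ for any utility function $\psi_q$ representing $\succeq_q$ with $\psi_q(\varnothing) = 0$, which yields the desired inequality.

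Granting these hypotheses, Lemma~\ref{lem:smiw-pareto} yields $\mu_0(p'_i) \succeq_{p'_i} \mu_0'(p'_i)$ for every man $p'_i$ --- hence $\mu(p_i) \succeq_{p_i} \mu'(p_i)$ for every student $p_i$ --- and also the equality $\sum_{q' \in Q'} \psi_{q'}(\mu_0'(q')) = \sum_{q' \in Q'} \psi_{q'}(\mu_0(q'))$. Combining this global equality with the per-college inequalities already established forces each per-college inequality to be an equality, i.e., $\sum_{p \in \mu'(q_j)} \psi_{q_j}(p) = \sum_{p \in \mu(q_j)} \psi_{q_j}(p)$ for every $q_j$. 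To pass from this to $\mu(q_j) \succeq_{q_j}' \mu'(q_j)$ I would use a second property of minimal responsiveness: if $S \succeq_q' S'$ and $\sum_{p \in S} \psi_q(p) = \sum_{p \in S'} \psi_q(p)$ for the utility representation, then $S' \succeq_q' S$ as well --- because, under equality of totals, every move in the realizing sequence must preserve $\sum_p \psi_q(p)$, so it is an indifference swap ($p \succeq_q p'$ and $p' \succeq_q p$) or the insertion or deletion of an element $p$ with $p \succeq_q \varnothing$ and $\varnothing \succeq_q p$, and each such move is reversible under responsiveness. Applying this with $S = \mu'(q_j)$ and $S' = \mu(q_j)$ gives the college-side comparison, and together with the student-side comparison this is exactly $\mu \succeq \mu'$.

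The main obstacle is the interface between the group preference $\succeq_q'$ --- which is pinned down only up to minimal responsiveness and so cannot be read off from any single cardinal utility representation --- and the scalar utility sums $\sum_p \psi_q(p)$ that the SMIW machinery and Lemma~\ref{lem:smiw-pareto} manipulate. The entire argument rests on the two elementary facts about minimally responsive preferences highlighted above, namely monotonicity of utility sums along $\succeq_q'$ and reversibility when those sums are equal, and I would be careful to state and prove these cleanly --- ideally as a self-contained lemma about minimal responsiveness, obtained by decomposing any $\succeq_q'$-comparison into elementary responsiveness moves. Once those are in place, the remainder is a routine translation between the CAW and SMIW instances plus a direct appeal to Lemma~\ref{lem:smiw-pareto}.
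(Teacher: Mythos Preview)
Your proposal is correct and follows essentially the same approach as the paper: lift $\mu'$ to an SMIW matching $\mu_0'$, verify the hypotheses of Lemma~\ref{lem:smiw-pareto} (student-side weak preference and aggregate utility inequality) via minimal responsiveness, apply that lemma, and then combine the resulting global utility equality with the per-college inequalities to recover $\mu \succeq \mu'$. You are more explicit than the paper about \emph{why} minimal responsiveness yields the two needed facts---monotonicity of $\sum_p \psi_q(p)$ along $\succeq_q'$ and reversibility when the sums coincide---which the paper's proof simply asserts without justification; your decomposition into elementary responsiveness moves is the right way to fill that gap.
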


\begin{theorem} \label{thm:caw} Suppose that for every college $q$ in
  $Q$, the group preference relation $\succeq_q'$ is minimally
  responsive to $\succeq_q$.  Algorithm~\ref{alg:caw} is a
  strategyproof Pareto-stable mechanism for the college admissions
  problem with weak preferences (for any fixed choice of utility
  assignment).

\begin{proof}
  We have shown in Lemma~\ref{lem:caw-stable} that Algorithm~\ref{alg:caw}
  produces a weakly stable matching. Moreover,
  Lemma~\ref{lem:caw-dominate} shows that the weakly stable matching
  produced is not Pareto-dominated by any other matching. Hence,
  Algorithm~\ref{alg:caw} produces a Pareto-stable matching.

  To show that Algorithm~\ref{alg:caw} provides a strategyproof mechanism,
  suppose student $p_i$ expresses $\succeq_{p_i}'$ instead of their
  true preference relation $\succeq_{p_i}$, where $1 \leq i \leq
  \abs{P}$.  Let $\mu$ and $\mu'$ be the matchings produced by Algorithm~\ref{alg:caw} given
  $\succeq_{p_i}$ and $\succeq_{p_i}'$, respectively.  Let $\mu_0$ and
  $\mu_0'$ be the SMIW matching produced
  by the call to Algorithm~\ref{alg:smiw} (line \ref{line:caw-smiw} of Algorithm~\ref{alg:caw}) given
  $\succeq_{p_i}$ and $\succeq_{p_i}'$, respectively.

Notice that in Algorithm~\ref{alg:caw}, the only part
of the stable marriage instance that depends on the preferences of
student $p_i$ is the preference relation corresponding to man $p'_i$.
Since Algorithm~\ref{alg:smiw} is strategyproof by 
Theorem~\ref{thm:smiw}, we have $\mu_0(p'_i) \succeq_{p'_i} \mu_0'(p'_i)$
where $\succeq_{p'_i}$ is the preference relation of man $p'_i$
in the algorithm given $\succeq_{p_i}$.
Hence, $\mu(p_i) \succeq_{p_i} \mu'(p_i)$.
\end{proof}
\end{theorem}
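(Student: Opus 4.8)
The plan is to handle the two halves of the statement separately: Pareto-stability by assembling the lemmas already established for Algorithm~\ref{alg:caw}, and strategyproofness by transferring the strategyproofness of Algorithm~\ref{alg:smiw} (Theorem~\ref{thm:smiw}) across the reduction encoded in Algorithm~\ref{alg:caw}. For the Pareto-stability half, I would first invoke Lemma~\ref{lem:caw-stable} to obtain that the output matching $\mu$ is individually rational and admits no strongly blocking pair, i.e., is weakly stable. Then, using the minimal-responsiveness hypothesis, Lemma~\ref{lem:caw-dominate} shows that any matching $\mu'$ with $\mu' \succeq \mu$ also satisfies $\mu \succeq \mu'$; in particular no $\mu'$ strictly Pareto-dominates $\mu$, so $\mu$ is Pareto-optimal. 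Combining the two observations yields that $\mu$ is Pareto-stable. (It is worth noting that minimal responsiveness enters only in this half, through Lemma~\ref{lem:caw-dominate}, and is not needed for strategyproofness.)

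For the strategyproofness half, fix a student $p_i$ and suppose they report $\succeq_{p_i}'$ in place of their true preference $\succeq_{p_i}$, with all other students' preferences and all colleges' preferences and capacities held fixed. The key structural point I would establish is that the SMIW instance passed to Algorithm~\ref{alg:smiw} in line~\ref{line:caw-smiw} depends on $p_i$'s reported preference only through the preference relation $\succeq_{p'_i}$ of the corresponding man $p'_i$: the man set $P'$, the college-slot woman set $Q'$, every other man's preference, every woman's preference, and --- since the utility assignment is required not to depend on the men's preferences --- every woman's utility function, are all unchanged. Moreover, the student-to-man translation in Algorithm~\ref{alg:caw} is order-preserving, $q'_{jk} \succeq_{p'_i} q'_{j'k'}$ iff $q_j \succeq_{p_i} q_{j'}$, with $q'_{00}$ playing the role of $\varnothing$ on both sides. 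Letting $\mu_0$ and $\mu_0'$ be the SMIW matchings returned under $\succeq_{p_i}$ and $\succeq_{p_i}'$ respectively, Theorem~\ref{thm:smiw} applied to this SMIW instance --- with the perturbation confined to man $p'_i$ and all other inputs fixed --- gives $\mu_0(p'_i) \succeq_{p'_i} \mu_0'(p'_i)$, where $\succeq_{p'_i}$ is man $p'_i$'s true preference. Translating back through line~\ref{line:caw-return}, using that $\mu(p_i) = q_j$ iff $\mu_0(p'_i) = q'_{jk}$ for some $k$ (and likewise for $\mu'$), together with the order-preservation of the translation, yields $\mu(p_i) \succeq_{p_i} \mu'(p_i)$, which is exactly what strategyproofness for the students requires.

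The main obstacle, such as it is, is the bookkeeping behind the phrase ``all other inputs are fixed'': one has to check carefully that perturbing $p_i$'s preference leaves the item set, the other multibidders, and the women's utility functions untouched, so that Theorem~\ref{thm:smiw} can be applied verbatim with the change localized to man $p'_i$; and one has to verify that the preference translation is a genuine order isomorphism (respecting ties, incomparabilities, and the unmatched option) so that the SMIW guarantee pulls back to the student's CAW preference. Neither step is deep, but both must be spelled out to make the transfer rigorous. (The additional requirement in line~\ref{line:caw-smiw} that slots of a common college share a utility function is used by Lemma~\ref{lem:caw-dominate} for Pareto-optimality, not by this argument.) Once these routine checks are in place, the strategyproofness conclusion is a direct consequence of Theorem~\ref{thm:smiw}, and combining it with the Pareto-stability half completes the proof.
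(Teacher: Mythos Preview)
Your proposal is correct and follows essentially the same approach as the paper's proof: invoke Lemma~\ref{lem:caw-stable} and Lemma~\ref{lem:caw-dominate} for Pareto-stability, then obtain strategyproofness by observing that only man $p'_i$'s preference depends on student $p_i$'s report and applying Theorem~\ref{thm:smiw} to the SMIW instance, translating the conclusion back via the order-preserving correspondence in line~\ref{line:caw-return}. Your write-up is somewhat more explicit about the bookkeeping (fixedness of the other inputs, order-isomorphism of the translation), but the structure and the key steps are the same.
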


We remark that our algorithm admits an $O(n^4)$-time
implementation, where $n$ is the sum of the number of students and the
total capacities of all the colleges, because the reduction from CAW
to IUAP takes $O(n^2)$ time, and 
lines~\ref{line:smiw-uap} and~\ref{line:smiw-greedy}
of Algorithm~\ref{alg:smiw}
can be implemented in $O(n^4)$ time using the version of the
incremental Hungarian method discussed in Sections~\ref{sec:inc-hungarian-step} and~\ref{sec:unfold-iuap}.

\subsection{Further Discussion}
\label{sec:caw-further-disc}

In our SMIW and CAW algorithms, we transform the preference relations
of the women and colleges into real-valued utility functions.  One way
to do this is to take
\begin{equation*}
\psi_q(p) = \abs{ \{p' \in P \cup \{\varnothing\}  \colon p \succeq_q p'\} }
- \abs{ \{p' \in P \cup \{\varnothing\}  \colon \varnothing \succeq_q p'\} }.
\end{equation*}
This is by no means the
only way. In fact, different ways of assigning the utilities can
affect the outcome. Nonetheless, our mechanisms remain strategyproof
for the men as long as the utility assignment is fixed and independent
of the preferences of the men, as shown in Theorems~\ref{thm:smiw}
and~\ref{thm:caw}.

We can also consider the scenario where each college expresses their
preferences directly in terms of a utility function instead of a
preference relation.  Such utility functions provide another way
to extend preferences over individuals to group preferences.  If a
college $q$ expresses the utility function $\psi_q$ over individual
students in $P \cup \{ \varnothing \}$, we can define the \emph{group
  preference induced by additive utility $\psi_q$} as a binary
relation $\succeq_q'$ over $2^P$ such that
$S \succeq_q' S'$ if and only if
\begin{equation*}
\sum_{p \in S} \psi_q(p)
\geq \sum_{p \in S'} \psi_q(p).
\end{equation*}
Our algorithm can accept such utility functions as input in lieu
of constructing them by some utility assignment method.  It is not
hard to see that the mechanism remains Pareto-stable and strategyproof
when the group preferences of the colleges are induced by additive
utilities.

% Bibliography
\bibliography{refs}

\appendix

\section{Discussion of the Two-Phase Approach} \label{app:phase}

As discussed in Section~\ref{sec:intro}, Erdil and
Ergin~\cite{erdil+e:match} present a polynomial-time algorithm for
computing a Pareto-stable matching for a given instance of SMCW (and
also its generalizations to SMIW and CAW).  Their algorithm uses a
two-phase approach: in the first phase, ties are broken arbitrarily,
and the Gale-Shapley DA algorithm is used to obtain a weakly stable
matching; in the second phase, the matching is repeatedly updated via
a sequence of Pareto improvements until no such improvement is
possible.  This two-phase framework was previously proposed by
Sotomayor~\cite{sotomayor:pareto-stable}, who argued its correctness by
observing that when we apply a Pareto improvement to a weakly stable
matching, we obtain another weakly stable matching.

It is natural to ask whether there is a strategyproof Pareto-stable
mechanism for SMCW based on the foregoing two-phase approach.  More
precisely, suppose the men and women are indexed from $1$ to $n$, and
assume that we break ties in the first phase in favor of
higher-indexed agents.  Is there a way to implement the second phase
so that the resulting two-phase algorithm corresponds to a
strategyproof Pareto-stable mechanism?  The example presented below
provides a negative answer to this question.

Consider an SMCW instance $I$ with men $\{p_1,p_2,p_3\}$ and women
$\{q_1,q_2,q_3\}$, and where the preferences of the agents are as
follows: $p_1$ prefers $q_2$, then $q_3$, then $q_1$; $p_2$ prefers
$q_1$, then $q_3$, then $q_2$; $p_3$ is indifferent between $q_1$ and
$q_2$, and prefers $q_1$ and $q_2$ to $q_3$; $q_1$ prefers $p_3$, then
$p_1$, then $p_2$; $q_2$ is indifferent between all of the men; $q_3$
prefers $p_3$, then $p_2$, then $p_1$.  Let $M_1$ through $M_6$ denote
the six possible matchings: $M_1=\{(p_1,q_1),(p_2,q_2),(p_3,q_3)\}$;
$M_2=\{(p_1,q_1),(p_2,q_3),(p_3,q_2)\}$;
$M_3=\{(p_1,q_2),(p_2,q_1),(p_3,q_3)\}$;
$M_4=\{(p_1,q_2),(p_2,q_3),(p_3,q_1)\}$;
$M_5=\{(p_1,q_3),(p_2,q_1),\Formatting{\allowbreak} (p_3,q_2)\}$;
$M_6=\{(p_1,q_3),(p_2,q_2),(p_3,q_1)\}$.  It is easy to verify that
$\{M_2,M_4,M_5\}$ is the set of weakly stable matchings for $I$.
(Matchings $M_1$ and $M_3$ are blocked by $(p_3,q_1)$, and matching
$M_6$ is blocked by $(p_2,q_3)$.)  Furthermore, the set of
Pareto-stable matchings for $I$ is $\{M_4,M_5\}$.  (Matching $M_2$ is
Pareto-dominated by matching $M_4$.)  If we break ties in favor of the
agents with higher indices, then it is easy to verify that the first
phase produces matching $M_5$.  Since $M_5$ is Pareto-stable, the
second phase does not update the matching, and hence $M_5$ is the
final output.

Now suppose man $p_1$ lies by stating that he prefers $q_2$, then
$q_1$, then $q_3$, and let $I'$ denote the resulting SMCW instance.
It is easy to verify that $\{M_2,M_4\}$ is the set of weakly stable
matchings for $I'$.  (Matchings $M_1$ and $M_3$ are blocked by
$(p_3,q_1)$, matching $M_5$ is blocked by $(p_1,q_1)$, and matching
$M_6$ is blocked by $(p_2,q_3)$.)  Furthermore, the set of
Pareto-stable matchings for $I'$ is $\{M_4\}$.  (Matching $M_2$ is
Pareto-dominated by matching $M_4$.)  Thus $M_4$ is the only possible
output of the second phase.  Since man $p_1$ prefers his match under
$M_4$ to his match under $M_5$, strategyproofness is violated.

% \section{Deleted Stuff}

% \input{deleted}

%  LocalWords:  Indifferences MWMCM UAP MCM MWMCMs nonnegative UAPs
%  LocalWords:  multibidder
%  LocalWords:  Multibidders
%  LocalWords:  IUAP IUAPs
%  LocalWords:  subgraph vertices

%%% Local Variables:
%%% mode: latex
%%% TeX-master: "tr"
%%% End:

\end{document}